\newcommand{\commenteq}[1]{\hspace{2em} [\mbox{#1}]}
\newcommand{\suchthat}{:}
\mathchardef\gr="213E
\mathchardef\ls="213C
\newcommand{\Dist}{\mathrm{Distr}}
\newcommand{\SubDist}{\mathrm{SubDistr}}
\newcommand{\support}{\mathrm{support}}
\newcommand{\lifting}[1]{\mathord{#1\!\!\uparrow}}
\newcommand{\nat}{\mathbb{N}}
\newcommand{\integer}{\mathbb{Z}}
\newcommand{\local}{\mathit{local}}
\newcommand{\nxt}{\mathit{next}}
\newcommand{\M}{\mathcal{M}}
\newcommand{\Hyp}{\mathcal{M}'}
\newcommand{\errorParam}{\epsilon_2}
\newcommand{\tauHyp}{\tau_{\epsilon}}
\newcommand{\R}{\mathcal{R}}
\newcommand{\Q}{\mathcal{Q}}
\definecolor{OliveGreen}{rgb}{0,0.6,0}
 \newcommand{\modify}[1]{{\color{black}#1}}
 \title{Approximate Bisimulation Minimisation} 
\author{Stefan Kiefer}{Department of Computer Science, University of Oxford, UK}{stekie@cs.ox.ac.uk}{https://orcid.org/0000-0003-4173-6877}{Supported by a Royal Society University Fellowship.}
\author{Qiyi Tang}{Department of Computer Science, University of Oxford, UK}{qiyi.tang@cs.ox.ac.uk}{https://orcid.org/0000-0002-9265-3011}{}
\authorrunning{S.\ Kiefer and Q.\ Tang}
 \keywords{Markov chains, Behavioural metrics, Bisimulation} 
\begin{document}\sloppy
	
\maketitle
	
\begin{abstract}
We propose polynomial-time algorithms to minimise labelled Markov chains whose transition probabilities are not known exactly, have been perturbed, or can only be obtained by sampling.
Our algorithms are based on a new notion of an approximate bisimulation quotient, obtained by lumping together states that are exactly bisimilar in a slightly perturbed system.
We present experiments that show that our algorithms are able to recover the structure of the bisimulation quotient of the unperturbed system.
\end{abstract}
	
\section{Introduction}

For the algorithmic analysis and verification of system models, computing the bisimulation quotient is a natural preprocessing step: it can make the system much smaller while preserving most properties of interest.
This applies equally to probabilistic systems: probabilistic model checkers, e.g., Storm~\cite{Storm20}, speed up the verification process by ``lumping'' together states that are equivalent with respect to probabilistic bisimulation.
While this is a safe approach, it may not be effective when the probabilities in the system are not known precisely.
\begin{figure}[ht]
	\centering
	\tikzstyle{BoxStyle} = [draw, circle, fill=black, scale=0.4,minimum width = 1pt, minimum height = 1pt]
	
	\begin{tikzpicture}[scale=.6,>=latex',shorten >=1pt,node distance=3cm,on grid,auto]
	
	
	\node[state] (s) at (-1,0) {$s_1$};
	\node[state, fill=green] (s2) at (4,0) {$s_2$};
	
	\node[state] (t) at (9,0) {$t_1$};
	\node[state, fill=green] (t2) at (14,0) {$t_2$};
	
	
	\path[->] (s) edge [out=30,in=150] node [midway, above] {$\frac{1}{2}$} (s2);
	\path[->] (s2) edge [out=210,in=-30] node [midway, below] {$\frac{1}{2}$} (s);
	
	\path[->] (t) edge [out=30,in=150] node [midway, above] {$\frac{1}{2}-\epsilon$} (t2);
	\path[->] (t2) edge  [out=210,in=-30]  node [midway, below] {$\frac{1}{2}-\epsilon$} (t);
	
	
	\path[->] (s) edge [loop above]node [midway, above] {$\frac{1}{2}$} (s);
	\path[->] (s2) edge [loop above] node [midway, above] {$\frac{1}{2}$} (s2);
	
	\path[->] (t) edge [loop above] node [midway, above] {$\frac{1}{2}+\epsilon$} (t);
	\path[->] (t2) edge [loop above] node [midway, above] {$\frac{1}{2}+\epsilon$} (t2);
	
	
	\end{tikzpicture}
\caption{Two intuitively similar labelled Markov chains.}
\label{fig:intro3}
\end{figure}
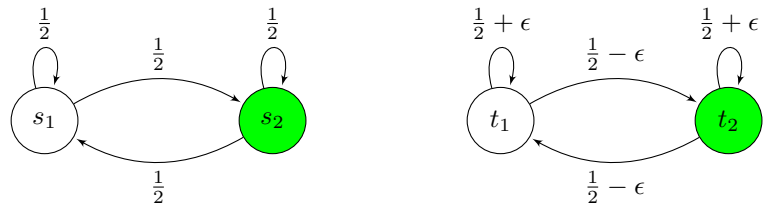
For example, in the labelled Markov chain shown in \cref{fig:intro3} the states $s_1, t_1$ are intuitively ``similar'', but they are not probabilistically bisimilar even though they carry the same label (here indicated with the colour white) and they both lead, with similar probabilities, to states $s_2, t_2$, which are again intuitively ``similar'' but not probabilistically bisimilar.

For analysis and verification of a probabilistic system, tackling state space explosion is key.
The more symmetry a system has (e.g., due to variables that do not influence the observable behaviour of the model), the greater are the benefits of computing a \emph{quotient} system with respect to probabilistic bisimulation.
However, when the transition probabilities in the Markov chain are perturbed or not known exactly, bisimulation minimisation may fail to capture the symmetries in the system and thus fail to achieve the objective of making the probabilistic system amenable to algorithmic analysis.
The motivation of this paper is to counter this problem.

A principled approach towards this goal may consider notions of \emph{distance} between probabilistic systems or between states in a probabilistic system.
A \emph{probabilistic bisimilarity pseudometric}, based on the Kantorovich metric, was introduced in \cite{DGJP1999,DesharnaisGJP04}, which assigns to each pair of states $s,t$ a number in the interval $[0,1]$ measuring a distance between $s,t$: distance~$0$ means probabilistic bisimilar, and distance~$1$ means, in a sense, maximally non-bisimilar.
This pseudometric can be computed in polynomial time \cite{ChenBW12}, and, quoting \cite{ChenBW12}, has ``been studied in the context of systems biology, games, planning and security, among others''.
The corresponding distances can be intuitively large though: the pseudometric yields a distance less than~$1$ only if the two states can reach, with the same label sequence, two states that are \emph{exactly} bisimilar; see \cite[Section~4]{TangVanBreugel2018}.
As a consequence, for any $\epsilon \gr 0$, the states $s_1, t_1$ in \cref{fig:intro3} have distance~$1$ in the probabilistic bisimilarity pseudometric of \cite{DesharnaisGJP04} (in the undiscounted version).
From a slightly different point of view, a small perturbation of the transition probabilities in the model can change the distance between two states from $0$ to~$1$.

Another pseudometric, \emph{$\epsilon$-bisimulation~$\mathord{\sim_{\epsilon}}$}, was defined in \cite{DesharnaisLavoletteTracol2008}, which avoids this issue.
It has natural characterisations in terms of games and can be computed in polynomial time using network-flow based algorithms \cite{DesharnaisLavoletteTracol2008}.
The runtime of the algorithm from \cite{DesharnaisLavoletteTracol2008} is $O(|S|^7)$, where $S$ is the set of states, thus not practical for large systems.
A more fundamental problem lies in the fact that $\epsilon$-bisimulation is not an equivalence: $s \sim_\epsilon t \sim_\epsilon u$ implies $s \sim_{2 \epsilon} u$ (by the triangle inequality) but not necessarily $s \sim_{\epsilon} u$.
Therefore, efficient minimisation algorithms via quotienting (such as partition refinement for exact probabilistic bisimilarity) are not available for $\epsilon$-bisimulation.

In this paper we develop algorithms that, given a labelled Markov chain~$\M$ with possibly imprecise transition probabilities and a \emph{slightly perturbed} version, say $\Hyp$, of~$\M$, compute a compressed version, $\Q$, of~$\Hyp$.
By slightly perturbed we mean that for each state the successor distributions in $\Hyp$ and~$\M$ have small (say, less than~$\epsilon$) $L_1$-distance.
We hope that $\Q$ is not much bigger than the exact quotient of~$\M$, and we design polynomial-time algorithms that fulfill this hope in practice, but we do not insist on computing the smallest possible~$\Q$.
Indeed, we show that, given an LMC, $\epsilon \gr 0$ and a positive integer $k$, it is {\sf NP}-complete to decide whether there exists a perturbation of at most~$\epsilon$ such that the (exact) bisimulation quotient of the perturbed system is of size $k$.
See, e.g., \cite{BacciBLM18} for an exact but non-polynomial approach, where the target number~$n$ of states is fixed, and a Markov chain with at most~$n$ states is sought that has minimal distance (with respect to the probabilistic bisimilarity pseudometric of \cite{DesharnaisGJP04}) to the given model.

It is not hard to prove (\cref{proposition:approximate-quotient-implies-approximate-bisimulation}) that if an LMC can be made exactly bisimilar to another LMC by a perturbation of at most~$\epsilon$, then these two LMCs are also $\frac{\epsilon}{2}$-bisimilar in the sense of \cite{DesharnaisLavoletteTracol2008}.
If, in turn, two states are $\epsilon$-bisimilar, one can show (see \cite[Theorem~4]{BianA17}) that any linear-time property that depends only on the first $k$ labels has similar probabilities in the two states, where similar means the difference in probabilities is at most $1 - (1-\epsilon)^k$. 
Combining these two results, we obtain a continuity property that says if two LMCs can be made bisimilar by a small perturbation, then any $k$-bounded linear-time property has similar probabilities in the two LMCs.
In other words, our approximate minimisation approximately preserves the probability of bounded linear-time properties.

We apply our approximate minimisation algorithms in a setting of active learning.
Here we assume we do not have access to the transition probabilities of the model; rather, for each state we only \emph{sample} the successor distribution.
Sampling gives us an approximation of the real Markov chain, and our approximate minimisation algorithms apply naturally.
This allows us to lump states that are exactly bisimilar in the real model, but only approximately bisimilar in the sampled model.
We give examples where in this way we \emph{recover} the structure (not the precise transition probabilities) of the quotient of the exact model, knowing only the sampled model.

The rest of the paper is organised as follows. In \cref{section:preliminaries} we introduce $\epsilon$-quotient, a new notion of approximate bisimulation quotient. In \cref{section:approximate-quotient-properties} given an LMC, $\epsilon$ and $k \gr 0$, we show it is ${\sf NP}$-complete to decide whether there exists an $\epsilon$-quotient of size $k$. In \cref{section:minimisation-algorithms} we present our approximate minimisation algorithms.
We put them in a context of active learning in \cref{section:active-LMC-learning}.
In \cref{section:experiments} we evaluate these algorithms on slightly perturbed versions of a number of LMCs taken from the probabilistic model checker PRISM \cite{KNP11}. We conclude in \cref{section:conclusion}. 
\section{Preliminaries}
\label{section:preliminaries}

We write $\nat$ for the set of nonnegative integers and $\integer^{+}$ for the set of positive integers. We write $\mathbb{R}$ for the set of real numbers. Let $S$ be a finite set. We denote by $\Dist(S)$ the set of probability distributions on~$S$. By default we view vectors, i.e., elements of $\mathbb{R}^{S}$, as row vectors. For a vector $\mu \in\mathbb{R}^{S}$ we write \modify{$\|\mu\|_1 := \sum_{s \in S} |\mu(s)|$ for its $L_1$-norm. A vector $\mu \in [0, 1]^{S}$ is a distribution (resp. subdistribution) over $S$ if $\|\mu\|_1 = 1$ (resp.  $0 \ls \|\mu\|_1 \le 1$).} For a (sub)distribution $\mu$ we write $\support(\mu) = \{s \in S \suchthat \mu(s) \gr 0 \}$ for its support. 


A partition of the states $S$ is a set $X$ consisting of pairwise disjoint subsets $E$ of $S$ with $\bigcup_{E \in X} = S$. For an equivalence relation $\R \subseteq S \times S$, $S /_{\R}$ denotes its quotient set and $[s]_\R$ denotes the $\R$-equivalence class of $s \in S$.

A \emph{labelled Markov chain} (LMC) is a quadruple $<S, L, \tau, \ell>$ consisting of a nonempty finite set $S$ of states, a nonempty finite set $L$ of labels, a transition function $\tau : S \to \Dist(S)$, and a labelling function $\ell: S \to L$.

We denote by $\tau(s)(t)$ the transition probability from $s$ to $t$. Similarly, we denote by $\tau(s)(E) = \sum_{t \in E} \tau(s)(t)$ the transition probability from $s$ to $E \subseteq S$.
For the remainder of the paper,  we fix an LMC $\M = <S,  L, \tau, \ell>$. Let $|\M|$ denote the number of states, $|S|$.

The direct sum $\M_1 \oplus \M_2$ of two LMCs $\M_1 = <S_1, L_1, \tau_1, \ell_1>$ and $\M_2 = <S_2, L_2, \tau_2, \ell_2>$ is the LMC formed by combining the state spaces of $\M_1$ and $\M_2$.

An equivalence relation $\R \subseteq S \times S$ is a \emph{probabilistic bisimulation} if for all $(s, t) \in \R$, $\ell(s) = \ell(t)$ and
$\tau(s)(E) = \tau(t)(E)$ for each $\R$-equivalence class $E$. \emph{Probabilistic bisimilarity}, denoted by $\mathord{\sim_{\M}}$ (or $\mathord{\sim}$ when $\M$ is clear), is the largest probabilistic bisimulation.

Any probabilistic bisimulation $\R$ on $\M$ induces a quotient LMC denoted by $\M/_{\R} = <S/_{\R}, L, \tau/_{\R}, \ell/_{\R}>$ where the transition function $\tau/_{\R}([s]_{\R})([t]_{\R}) = \tau(s)([t]_{\R})$ and the labelling function $\ell/_{\R}([s]_\R) = \ell(s)$.

We define the notion of an \emph{approximate quotient}.  Let $\epsilon \ge 0$. An LMC $\Q$ is an \emph{$\epsilon$-quotient} of $\M$ if and only if there is transition function $\tau': S \to \Dist(S)$ such that for all $s \in S$ we have $\|\tau'(s) - \tau(s)\|_1 \le \epsilon$ and $\Q$ is the (exact) bisimulation quotient of the LMC $\M' = <S, L, \tau',\ell>$, that is, $\Q = \M'/_{\sim_{\M'}} $. Since the choice of $\tau'$ is not unique, there might be multiple $\epsilon$-quotients of $\M$. We are interested in the problem of obtaining an $\epsilon$-quotient of $\M$ with small state space. We retrieve the notion of (exact) quotient when $\epsilon = 0$. For $s$ from $\M$, denote the state in $\Q$ which corresponds to $s$ by $[s]^{\epsilon}_{\Q}$ (or $[s]^{\epsilon}$ when $\Q$ is clear).

The set $\Omega(\mu, \nu)$ of \emph{couplings} of $\mu,\nu \in \Dist(S)$ is defined as $\Omega(\mu, \nu)=\left \{ \, \omega \in \Dist(S \times S) \suchthat \sum_{t \in S} \omega(s, t) = \mu(s) \wedge \sum_{s \in S} \omega(s, t) = \nu(t) \, \right \}$. Note that a coupling $\omega \in \Omega$ is a joint probability distribution with marginals $\mu$ and $\nu$ (see, e.g., \cite[page 260-262]{billingsley1995}).



	The \emph{$\epsilon$-lifting} of a relation $\R\subseteq S \times S$ proposed by Tracol et al.~\cite{TracolDesharnaisZhioua2011} is the relation $\lifting{\R}_{\epsilon} \subseteq \Dist(S) \times \Dist(S)$ defined by $(\mu, \nu) \in \lifting{\R}_{\epsilon}$ if there exists $\omega \in \Omega(\mu, \nu)$ such that $\sum_{ (u,v) \in \R} \omega(u, v) \ge 1 - \epsilon$.

The \emph{$\epsilon$-bisimulation} ($\sim_{\epsilon}$) by Desharnais et al.~\cite{DesharnaisLavoletteTracol2008} is a relation $\R \subseteq S \times S$ in which for all $(s, t) \in \R$ we have  $\ell(s) = \ell(t)$ and $(\tau(s), \tau(t)) \in \lifting{\R}_{\epsilon}$. The $\epsilon$-bisimulation is reflexive and symmetric, but in general not transitive; hence, it is not an equivalence relation.


\section{Properties of Approximate Quotients}
\label{section:approximate-quotient-properties}

\begin{figure}[h]
	\centering
	
	\begin{tikzpicture}[xscale=.6,>=latex',shorten >=1pt,node distance=3cm,on grid,auto]
	
	\node[label] (M) at (0,0) {$\M$};
	\node[label] (MQ) at (8,0) {$\M /_{\sim_{\M}}$};
	\node[label] (Me) at (0,-1) {$\Hyp$};
	\node[label] (MeQ) at (8,-1) {$\Q$};
	
	\path[->] (M) edge node [midway, above] {quotient} (MQ);
	\path[->] (M) edge node [midway, left] {perturbation} (Me);
	\path[->] (Me) edge node [midway, above] {approximate quotient} (MeQ);		
	\path[dashed] (MeQ) edge node [midway, right] {$\Q$ is not much bigger than $\M /_{\sim_{\M}}$} (MQ);		
	\end{tikzpicture}
	\caption{Problem setup.}\label{fig:problem-setup}
\end{figure}

Recall from the introduction that we are given an LMC $\Hyp$, which is a slightly perturbed version of an (unknown) LMC $\M$. By slightly perturbed we mean that for each state the successor distributions in $\Hyp$ and~$\M$ have small (say, less than~$\epsilon$) $L_1$-distance. For example, with sampling we can obtain with high probability a perturbed system that has small distance with $\M$. Assume there are many symmetries, that is, lots of probabilistic bisimilar states in $\M$. The state space of $\M$ can then be compressed a lot by (exact) quotienting. Since the transition probabilities are perturbed in $\Hyp$, the states that are probabilistic bisimilar in $\M$ might become inequivalent in $\Hyp$; as a result, the (exact) bisimulation quotient of $\Hyp$ is much larger than that of $\M$. Given a small compression parameter $\errorParam \gr 0$, we aim to compute an approximate quotient $\Q$, an $\epsilon'$-quotient of $\Hyp$ that satisfies two conditions: \modify{(1) $\epsilon'$ should be small, so that little precision is
sacrificed; and (2) the state space of the quotient should be small, to speed up
verification algorithms. Our contribution consists of approximate quotienting algorithms with (a) theoretical guarantees on goal (1) in \cref{theorem:bounding-global-distance} and
\cref{corollary:bounding-quotient-error}, applying to both algorithms: $\epsilon'$ is bounded (and can be controlled) by a compression parameter $\epsilon_2$ and
the number of iterations $i$; (b) empirical results on goal (2): the experiments show that our algorithms produce small quotients.} 

We first show that it is hard to find an $\errorParam$-quotient of $\Hyp$ with minimum number of states: $\Q^{*}= \arg\min\{|\Q| : \Q \text{ is an } \errorParam\text{-quotient of }\Hyp\}.$ If there are several $\errorParam$-quotients of $\Hyp$ of minimum size, $\Q^{*}$ can be taken to be any one of them. Unfortunately, this problem is unlikely to have an efficient (polynomial-time) solution, as we will see from the next theorem that the decision version of this problem is $\sf NP$-complete.

\begin{figure}[h]
	\begin{minipage}{0.45\linewidth}
		\centering
		
		\begin{tikzpicture}[xscale=.6,>=latex',shorten >=1pt,node distance=3cm,on grid,auto]
		
		\node[state] (us) at (3,4) {$s$};
		\node[state] (u1) at (0.5,2.5) {$s_{1}$};
		\node[label] at (3,2.5) {$\cdots$};
		\node[state] (un) at (5.5,2.5) {$s_{n}$};

		\node[state] (qb) at (0.5,1) {$s_a$};
		\node[state, fill=green] (qc) at (5.5,1) {$s_b$};
		
		\path[->] (us) edge node  [midway,above] {$\frac{p_1}{T}$} (u1);
		\path[->] (us) edge node  [midway,above] {$\frac{p_n}{T}$} (un);

		\path[->] (qc) edge  [loop right] node [right] {$1$} (qc);
		\path[->] (qb) edge [loop left] node [left]{$1$} (qb);
		
		\path[->] (u1) edge node  [midway,left] {$\frac{1}{2}$} (qb);
		\path[->] (u1) edge node  [very near start, below,xshift=0.1cm] {$\frac{1}{2}$} (qc);
		\path[->] (un) edge node [very near start, below,xshift=-0.1cm] {$\frac{1}{2}$} (qb);
		\path[->] (un) edge node [midway,right] {$\frac{1}{2}$} (qc);
		
		\end{tikzpicture}
		
	\end{minipage}
	\hspace{0.5cm}
	\begin{minipage}{0.45\linewidth}
		\centering
		
		\begin{tikzpicture}[xscale=.6,>=latex',shorten >=1pt,node distance=3cm,on grid,auto]
		
		\node[state] (t) at (2,4) {$t$};
		
		\node[state] (t1) at (-0.5,2.5) {$t_1$};
		\node[state] (t2) at (4.5,2.5) {$t_2$};
		
		\node[state] (qb) at (-0.5,1) {$t_a$};
		\node[state, fill=green] (qc) at (4.5,1) {$t_b$};
		
		\path[->] (t) edge node [midway, above] {$\frac{N}{T}$} (t1);
		\path[->] (t) edge node [midway, above, xshift=0.25cm] {$1-\frac{N}{T}$} (t2);
		
		\path[-] (t1) edge node [midway, right] {} (qb);
		\path[->] (t1) edge node [near start, left] {$\frac{1}{2} - \epsilon$} (qb);
		\path[->] (t1) edge node [near start, above,xshift=0.05cm,yshift=0.1cm] {$\frac{1}{2} + \epsilon$} (qc);
		
		\path[-] (t2) edge node [midway, right] {} (qc);		
		\path[->] (t2) edge node [near start, above, yshift=0.1cm] {$\frac{1}{2}+ \epsilon$} (qb);
		\path[->] (t2) edge node [near start, right,xshift=0.15cm] {$\frac{1}{2} - \epsilon$} (qc);
		
		\path[->] (qc) edge [loop right] node [midway, right] {$1$} (qc);
		\path[->] (qb) edge [loop left]   node [midway, left]{$1$} (qb);
		
		\end{tikzpicture}
	\end{minipage}
	\caption{The LMC in the reduction for {\sf NP}-hardness. All states have the same label $a$ except $s_b$ and $t_b$ which have label $b$.
	}\label{fig:reductionfromSubsset}
\end{figure}
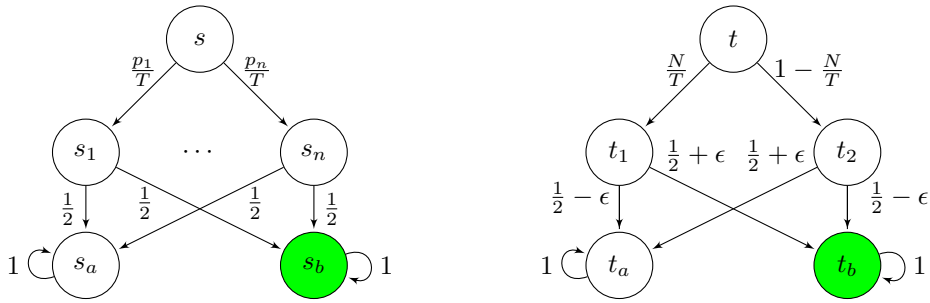

Given an LMC $\Hyp$, a compression parameter $\errorParam \gr 0$ and a constant $k \in \integer^{+}$, it is {\sf NP}-complete to decide whether there exists an $\errorParam$-quotient of $\Hyp$ of size $k$. The hardness result is by reduction from the Subset Sum problem. Given a set $P =\{p_1, \ldots, p_n\}$ and $N \in \nat$, Subset Sum asks whether there exists a set $Q \subseteq P$ such that $\sum_{p_i \in Q} p_i = N$. Given an instance of Subset Sum $<P, N>$ where $P =\{p_1, \ldots, p_n\}$ and $N \in \nat$, we construct an LMC; see \cref{fig:reductionfromSubsset}. Let $T = \sum_{p_i \in P} p_i$, $\epsilon = \frac{1}{2T}$ and $k = 5$. \modify{In the LMC, state $s$ transitions to state $s_i$ with probability $p_i / T$ for all $1 \le  i \le n$. Each state $s_i$ transitions to $s_a$ and $s_b$ with equal probabilities. State $t$ transitions to $t_1$ and $t_2$ with probability $N / T$ and $1 - N / T$, respectively. State $t_1$ (resp. $t_2$) transitions to $t_a$ (resp. $t_b$) and $t_b$ (resp. $t_a$) with probability $\frac{1}{2} - \epsilon$ and $\frac{1}{2} + \epsilon$, respectively. All the remaining states transition to the successor state with probability one. States $s_b$ and $t_b$ have label $b$ and all other states have label $a$.} We can show that $<P, N> \in {\mbox{Subset Sum}} \iff$ there exists an $\frac{1}{2T}$-quotient of $\Hyp$ of size $5$.

\begin{restatable}{theorem}{theoremMinimumApproximateQuotientNPComplete}\label{theorem: minimum-approximate-quotient-NP-complete}
	Given an LMC $\Hyp$, $\errorParam \in (0, 1]$ and $k \in \integer^{+}$. The problem whether there exists an $\errorParam$-quotient of $\Hyp$ of size $k$ is $\sf NP$-complete. It is $\sf NP$-hard even for (fixed) $k=5$.
\end{restatable}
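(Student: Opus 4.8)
The plan is to establish the two halves of \textbf{NP}-completeness separately: membership in \textbf{NP}, and \textbf{NP}-hardness via the Subset Sum reduction sketched above and depicted in \cref{fig:reductionfromSubsset}. Throughout let the input LMC be $<S,L,\tau,\ell>$, and for a candidate perturbation $\tau'$ write $\mathcal{N}:=<S,L,\tau',\ell>$ for the perturbed chain, so that an $\errorParam$-quotient of size $k$ exists iff there is $\tau'$ with $\|\tau'(s)-\tau(s)\|_1\le\errorParam$ for all $s$ and $|S/_{\sim_{\mathcal N}}|=k$.

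For membership I would take as a certificate a rational transition function $\tau'$ (together with the partition $X$ it induces). A verifier checks in polynomial time that each $\tau'(s)\in\Dist(S)$, that $\|\tau'(s)-\tau(s)\|_1\le\errorParam$ (an $L_1$-constraint that is linear once the absolute values are split with auxiliary variables), and that the largest bisimulation $\mathord{\sim_{\mathcal N}}$, computable by partition refinement, has exactly $k$ classes. The one delicate point is that $\tau'$ must have polynomially many bits. To see such a witness exists whenever a real one does, fix the target partition $X$ into $k$ blocks: that $X$ be a bisimulation of $\mathcal N$ is the system of linear equalities $\tau'(s)(E)=\tau'(s')(E)$ over same-block $s,s'$ and all blocks $E$, and that $X$ be the \emph{coarsest} bisimulation (so the quotient has exactly $k$, not fewer, states) amounts to finitely many disequations, one forbidding each potential merge of two equally-labelled blocks. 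A nonempty feasible set of linear equalities, weak inequalities and disequations with polynomial-size rational coefficients contains a rational solution of polynomially bounded bit-size, so a small witness $\tau'$ exists.

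For hardness I would follow the reduction above with $\epsilon:=\frac{1}{2T}$ and $k:=5$, and prove $<P,N>\in\text{Subset Sum}\iff\Hyp$ has a $\frac{1}{2T}$-quotient of size $5$. The backbone is a rigidity analysis of the perturbed chain. The two absorbing states force exactly two blocks (one per label), which can never merge. In the middle layer, for $t_1$ and some of the $s_i$ to share a block they must agree on a common perturbed distribution $(c_a,1-c_a)$ over the two absorbing blocks; the budget $\epsilon$ forces $c_a\in[\tfrac12-\tfrac\epsilon2,\tfrac12+\tfrac\epsilon2]$ from each $s_i$ and $c_a\in[\tfrac12-\tfrac{3\epsilon}2,\tfrac12-\tfrac\epsilon2]$ from $t_1$, whose only common value is $c_a=\tfrac12-\tfrac\epsilon2$; symmetrically the $t_2$-block is pinned to $\tfrac12+\tfrac\epsilon2$, and since $t_1,t_2$ have $L_1$-distance $4\epsilon>2\epsilon$ they can never merge. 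Hence reaching size $5$ forces the middle layer to be exactly the two blocks headed by $t_1$ and $t_2$, with every $s_i$ joining one of them, and forces $s\sim t$ at the top (a top state cannot be merged into the middle or absorbing layers, as that would require shifting far more than $\epsilon$ mass). Writing $Q$ for the $s_i$ placed with $t_1$ and $M:=\sum_{i\in Q}p_i$, agreement of $s$ and $t$ on the two middle blocks is needed; as each may shift at most $\epsilon/2=\frac1{4T}$ of mass between them, this is feasible iff $|M/T-N/T|\le\frac1{2T}$, i.e.\ iff $M=N$. Thus a size-$5$ quotient exists iff some $Q$ realises $\sum_{i\in Q}p_i=N$.

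I expect the main obstacle to be the soundness/completeness case analysis in the hardness direction: showing that size exactly $5$ forces \emph{precisely} the structure above, which requires ruling out every alternative $5$-block configuration (keeping $s,t$ distinct while collapsing the middle differently, leaving some $s_i$ in a singleton block, or attempting top--middle or top--absorbing merges) by using the tightness of the $\epsilon=\frac1{2T}$ budget to show each alternative either undershoots or exceeds $5$ blocks. The delicate quantitative step is the budget bookkeeping that pins the middle blocks to $\tfrac12\mp\tfrac\epsilon2$ and converts $s\sim t$ into the exact integer equation $M=N$; making these inequalities tight, so that $M\neq N$ is genuinely infeasible, is exactly where the choice $\epsilon=\frac1{2T}$ is essential.
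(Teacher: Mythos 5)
Your NP-hardness argument is correct and follows the same route as the paper: the same gadget (\cref{fig:reductionfromSubsset}) with $\epsilon=\frac{1}{2T}$ and $k=5$, the same five forced classes, and the same conversion of $s\sim t$ into an integer constraint. Your quantitative bookkeeping via the total-variation bound $|\mu(A)-\nu(A)|\le\frac{1}{2}\|\mu-\nu\|_1$ (pinning the two middle classes to $\frac{1}{2}\mp\frac{\epsilon}{2}$, separating $t_1$ from $t_2$, forcing every $s_i$ into the class of $t_1$ or of $t_2$ and $s$ into the class of $t$, then $|\sum_{p_i\in Q}p_i-N|\le\frac{1}{2}$) is a tidier version of the paper's appendix computation, which reaches $|\sum_{p_i\in Q}p_i-N|\ls 1$ through a chain of strict inequalities and likewise concludes by integrality. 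Your membership verifier (accept a rational $\tau'$, check stochasticity and the $L_1$ budget, run partition refinement on the perturbed chain, demand exactly $k$ classes) is sound, and is in fact more careful than the paper's, whose LP only certifies that the guessed $k$-block partition is \emph{some} bisimulation of the perturbed chain and hence only that the quotient has \emph{at most} $k$ states.

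There is, however, a genuine gap in your argument that a polynomial-bit-size witness $\tau'$ exists. You encode ``$X$ is the coarsest bisimulation'' as finitely many disequations, one per pair of equally-labelled blocks. This characterisation is false: two blocks of a bisimulation $X$ can have distinct one-step transition vectors over $X$ and nevertheless be bisimilar, because their merge can be forced by merges among successor blocks. Concretely, take states $u,v,x,y$ with $\ell(u)=\ell(v)\neq\ell(x)=\ell(y)$ and $\tau(u)(x)=\tau(v)(y)=\tau(x)(u)=\tau(y)(v)=1$. The identity partition is a bisimulation, every pair of equally-labelled singleton blocks has distinct one-step vectors, and no merge of a single pair of blocks is a bisimulation either; yet $u\sim v$ and $x\sim y$, so the quotient has two states, not four. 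Hence a rational point satisfying your equalities, weak inequalities and disequations need not yield quotient size exactly $k$, and the final implication of your membership argument does not go through. A repair must exclude, for \emph{every} strict coarsening $Y$ of $X$ obtained by merging equally-labelled blocks, the subpolytope of those $\tau'$ for which $Y$ is a bisimulation; there are exponentially many such $Y$, each exclusion is a disjunction of disequations rather than a single one, and one then needs a genericity argument to extract a polynomial-bit-size rational point from a polytope minus exponentially many proper subpolytopes (a rational curve through polynomial-size points of the polytope meets each excluded hyperplane in few points, and a parameter of polynomially many bits avoids all the bad values). The paper itself sidesteps this maximality issue entirely, so its own membership proof is open to the same objection; but as written, this step of your proposal is the one that fails.
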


Due to the $\sf NP$-hardness result, we hope to develop practical algorithms to compute approximate quotients of $\Hyp$ that are small but not necessarily of minimum size. To do that, an intuitive idea is to merge ``similar'' states. As we have discussed in the introduction, merging states with small probabilistic bisimilarity distances might be insufficient. Consider the LMC shown in \cref{fig:intro3}. Assume $\epsilon \gr 0$. The states $s_1$ and $t_1$ ($s_2$ and $t_2$) have probabilistic bisimilarity distance one. Thus, to merge $s_1$, $t_1$ or $s_2$, $t_2$, one needs to merge states with probabilistic bisimilarity distance one. Alternatively, we explore the relation of approximate quotient and $\epsilon$-bisimulation. It is not hard to prove the following proposition:

\begin{restatable}{proposition}{propositionApproximateGlobalRelationSubset}\label{proposition:approximate-quotient-implies-approximate-bisimulation}
	Let $\Q$ be an $\errorParam$-quotient of $\Hyp$. Then in the LMC $\Hyp \oplus \Q$, we have $s \sim_{\frac{\errorParam}{2}} [s]^{\errorParam}_{\Q}$ for all $s$ from $\Hyp$.
\end{restatable}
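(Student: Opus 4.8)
The plan is to exhibit a single $(\errorParam/2)$-bisimulation $\R$ on the LMC $\Hyp \oplus \Q$ that contains every pair $(s, [s]^{\errorParam}_{\Q})$ with $s$ from $\Hyp$. Because $\lifting{\R}_{\epsilon}$ is monotone in $\R$ (enlarging $\R$ can only increase the mass a coupling places on it), a union of $(\errorParam/2)$-bisimulations is again an $(\errorParam/2)$-bisimulation; hence $\sim_{\errorParam/2}$ is the largest one, and producing any such $\R$ containing the desired pairs establishes $s \sim_{\errorParam/2} [s]^{\errorParam}_{\Q}$. First I would unfold the definitions: writing $\Hyp = <S, L, \tau_{\Hyp}, \ell>$, let $\tau''$ be a transition function witnessing that $\Q$ is an $\errorParam$-quotient of $\Hyp$, so $\|\tau''(s) - \tau_{\Hyp}(s)\|_1 \le \errorParam$ for all $s \in S$ and $\Q = \M''/_{\sim_{\M''}}$ with $\M'' = <S, L, \tau'', \ell>$. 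Then, writing $C := [s]_{\sim_{\M''}}$, the corresponding state $[s]^{\errorParam}_{\Q}$ is $C$, it carries label $\ell(s)$, and the transition function $\tau_{\Q}$ of $\Q$ satisfies $\tau_{\Q}(C)(D) = \tau''(s)(D)$ for every $\sim_{\M''}$-class $D$.

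I would take $\R := \{(s, [s]_{\sim_{\M''}}) : s \in S\} \cup \{([s]_{\sim_{\M''}}, s) : s \in S\}$ and verify the two defining conditions of $(\errorParam/2)$-bisimulation for each pair. The label condition holds since $C = [s]_{\sim_{\M''}}$ carries label $\ell(s)$ in $\Q$, matching the label of $s$ in $\Hyp$. For the lifting condition on $(s, C)$ I would build the coupling in two steps. Step one couples $\tau_{\Hyp}(s)$ with $\tau''(s)$ (two distributions on $S$): by the coupling inequality $\tv(\mu, \nu) = \tfrac12 \|\mu - \nu\|_1$, there is a maximal coupling $\omega' \in \Omega(\tau_{\Hyp}(s), \tau''(s))$ with diagonal mass $\sum_{u \in S} \omega'(u, u) = 1 - \tfrac12 \|\tau_{\Hyp}(s) - \tau''(s)\|_1 \ge 1 - \errorParam/2$. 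Step two pushes the second coordinate through the quotient map $w \mapsto [w]_{\sim_{\M''}}$ by setting $\omega(u, D) := \sum_{w \in D} \omega'(u, w)$; a short check of marginals shows $\omega \in \Omega(\tau_{\Hyp}(s), \tau_{\Q}(C))$, the second marginal being $\sum_{w \in D} \tau''(s)(w) = \tau''(s)(D) = \tau_{\Q}(C)(D)$.

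It then remains to bound the mass of $\omega$ on $\R$. Since $u \in [u]_{\sim_{\M''}}$, each diagonal term $\omega'(u, u)$ is one of the summands of $\omega(u, [u]_{\sim_{\M''}})$, so $\sum_{(x, y) \in \R} \omega(x, y) \ge \sum_{u \in S} \omega(u, [u]_{\sim_{\M''}}) \ge \sum_{u \in S} \omega'(u, u) \ge 1 - \errorParam/2$, which gives $(\tau_{\Hyp}(s), \tau_{\Q}(C)) \in \lifting{\R}_{\errorParam/2}$. The symmetric pairs $([s]_{\sim_{\M''}}, s)$ are discharged by the transpose of the same coupling, whose mass sits on the pairs $([u]_{\sim_{\M''}}, u) \in \R$.

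The genuinely non-routine ideas are, first, to factor the coupling through the intermediate distribution $\tau''(s)$ rather than attempting to relate $\tau_{\Hyp}(s)$ and $\tau_{\Q}(C)$ directly (this is what lets the $\errorParam$-closeness of $\Hyp$ and $\M''$ interact cleanly with the quotient map), and second, the coupling inequality, whose factor $\tfrac12$ is precisely what converts the $\errorParam$-bound on $\|\tau''(s) - \tau_{\Hyp}(s)\|_1$ into the $\errorParam/2$ appearing in the statement. Verifying the marginals of the pushed-forward coupling and the final mass bound is routine, so I expect the only point needing care to be the two-step construction of the coupling.
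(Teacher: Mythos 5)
Your proof is correct, and its core mechanism is the same one the paper uses: factor through the witnessing perturbed transition function (your $\tau''$, the paper's $\tau'$) and exploit a coupling whose diagonal mass is $1 - \tfrac{1}{2}\|\tau''(s) - \tauHyp(s)\|_1 \ge 1 - \frac{\errorParam}{2}$. Where you genuinely diverge is in how the quotient $\Q$ enters. The paper factors at the level of \emph{relations}: it shows $s \sim_{\frac{\errorParam}{2}} s'$ for the identity-style relation between $\Hyp$ and the perturbed LMC $\Hyp'$, notes that $s' \sim [s]^{\errorParam}_{\Q}$ exactly, and then invokes the additivity property of $\epsilon$-bisimulation from Desharnais et al.\ to compose the two, concluding $s \sim_{\frac{\errorParam}{2}} [s]^{\errorParam}_{\Q}$. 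You factor at the level of \emph{couplings}: you push the maximal coupling forward along the quotient map $w \mapsto [w]_{\sim_{\M''}}$, verify the marginals against the quotient transition function, and thereby exhibit a single $(\frac{\errorParam}{2})$-bisimulation on $\Hyp \oplus \Q$ directly, with no appeal to an external composition lemma. The trade is symmetric in the other direction too: you cite the coupling inequality (existence of a maximal coupling with diagonal mass $1 - \mathit{tv}$) as known, whereas the paper constructs that coupling explicitly, splitting the two distributions into their pointwise minimum plus normalised positive and negative parts $\alpha, \beta$ and adding $\|\alpha\|_1\gamma$ for an arbitrary coupling $\gamma$ of $\alpha/\|\alpha\|_1$ and $\beta/\|\beta\|_1$. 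So your argument is self-contained where the paper leans on the literature (composition with an exact bisimulation), and leans on the literature where the paper is self-contained (the maximal coupling). A minor further point in your favour: you symmetrise $\R$ and discharge the transposed pairs explicitly, and you justify via monotonicity of the lifting why exhibiting one $(\frac{\errorParam}{2})$-bisimulation suffices; the paper's own relation $\mathcal{R}$ is left unsymmetrised and these bookkeeping steps are implicit.
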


\begin{figure}[t]
	\centering
	\begin{tikzpicture}
	\tikzstyle{BoxStyle} = [draw, circle, fill=black, scale=0.05,minimum width = 0.001pt, minimum height = 0.001pt]
	\node[state] (s) at (3,4) {$s_1$};
	\node[state] (s1) at (2,2.5) {$s_{2}$};
	\node[state, fill=green] (x) at (4,2.5) {$x$};
	\path[->] (s) edge [loop left] node  [midway,left] {$\frac{1}{2}$} (s);
	\path[->] (s) edge node  [midway,left,xshift=-0.1cm,yshift=0.1cm] {$\frac{1}{4}$} (s1);
	\path[->] (s) edge node  [midway,right,xshift=0.1cm,yshift=0.1cm] {$\frac{1}{4}$} (x);
	\path[->] (x) edge [loop right] node [midway, right] {$1$} (x);
	\path[->] (s1) edge node  [midway,below] {$\frac{1}{4}-2\epsilon$} (x);
	\path[->] (s1) edge [loop left] node [midway, left] {$\frac{3}{4} + 2\epsilon$} (s1);
	
	\node[state] (t) at (8,4) {$s_3$};
	\node[state, fill=green] (y) at (8,2.5) {$x$};
	\path[->] (t) edge [loop left] node  [midway,left] {$\frac{3}{4}+\epsilon$} (t);
	\path[->] (t) edge node  [midway,right,xshift=0.1cm] {$\frac{1}{4}-\epsilon$} (y);
	\path[->] (y) edge [loop right] node [midway, right] {$1$} (y);
	
	\end{tikzpicture}
	\caption{An LMC in which $s_1 \sim_{\epsilon} s_3 \sim_{\epsilon} s_2$.}
	\label{fig:exampleGlobal}
\end{figure}
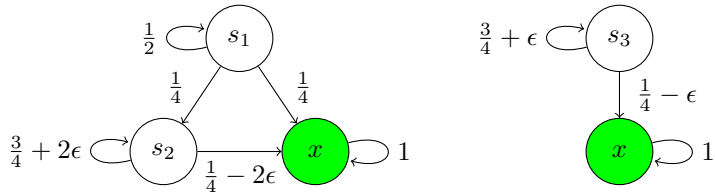
\begin{figure}[t]
	\centering
	\begin{tikzpicture}
	\tikzstyle{BoxStyle} = [draw, circle, fill=black, scale=0.05,minimum width = 0.001pt, minimum height = 0.001pt]
	\node[state] (s) at (3,4) {$s_{13}$};
	\node[state] (s1) at (2,2.5) {$s_{2}$};
	\node[state, fill=green] (x) at (4,2.5) {$x$};
	\path[->] (s) edge [loop left] node  [midway,left] {$\frac{5}{8}+\frac{\epsilon}{2}$} (s);
	\path[->] (s) edge node  [midway,left,xshift=-0.1cm,yshift=0.1cm] {$\frac{1}{8}$} (s1);
	\path[->] (s) edge node  [midway,right,xshift=0.1cm,yshift=0.1cm] {$\frac{1}{4}-\frac{\epsilon}{2}$} (x);
	\path[->] (x) edge [loop right] node [midway, right] {$1$} (x);
	\path[->] (s1) edge node  [midway,below] {$\frac{1}{4}-2\epsilon$} (x);
	\path[->] (s1) edge [loop left] node [midway, above] {$\frac{3}{4} + 2\epsilon$} (s1);
	\node at (3, 1.5) {(a) $\epsilon'$-quotient};
	
	\node[state] (s) at (8,4) {$s_1$};
	\node[state] (s1) at (7,2.5) {$s_{23}$};
	\node[state, fill=green] (x) at (9,2.5) {$x$};
	\path[->] (s) edge [loop left] node  [midway,left] {$\frac{1}{2}$} (s);
	\path[->] (s) edge node  [midway,left,xshift=-0.1cm,yshift=0.1cm] {$\frac{1}{4}$} (s1);
	\path[->] (s) edge node  [midway,right,xshift=0.1cm,yshift=0.1cm] {$\frac{1}{4}$} (x);
	\path[->] (x) edge [loop right] node [midway, right] {$1$} (x);
	\path[->] (s1) edge node  [midway,below] {$\frac{1}{4}-\frac{3\epsilon}{2}$} (x);
	\path[->] (s1) edge [loop left] node [midway, above,yshift=0.1cm] {$\frac{3}{4} + \frac{3\epsilon}{2}$} (s1);	
	\node at (8, 1.5) {(b) $\epsilon$-quotient};
	
	\node[state] (t) at (12,4) {$s_{123}$};
	\node[state, fill=green] (y) at (12,2.5) {$x$};
	\path[->] (t) edge [loop left] node  [midway,left] {$\frac{3}{4}+\epsilon$} (t);
	\path[->] (t) edge node  [midway,right,xshift=0.1cm] {$\frac{1}{4}-\epsilon$} (y);
	\path[->] (y) edge [loop right] node [midway, right] {$1$} (y);
	\node at (12, 1.5) {(c) $2\epsilon$-quotient};	
	\end{tikzpicture}
	\caption{(a) An $\epsilon'$-quotient obtained by merging $s_1$ and $s_3$ where $\epsilon'$ is at least $\frac{1}{4}+\epsilon$; (b) An $\epsilon$-quotient obtained by merging $s_2$ and $s_3$; (b) A $2\epsilon$-quotient obtained by merging $s_1$, $s_2$ and $s_3$.}
	\label{fig:exampleMerge}
\end{figure}
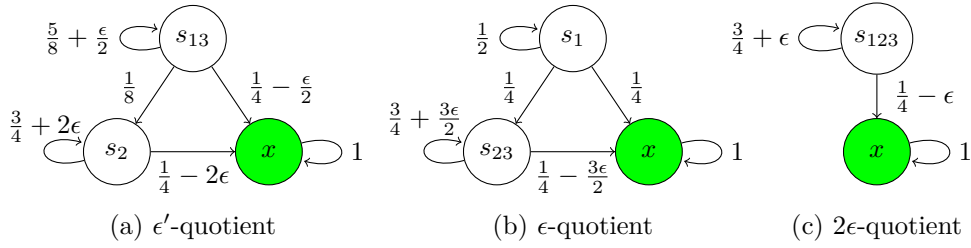

\cref{proposition:approximate-quotient-implies-approximate-bisimulation} suggests that $\epsilon_2$-quotients and $\epsilon_2$-bisimulation are related.  The runtime of the algorithm to compute the $\epsilon_2$-bisimulation in \cite{DesharnaisLavoletteTracol2008} is $O(|S|^7)$ which makes it not practical for large systems. Furthermore, the algorithms based on merging states that are $\epsilon_2$-bisimilar may produce an $\epsilon'$-quotient where $\epsilon'$ is large, violating the first condition of a satisfying approximate quotient. Assume the positive number $\epsilon$ is much smaller than $\frac{1}{8}$. Let us choose the compression parameter $\epsilon_2$ to be the same as $\epsilon$. We compute the $\epsilon$-bisimulation of the LMC shown in \cref{fig:exampleGlobal} and get $s_1 \sim_{\epsilon} s_3 \sim_{\epsilon} s_2$.  Since $\epsilon$-bisimulation is not an equivalence relation, $s_1  \sim_{\epsilon} s_2$ does not necessarily follow. Indeed, in this LMC, we have $s_1  \sim_{2\epsilon} s_2$ but not $s_1  \sim_{\epsilon} s_2$. If $s_2$ and $s_3$, related by $\sim_{\epsilon}$, are chosen to be merged, the resulting LMC in \cref{fig:exampleMerge}(b) is an $\epsilon$-quotient. However, if $s_1$ and $s_3$ are (unfortunately) chosen to be merged, the resulting LMC, shown in \cref{fig:exampleMerge}(a), is an $\epsilon'$-quotient where $\epsilon'$ cannot be smaller than $\frac{1}{4}+\epsilon$. This $\epsilon'$, much bigger than $\epsilon$ under the assumption that $\epsilon$ is much smaller than $\frac{1}{8}$, makes the resulting LMC undesirable. This example shows that arbitrarily merging states that are $\epsilon$-bisimilar may not work. The LMC in \cref{fig:exampleMerge}(c) is obtained by merging $s_1$, $s_2$ and $s_3$, the states that are related by the transitive closure of $\sim_{\epsilon}$. We show in the appendix that for any $n \in \integer^{+}$ there exists an LMC $\M(n)$ such that merging all states in $\M(n)$  that are related by the transitive closure of $\sim_{\epsilon}$ results in an $\epsilon'$-quotient where $\epsilon'$ is at least $n\epsilon$.


\cref{lemma:additivity-property}, the additivity lemma, asserts an additivity property of approximate quotients. In \cref{section:minimisation-algorithms}, this lemma will be applied as the two minimisation algorithms successively compute a sequence of approximate quotients.


\begin{restatable}{lemma}{lemmaAdditivityProperty}\label{lemma:additivity-property}
		Consider three LMCs $\M_1$, $\M_2$ and $\M_3$. Let $\epsilon_1 \ge 0$ and $\M_2$ be an $\epsilon_1$-quotient of $\M_1$. Let $\epsilon_2 \ge 0$ and $\M_3$ be an $\epsilon_2$-quotient of $\M_2$. Then $\M_3$ is an $(\epsilon_1+\epsilon_2)$-quotient of $\M_1$. 
\end{restatable}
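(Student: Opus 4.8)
The plan is to unfold both $\epsilon$-quotient hypotheses into explicit perturbed transition functions and then build a single perturbation of $\M_1$ whose exact bisimulation quotient is $\M_3$. Write $\M_1 = <S_1,L,\tau_1,\ell_1>$. The assumption that $\M_2$ is an $\epsilon_1$-quotient of $\M_1$ yields $\sigma\colon S_1\to\Dist(S_1)$ with $\|\sigma(s)-\tau_1(s)\|_1\le\epsilon_1$ for all $s$ and $\M_2 = <S_1,L,\sigma,\ell_1>/_{\sim}$; write $\M_2 = <S_2,L,\tau_2,\ell_2>$ with quotient map $\pi\colon S_1\to S_2$, so $\tau_2(\pi(s))(D) = \sigma(s)(D)$ for every class $D\in S_2$ and $\ell_2(\pi(s)) = \ell_1(s)$. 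Likewise, $\M_3$ being an $\epsilon_2$-quotient of $\M_2$ yields $\rho\colon S_2\to\Dist(S_2)$ with $\|\rho(C)-\tau_2(C)\|_1\le\epsilon_2$ and $\M_3 = <S_2,L,\rho,\ell_2>/_{\sim}$.

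The key construction is to lift $\rho$ back to $S_1$ by re-scaling $\sigma$ class by class. For $s\in S_1$ with $C = \pi(s)$ and $t\in S_1$ with $D = \pi(t)$, set $\tau''(s)(t) = \sigma(s)(t)\cdot\rho(C)(D)/\sigma(s)(D)$ whenever $\sigma(s)(D)>0$, and when $\sigma(s)(D)=0$ place the mass $\rho(C)(D)$ on an arbitrary fixed representative of $D$. Then $\tau''(s)$ is a distribution, since the class masses sum to $\sum_D\rho(C)(D)=1$, and by design $\tau''(s)(D) = \rho(C)(D)$ for every $D\in S_2$: the lift agrees with $\rho$ at the level of $\M_2$-classes while retaining the intra-class shape of $\sigma$.

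For the distance bound I would use $\|\tau''(s)-\tau_1(s)\|_1\le\|\tau''(s)-\sigma(s)\|_1+\|\sigma(s)-\tau_1(s)\|_1$ and control the first term. The point of the class-wise re-scaling is that summing $|\tau''(s)(t)-\sigma(s)(t)|$ over $t\in D$ telescopes to exactly $|\rho(C)(D)-\sigma(s)(D)|$ (the intra-class proportions cancel); summing over all classes then gives $\|\tau''(s)-\sigma(s)\|_1 = \sum_{D}|\rho(C)(D)-\tau_2(C)(D)| = \|\rho(C)-\tau_2(C)\|_1\le\epsilon_2$, whence $\|\tau''(s)-\tau_1(s)\|_1\le\epsilon_1+\epsilon_2$, as required. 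The degenerate case $\sigma(s)(D)=0<\rho(C)(D)$ must be checked separately, but there the injected mass contributes exactly $\rho(C)(D)=|\rho(C)(D)-\sigma(s)(D)|$, so the same identity holds.

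It remains to show that $<S_1,L,\tau'',\ell_1>$ has exact bisimulation quotient $\M_3$. Let $\equiv$ be the partition of $S_1$ whose blocks are the fibres of the composite $S_1\xrightarrow{\pi}S_2\to S_3$ (states of $\M_3$ are $\sim$-classes of $S_2$, hence unions of $S_2$-blocks, hence subsets of $S_1$). Since $\tau''(s)$ sees $\rho(\pi(s))$ only at the class level, for $s,s'$ in the same $\equiv$-block and every $\equiv$-block $G$ corresponding to $G^{\ast}\in S_3$ one gets $\tau''(s)(G)=\rho(\pi(s))(G^{\ast})=\rho(\pi(s'))(G^{\ast})=\tau''(s')(G)$, the middle step being exactly the bisimulation condition defining $\M_3$; labels agree too, so $\equiv$ is a bisimulation and the induced quotient is isomorphic to $\M_3$. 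The main obstacle is the converse: I must argue $\equiv$ is the \emph{largest} bisimulation, so the quotient is $\M_3$ and not something coarser. For this I would use that $\M_3$, being itself an exact bisimulation quotient, admits only the trivial bisimilarity, together with the standard fact that bisimilarity is preserved and reflected by the quotient map: any merging of two distinct $\equiv$-blocks would descend to a nontrivial bisimulation on $\M_3$, a contradiction. Hence $\sim_{<S_1,L,\tau'',\ell_1>}\,=\,\equiv$ and $<S_1,L,\tau'',\ell_1>/_{\sim}\cong\M_3$, establishing that $\M_3$ is an $(\epsilon_1+\epsilon_2)$-quotient of $\M_1$.
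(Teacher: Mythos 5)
Your proof is correct and takes essentially the same route as the paper's: both unfold the two hypotheses into perturbations $\sigma$ and $\rho$, lift $\rho$ through the first quotient map by modifying $\sigma(s)$ only \emph{within} $\M_2$-classes so that its class-marginals become $\rho(\pi(s))$ at $L_1$-cost exactly $\|\rho(\pi(s))-\tau_2(\pi(s))\|_1$, apply the triangle inequality to get the $\epsilon_1+\epsilon_2$ bound, and then identify the exact quotient of the resulting chain with $\M_3$ using the fact that $\M_3$, being itself a bisimulation quotient, has no two distinct bisimilar states. The only differences are cosmetic: you realise the within-class adjustment by proportional rescaling where the paper invokes a generic mass-reallocation lemma (its Lemma~\ref{lemma:adjust-probability-distribution-exists}), and you phrase the last step as maximality of the fibre partition $\equiv$ where the paper exhibits a bisimulation pairing each state with its image in $\M_3$.
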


\section{Approximate Minimisation Algorithms}\label{section:minimisation-algorithms}

\begin{figure}[h]
	\centering
	
	\begin{tikzpicture}[xscale=.6,>=latex',shorten >=1pt,node distance=3cm,on grid,auto]
	
	\node[label] (Me) at (0,0) {$\Hyp$};
	\node[label] (Q0) at (4,0) {$\Q_{0}$};
	\node[label] (Q1) at (9,0) {$\Q_{1}$};
	\node[label] (Qd) at (14,0) {$\cdots$};
	\node[label] (Qi) at (19,0) {$\Q_{i}$};
	
	\path[->] (Me) edge node [midway, above] {exact} (Q0);
	\path[->] (Me) edge node [midway, below] {quotient} (Q0);	
	\path[->] (Q0) edge node [midway, above] {approximate} (Q1);
	\path[->] (Q0) edge node [midway, below] {quotient} (Q1);	
	\path[->] (Q1) edge node [midway, above] {approximate} (Qd);	
	\path[->] (Q1) edge node [midway, below] {quotient} (Qd);				
	\path[->] (Qd) edge node [midway, above] {approximate} (Qi);		
	\path[->] (Qd) edge node [midway, below] {quotient} (Qi);		
	\end{tikzpicture}
	\caption{Overview of the minimisation algorithms. \cref{lemma:additivity-property} applies to $\Hyp$, $\Q_{0}$, $\Q_{1}, \cdots, \Q_{i}$.}\label{fig:algorithms}
\end{figure}

In this section, we present two practical minimisation algorithms that compute approximate quotients of $\Hyp$. Given an LMC $\Hyp = <S, L, \tauHyp, \ell>$ with perturbed transition probabilities and a small compression parameter $\epsilon_2$. Both algorithms start by computing $\Q_0$, the exact quotient of $\Hyp$.  They proceed in iterations and compute a sequence of approximate quotients where the approximate quotient ($\Q_{i}$) computed at the end of the $i$th iteration is an $\epsilon_2$-quotient of the quotient ($\Q_{i-1}$) given at the beginning of that iteration. Using the additivity lemma, we can show that the (approximate) quotient $\Q_i$ after the $i$th iteration is an $i\epsilon_2$-quotient of $\Hyp$. See \cref{fig:algorithms} for an overview of this approach. Each iteration computes a partition of the state space, lumps the states that are together in the partition and concludes with taking the exact quotient.


\subsection{Local Bisimilarity Distance}\label{subsection:local-bisimilarity-distances}

We define the notion of \emph{local bisimilarity distance}, denoted by $d_{\local}^{\M}$ (or $d_{\local}$ when $\M$ is clear). Intuitively, two states $s$ and $t$ are at small local bisimilarity distance if they are probabilistic bisimilar in an LMC which is slightly perturbed only at the successor distributions of $s$ and~$t$. We provide a polynomial-time algorithm to compute the local bisimilarity distance. Given an LMC $\Hyp = <S, L, \tauHyp, \ell>$ (with perturbed transition probabilities) and a small compression parameter $\epsilon_2$, we propose an iterative minimisation algorithm to compute approximate quotients of $\Hyp$ by merging state pairs with small local bisimilarity distances. In each iteration of the algorithm, we select the state pair with the same label and the minimum local bisimilarity distance if such distance is at most $\epsilon_2$. We compute a partition in which this state pair are together and lump together the states that are together in the partition. The algorithm terminates when no pairs can be lumped, that is, all state pairs have local bisimilarity distances greater than $\epsilon_2$. 


 \paragraph*{Computing Local Bisimilarity Distances}\label{subsubsection:computeLocalBisimilarityDistances}
Given two different states $s, t \in S$ with the same label. We want to compute a new transition function $\tauHyp'$ by only changing the successor distributions of $s$ and $t$ ($\tauHyp(s)$ and $\tauHyp(t)$, respectively) such that $\{s, t\}$ belongs to an $\R$-induced partition where $\R$ is a probabilistic bisimulation of the LMC $\Hyp' = <S, L, \tauHyp', \ell>$. Let ${\rm T}$ be the set of the all transition functions that satisfy this condition, more precisely, we define ${\rm T} = \{\tauHyp' \suchthat \tauHyp'(x) = \tauHyp(x) \; \forall x \not\in \{s, t\}  \land \{s, t\} \in S /_{\R} \text{ where } \R \text{ is a probabilistic bisimulation of the LMC } \Hyp' = <S, L, \tauHyp', \ell>\}$. The local bisimilarity distance is defined as $d_{\local}^{\Hyp}(s, t) = \textstyle\inf_{\tau' \in {\rm T}} \max\{\|\tau'(s) - \tauHyp(s)\|_1, \|\tau'(t) - \tauHyp(t)\|_1 \}$. It is not immediately clear how to compute it. 

By the definition of ${\rm T}$, the probabilistic bisimulation $\R$ is the same for any LMC $<S, L, \tauHyp', \ell>$ with $\tauHyp' \in {\rm T}$. Let us define the partition $X = S /_\R$ where $\R$ is the common probabilistic bisimulation. The local bisimilarity distance can be computed by using $X$:

\begin{restatable}{proposition}{propositionAdjustTransitionFunction}\label{proposition:adjust-transition-function-s-t}
We have $d_{\local}^{\Hyp}(s, t) =	 \frac{1}{2} \|(\tau_{\epsilon}(s)(E))_{E \in X} -(\tau_{\epsilon}(t)(E))_{E \in X}\|_1$.
\end{restatable}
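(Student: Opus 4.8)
The plan is to take as given the fact stated just above the proposition, namely that the partition $X = S/_\R$ is the same for every $\tauHyp' \in {\rm T}$, and to reduce membership in ${\rm T}$ to a single linear constraint on the two free distributions $\tauHyp'(s)$ and $\tauHyp'(t)$. Writing $\mu := \tauHyp(s)$, $\nu := \tauHyp(t)$, $a_E := \mu(E)$ and $b_E := \nu(E)$, I would first show that $\tauHyp' \in {\rm T}$ iff $\tauHyp'$ agrees with $\tauHyp$ on all states outside $\{s,t\}$, the images $\tauHyp'(s), \tauHyp'(t)$ are distributions, and $\tauHyp'(s)(E) = \tauHyp'(t)(E)$ for every block $E \in X$. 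The nontrivial direction is that this last condition already suffices: since $\{s,t\}$ is a single block of $X$ and every other state keeps its transition function, the only bisimulation condition not already guaranteed by $X$ being a bisimulation of the fixed part is the one for the block $\{s,t\}$, which is exactly $\tauHyp'(s)(E) = \tauHyp'(t)(E)$ for all $E$ (here I use $\ell(s) = \ell(t)$, and that altering the outgoing edges of $s,t$ cannot affect the incoming-edge conditions of the other blocks). Thus $d_{\local}^{\Hyp}(s,t)$ equals the infimum of $\max\{\|\tauHyp'(s) - \mu\|_1, \|\tauHyp'(t) - \nu\|_1\}$ over distributions $\tauHyp'(s), \tauHyp'(t)$ that agree on every block.

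For the lower bound I would invoke the aggregation inequality $\|\rho - \sigma\|_1 = \sum_{u \in S}|\rho(u) - \sigma(u)| \ge \sum_{E \in X}\big|\rho(E) - \sigma(E)\big|$, valid for any pair of vectors. Applying it to $(\tauHyp'(s), \mu)$ and $(\tauHyp'(t), \nu)$ and setting $c_E := \tauHyp'(s)(E) = \tauHyp'(t)(E)$ gives $\|\tauHyp'(s) - \mu\|_1 \ge \sum_E |c_E - a_E|$ and $\|\tauHyp'(t) - \nu\|_1 \ge \sum_E|c_E - b_E|$. Since $|c_E - a_E| + |c_E - b_E| \ge |a_E - b_E|$ termwise, the larger of the two sums is at least $\frac{1}{2}\sum_E|a_E - b_E|$, whence $d_{\local}^{\Hyp}(s,t) \ge \frac{1}{2}\sum_{E \in X}|a_E - b_E|$, which is the right-hand side.

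For the matching upper bound I would exhibit a witness attaining it, so the infimum is in fact a minimum. I would place the target block masses at the midpoint, $c_E := \frac{a_E + b_E}{2}$, which is a legal family of block masses since $\sum_E c_E = 1$ and $c_E \ge 0$. I then build $\tauHyp'(s)$ by redistributing mass \emph{monotonically} inside each block: where $c_E \ge a_E$ I only add mass (to an arbitrary state of $E$), and where $c_E \ls a_E$ I only remove mass, which is possible because the required removal $a_E - c_E = \frac{a_E - b_E}{2} \le a_E$ never exceeds the mass present in $E$. This keeps $\tauHyp'(s)$ a nonnegative distribution with $\tauHyp'(s)(E) = c_E$, and since all within-block changes share a sign the aggregation inequality becomes an equality, giving $\|\tauHyp'(s) - \mu\|_1 = \sum_E|c_E - a_E| = \frac{1}{2}\sum_E|a_E - b_E|$. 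Constructing $\tauHyp'(t)$ from $\nu$ in the same way yields $\|\tauHyp'(t) - \nu\|_1 = \frac{1}{2}\sum_E|a_E - b_E|$ and $\tauHyp'(s)(E) = c_E = \tauHyp'(t)(E)$, so $\tauHyp' \in {\rm T}$. Both terms of the max then equal $\frac{1}{2}\sum_E|a_E - b_E|$, establishing the reverse inequality and hence the identity.

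The main obstacle is the reduction step at the start: one must be certain that matching only the aggregated block masses of $s$ and $t$ is both necessary and sufficient for membership in ${\rm T}$, i.e.\ that no hidden constraints (the partition $X$ changing, or some other pair of states losing bisimilarity) are lurking in the definition. This is precisely where the supplied fact that $X$ is common to all $\tauHyp' \in {\rm T}$ does the real work; granting it, the two remaining inequalities are just the midpoint/triangle-inequality estimate and the tight within-block redistribution, both routine.
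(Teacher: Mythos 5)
Your proof is correct and follows essentially the same route as the paper's: the paper likewise establishes the lower bound by aggregating $L_1$-mass over the blocks of $X$ and using that any $\tauHyp' \in {\rm T}$ forces $s$ and $t$ to have equal block masses, and it obtains the matching upper bound by taking the midpoint distribution $\gamma = \frac{1}{2}\bigl((\tauHyp(s)(E))_{E\in X} + (\tauHyp(t)(E))_{E\in X}\bigr)$ and realising it with a within-block mass redistribution (their Algorithm~5 / Lemma on adjusting probability distributions) that makes the aggregation inequality tight, exactly as your monotone redistribution does. The only cosmetic difference is that you prove explicitly (via the termwise triangle inequality) that the midpoint minimises the max of the two aggregated distances, a fact the paper merely asserts.
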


It turns out that $X$ can simply be computed by \cref{alg:local-distance-partition}. As this algorithm is basically taking the (exact) quotient of the LMC constructed on line~1, it runs in polynomial time. It follows from \cref{proposition:adjust-transition-function-s-t} that the local bisimilarity distance can be computed in polynomial time.


\begin{algorithm}[h]
	\DontPrintSemicolon
	\KwIn{An LMC $\Hyp = <S, L, \tauHyp, \ell>$, a state pair $(s, t) \in S \times S$}
	\KwOut{A partition $X$ over $S$ containing $\{s, t\}$}
	Construct a new LMC $\Hyp'$ from $\Hyp$ by introducing a new label, labelling both $s$ and $t$ with the new label and making both $s$ and $t$ absorbing\footnotemark\;
	$X := S /_{\sim_{\Hyp'}}$\;
	\caption{Compute Partition for Local Bisimilarity Distances}
	\label{alg:local-distance-partition}
\end{algorithm}
\footnotetext{\modify{An absorbing state is a state that, once entered, cannot be left; that is, a state with self-loop.}}

\begin{example}\label{example:intro3}
 Assume $\epsilon \ls \frac{1}{2}$. Consider the LMC shown in \cref{fig:intro3}. Let $\tauHyp$ denote its transition function. To compute the local bisimilarity distance of $s_1$ and $t_1$, we first compute the partition containing $\{s_1, t_1\}$: $X = \big\{ \{ s_1, t_1\}, \{ s_2\},\{ t_2\} \big\}$. We have $(\tauHyp(s_1)(E))_{E \in X} = (\frac{1}{2}, \frac{1}{2}, 0)$ and $(\tauHyp(t_1)(E))_{E \in X} = (\frac{1}{2}+\epsilon, 0, \frac{1}{2}-\epsilon)$. By \cref{proposition:adjust-transition-function-s-t}, the local bisimilarity distance is $d_{\local}(s_1, t_1) = \frac{1}{2}\|(\tauHyp(s_1)(E))_{E \in X} -(\tauHyp(t_1)(E))_{E \in X}\|_1 = \frac{1}{2}$. Similarly, we have~$d_{\local}(s_2, t_2) =~\frac{1}{2}$.
\end{example}

\begin{algorithm}[h]
	\DontPrintSemicolon
	\KwIn{An LMC $\Hyp = <S, L, \tauHyp, \ell>$, a compression parameter $\epsilon_2$}
	\KwOut{An LMC $\Q_{i}$}
	$i := 0$\\
	$\Q_{i} := \Hyp/_{\sim_{\Hyp}}$ and $\Q_{i} = <S^{\Q_{i}}, L, \tau^{\Q_{i}}, \ell^{\Q_{i}}>$\;
	\While{$\exists u, v \in S^{\Q_{i}} \text{ such that }  u \not= v \text{ and }  \ell^{\Q_{i}}(u) = \ell^{\Q_{i}}(v) \text{ and }  d_{\local}^{\Q_{i}} (u, v) \le \epsilon_2$ }{
		$(s, t) = \arg\min \{d_{\local}^{\Q_{i}} (u, v) \suchthat (u ,v) \in S^{\Q_{i}} \times S^{\Q_{i}} \land u \not= v \land   \ell^{\Q_{i}}(u) = \ell^{\Q_{i}}(v)\}$\\
		Compute $X_i$ by running \cref{alg:local-distance-partition} with input $Q_i$ and $(s, t)$\;
		Construct an LMC $\M_{i+1}:= <X_i, L, \tau^{\M_{i+1}}, \ell^{\M_{i+1}}>$ from $\Q_{i}$ where
		$
		\tau^{\M_{i+1}}(E) := \left \{
		\begin{array}{l}
		 (\tau^{\Q_{i}}(u)(E'))_{E' \in X_i}  \mbox{ for any $u \in E$ if $E \in X_i$  and $E \not= \{s, t\}$}\\
		\frac{(\tau^{\Q_{i}}(s)(E'))_{E' \in X_i} + (\tau^{\Q_{i}}(t)(E'))_{E' \in X_i}}{2} \;\; \mbox{if $E = \{s, t\}$}\\
		\end{array}
		\right .
		$ and $\ell^{\M_{i+1}} (E) := \ell^{\Q_i}(u)$ for $E \in X_i$ and any $u \in E$\;
		
		$\Q_{i+1} := \M_{i+1} /_{\sim_{\M_{i+1}}}$\;
		$i := i+1$\;
	}
	\caption{LMC Minimisation Using Local Bisimilarity Distances}
	\label{alg:local-distance-merge-algorithm}
\end{algorithm}

\paragraph*{Minimisation Algorithm Using Local Bisimilarity Distances}
\label{subsubsection:minimisation-algorithm-local-bisimilarity-distances}
\cref{alg:local-distance-merge-algorithm} shows the minimisation algorithm using local bisimilarity distances. The input is an LMC $\Hyp$ and a compression parameter $\epsilon_2$. We start by initializing an index $i$ to $0$ and building the quotient LMC $\Q_{0} = \Hyp /_{\sim_{\Hyp}}$. If there are no states in $\Q_{i}$ with local bisimilarity distance less than $\epsilon_2$, the algorithm terminates. Otherwise, it steps into the $i$'th iteration of the loop and computes the local bisimilarity distances for all pairs of states in $\Q_{i}$ with the same label. It selects the state pair $(s, t)$ which has the smallest local bisimilarity distance on line~4. It then computes the new approximate quotient by merging states $s$ and $t$ on line~$5$-$7$. This computation is in three steps where the first step is to compute the partition $X_i$ (line~5) by running \cref{alg:local-distance-partition} with input $\Q_i$ and the state pair $(s, t)$. The second step is to construct a new LMC $\M_{i+1}$ by setting $X_i$ as its state space (line~6). The final step is to compute a new approximate quotient $\Q_{i+1}$ by taking the exact quotient of the LMC $\M_{i+1}$ obtained from the previous step.
We increment $i$ at the end of the iteration and continue with another iteration if there are states in $\Q_{i+1}$ with local bisimilarity distance at most $\epsilon_2$. Since there are finitely many states and it is polynomial time to compute the local bisimilarity distances, the algorithm always terminates and runs in polynomial time.
\subsection{Minimisation by Approximate Partition Refinement}\label{subsection:approximate-partition-refinement}
Consider the LMC in \cref{fig:intro3}. Assume $\epsilon \ls \frac{1}{2}$ and $\epsilon_2 \ls \frac{1}{2}$. The minimisation algorithm using local bisimilarity distance (Algorithm~\ref{alg:local-distance-merge-algorithm}) cannot merge states $s_1, t_1$ (or $s_2, t_2$) as $d_{\local}(s_1, t_1) = d_{\local}(s_2, t_2)  =\frac{1}{2} \gr \epsilon_2$ as shown by \cref{example:intro3}.

We introduce an approximate partition refinement, a polynomial algorithm similar to the exact partition refinement, which can fix this problem. In the exact partition refinement algorithm, the states will only remain in the same set in an iteration if they have the same label and their probability distributions over the previous partition are the same. Similarly, we design the approximate partition refinement such that states only remain in the same set in an iteration if they have the same label and the $L_1$-distance between the probability distributions over the previous partition is small, say, at most $\epsilon_2$. Given an LMC $\Hyp = <S, L, \tauHyp, \ell>$ with perturbed transition probabilities, the minimisation algorithm using the approximate partition refinement also proceeds in iterations. In each iteration, the approximate partition refinement computes a partition $X$ and then the states which are together in $X$ are  lumped to form a new LMC. To make sure the new LMC is a quotient, we take the (exact) quotient of this LMC as our new approximate quotient. The algorithm continues when there are states that could be lumped, and it terminates when all sets in the partition computed by the approximate partition refinement are singletons, that is, no states can be lumped. 

\modify{
	\begin{example}\label{example:approximate-partition-intro}	
	Consider again the LMC in \cref{fig:intro3}. Assume $\epsilon \ls \frac{1}{2}$ and the compression parameter $\epsilon_2 \ge 2\epsilon$.  We run the above-mentioned minimisation algorithm using the approximate partition refinement. It will only run for one iteration of approximate partition refinement, as we will see in the following. \Cref{fig:example-intro-approximate-partition-refinement}(a) shows the partitions of this iteration. At the beginning of the approximate partition refinement, we have partition $X_0$ as all states are in the same set. The states are then split by the labels and we get partition $X_1$. There is no further split since the $L_1$-distance between the probability distributions over $X_1$ from $s_1$ and $t_1$ (resp. $s_2$ and $t_2$) is $2\epsilon$ which is bounded by the compression parameter $\epsilon_2$, that is, $\|(\tau(s_1)(E))_{E \in X_1}  - (\tau(t_1)(E))_{E \in X_1}\|_1 = \|(\tau(s_2)(E))_{E \in X_1}  - (\tau(t_2)(E))_{E \in X_1}\|_1 = 2\epsilon \le \epsilon_2$. The states together in $X_1$ are then lumped to form the new LMC shown in \cref{fig:example-intro-approximate-partition-refinement}(b). The algorithm terminates as no states in the new LMC can be lumped.
	\end{example}
}

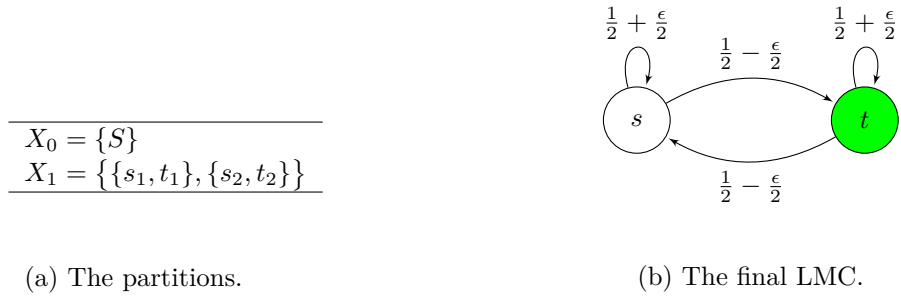
\begin{figure}
\begin{minipage}{0.45\textwidth}
	\centering	\vspace{0.2cm}
	\begin{tabular}{l}
		\\\\\\
		\hline
		$X_0 = \{S\}$ \\
		$X_1 = \big\{  \{s_1,t_1\} , \{s_2,t_2\} \big\}$\\
		\hline
		\\\\
		(a) The partitions.
	\end{tabular}
\end{minipage}
\hfill
\begin{minipage}{0.45\textwidth}
	\centering
\tikzstyle{BoxStyle} = [draw, circle, fill=black, scale=0.4,minimum width = 1pt, minimum height = 1pt]

\begin{tikzpicture}[scale=.6,>=latex',shorten >=1pt,node distance=3cm,on grid,auto]


\node[state] (s) at (-1,0) {$s$};
\node[state, fill=green] (s2) at (4,0) {$t$};


\path[->] (s) edge [out=30,in=150] node [midway, above] {$\frac{1}{2}-\frac{\epsilon}{2}$} (s2);
\path[->] (s2) edge [out=210,in=-30] node [midway, below] {$\frac{1}{2}-\frac{\epsilon}{2}$} (s);

	\path[->] (s) edge [loop above]node [midway, above] {$\frac{1}{2} +\frac{\epsilon}{2}$} (s);
\path[->] (s2) edge [loop above] node [midway, above] {$\frac{1}{2} +\frac{\epsilon}{2} $} (s2);

\node at (1.5,-3.5) {(b) The final LMC.};
\end{tikzpicture}
\end{minipage}
\caption{Example of running the minimisation algorithm using approximate partition refinement (Algorithm~\ref{alg:polynomial-optimistic-partition-refinement}) on the LMC in \cref{fig:intro3}.} \label{fig:example-intro-approximate-partition-refinement}
\end{figure}

\paragraph*{Approximate Partition Refinement}
\begin{algorithm}[h]
	\DontPrintSemicolon
	\KwIn{An LMC $\Hyp = <S, L, \tauHyp, \ell>$, a compression parameter $\errorParam$}
	\KwOut{A partition $X$ over $S$}	
	$i := 0; X_0 := \{S\}$\;
	\Repeat{$X_i = X_{i-1}$}{
		$i := i+1$; $X_i := \emptyset$\;
		\ForEach{$E \in X_{i-1}$}{
			$X_E := \emptyset$\;
			\For{$s \in E$}{
				$ESet := \{E'\in X_{E} \suchthat \text{for all } t\in E' \text{ we have } \ell(s)=\ell(t) \text{ and}\linebreak \|(\tauHyp(s)(E))_{E \in X_{i-1}} - (\tauHyp(t)(E))_{E \in X_{i-1}}\|_1 \le \epsilon_2 \}$\;
				\lIf{$ESet = \emptyset$}{
					$E': = \{s\}$
				}\Else{
					$E' := \arg\min\limits_{E' \in ESet}\big\{ \frac{\sum_{t \in  E'} \|(\tauHyp(s)(E))_{E \in X_{i-1}} - (\tauHyp(t)(E))_{E \in X_{i-1}}\|_1}{|E'|} \big\}$\;
					remove $E'$ from $X_E$; $E' := E' \cup \{ s \}$\;
				}
				add $E'$ to $X_E$ \;
			}
			$X_i := X_i \cup X_E$\;
		}
	}
	\caption{\mbox{Approximate Partition Refinement}}
	\label{alg:polynomial-optimistic-partition-refinement}
\end{algorithm}

Given a compression parameter $\epsilon_{2}$, the approximate partition refinement is shown in Algorithm~\ref{alg:polynomial-optimistic-partition-refinement}. At the beginning, an index $i$ is initialized to zero and we have $X_0 = \{S\}$, that is, all states are in the same set. In a refinement step, we increment $i$ and split each set $E \in X_{i-1}$ into one or more sets. We iterate though all $E \in X_{i-1}$ and for each $E$ we construct a set $X_E$, a partition of $E$. Starting with $X_E = \emptyset$, we iterate over all $s \in E$ (line~6). After each iteration, the current $s \in E$ appears in one set in $X_E$: either as a singleton or as an additional state in an already existing set in $X_E$. We give more details on this loop (lines~6-14) below.
After having partitioned~$E$ into~$X_E$, we add all sets in~$X_E$ to the new partition~$X_i$.
The way we split the sets ensures that for any two states from the same set in~$X_i$ the $L_1$-distance between the successor distributions over $X_{i-1}$ is at most $\epsilon_2$. The algorithm terminates when no splitting can be done.
Let $X$ be the final partition produced by the approximate partition refinement. For any two states $s, t \in E$ where $E \in X$, we have $\ell(s) = \ell(t)$ and $\|(\tauHyp(s)(E'))_{E' \in X} - (\tauHyp(t)(E'))_{E' \in X}\|_1 \le \epsilon_2$.

Let us give more details on the loop (lines~6-14) that partitions an $E \in X_i$.
For a state $s \in E$, a candidate set $\mathit{ESet}$ is computed such that for all $E' \in \mathit{ESet}$ the state $s$ and all $x \in E'$ have the same label and the $L_1$-distance between the successor distributions over $X_{i-1}$ of $s$ and any $x \in E'$ is at most $\epsilon_2$ (line~7). If $\mathit{ESet}$ is empty, we add the singleton $\{s\}$ into $X_E$ (line~8 and~13). If there is only one set $E'$ in $\mathit{ESet}$, we add $s$ to the set $E'$. Otherwise, if there are multiple elements in $\mathit{ESet}$ that satisfy the condition, we select the one as $E'$ such that the average $L_1$-distance between the successor distributions of $s$ and $x \in E'$ is the smallest (line~10). We add $s$ to the selected set $E'$ and include $E'$ in $X_{E}$ (line~10-13).



\paragraph*{Minimisation Algorithm Using Approximate Partition Refinement}

\begin{algorithm}[h]
	\DontPrintSemicolon
	\KwIn{An LMC $\Hyp = <S, L, \tauHyp, \ell>$, a compression parameter $\epsilon_2$}
	\KwOut{An LMC $\Q_{i}$}
	$i := 0$\\
	$\Q_{i} := \M_{i}/_{\sim_{\M_{i}}}$ where $\M_{i} = \Hyp$ and $\Q_{i} = <S^{\Q_{i}}, L, \tau^{\Q_{i}}, \ell^{\Q_{i}}>$ \;
	\Repeat{$|S^{\Q_{i}}| = |S^{\Q_{i-1}}|$}{
		Compute $X_i$ by running Algorithm~\ref{alg:polynomial-optimistic-partition-refinement} with $\Q_{i} $ and $\epsilon_2$ as input\;
		Construct an LMC $\M_{i+1} :=  <X_i, L , \tau^{\M_{i+1}}, \ell^{\M_{i+1}} >$ from $\Q_i$ where $\tau^{\M_{i+1}}(E) := \sum_{u \in  E}  \frac{(\tau^{\Q_{i}}(u)(E'))_{E' \in X_i}}{|E|}$ and $\ell^{\M_{i+1}}(E) := \ell^{\Q_{i}}(x)$ for all $E \in X_i$ and any $x \in E$\;
		$\Q_{i+1} := \M_{i+1} /_{\sim_{\M_{i+1}}}$\;
		$i := i+1$ \;
	}
	\caption{LMC Minimisation by Approximate Partition Refinement}
	\label{alg:approximate-partition-refinement-merge-algorithm}
\end{algorithm}

The minimisation algorithm using approximate partition refinement is shown in \cref{alg:approximate-partition-refinement-merge-algorithm}. The input is the same as the first minimisation algorithm: an LMC $\Hyp$ and a compression parameter $\epsilon_2$. An index $i$ is initialised to $0$. Similar to the approximate minimisation algorithm using local bisimilarity distances, we also start by computing the quotient LMC $\Q_{0} = \Hyp/_{\sim_{\Hyp}}$. It then steps into a loop. We compute the approximate partition $X_i$ of $\Q_{i}$ on line~4 and construct a new LMC $\M_{i+1}$ by setting $X_i$ as its state space on line~5. For any state $E \in X_i$, we set the probability distribution as the average probability distribution over $X_i$ from all $u \in E$. The label of any $E \in X$ is set to $\ell^{\Q_{i}}(u)$ where $u$ can be any state from $E$. A new approximate quotient $\Q_{i+1} $ is obtained by taking the exact quotient of $\M_{i+1}$. We increment $i$ at the end of the iteration and continue another iteration if the size of the state space of the new approximate quotient decreases. Otherwise, the algorithm terminates as we have no states to merge. As there are finitely many states, the algorithm always terminates.


Let $i \in \nat$. The following theorem applies to both the LMCs $\Q_{i}$ from \cref{alg:local-distance-merge-algorithm} and those from \cref{alg:approximate-partition-refinement-merge-algorithm}.
\begin{restatable}{theorem}{theoremBoundGlobalDistanceApproximatePartitionRefinement}\label{theorem:bounding-global-distance}
 For all $i \in \nat$, we have that $\Q_{i+1}$ is an $\epsilon_2$-quotient of $\Q_i$. Furthermore, by the additivity lemma, we have that $\Q_i$ is an $i\epsilon_2$-quotient of $\Hyp$. 
\end{restatable}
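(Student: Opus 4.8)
The plan is to establish the first sentence—that $\Q_{i+1}$ is an $\errorParam$-quotient of $\Q_i$—directly from the definition of $\errorParam$-quotient, and then to obtain the second sentence by an immediate induction on~$i$ using \cref{lemma:additivity-property}: since $\Q_0 = \Hyp/_{\sim_{\Hyp}}$ is a $0$-quotient of $\Hyp$, and since combining ``$\Q_i$ is an $i\errorParam$-quotient of $\Hyp$'' with ``$\Q_{i+1}$ is an $\errorParam$-quotient of $\Q_i$'' through the additivity lemma gives that $\Q_{i+1}$ is an $(i+1)\errorParam$-quotient of $\Hyp$, the entire content lies in the first sentence.

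I would handle both algorithms at once by isolating the single property their partitions $X_i$ share. Write $\bar\mu_E := \frac{1}{|E|}\sum_{v \in E}(\tau^{\Q_i}(v)(E'))_{E' \in X_i}$ for the block-average; one checks that this uniform average equals $\tau^{\M_{i+1}}(E)$ under either algorithm's construction (for \cref{alg:local-distance-merge-algorithm} the block $\{s,t\}$ gives the midpoint, and every other block is constant). The claim is that for every $E \in X_i$ and every $u \in E$ we have $\ell^{\Q_i}$ constant on $E$ and $\|(\tau^{\Q_i}(u)(E'))_{E'\in X_i} - \bar\mu_E\|_1 \le \errorParam$. For \cref{alg:approximate-partition-refinement-merge-algorithm} this follows from the invariant of \cref{alg:polynomial-optimistic-partition-refinement}—pairwise $L_1$-distances of block-distributions inside a block are at most $\errorParam$—by the triangle inequality, since the distance from $u$ to the block-average is at most $\frac{|E|-1}{|E|}\errorParam$. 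For \cref{alg:local-distance-merge-algorithm} the only non-singleton contribution is the block $\{s,t\}$, where $\bar\mu_{\{s,t\}}$ is the midpoint and the distance from either endpoint to it equals $d_{\local}^{\Q_i}(s,t) \le \errorParam$ by \cref{proposition:adjust-transition-function-s-t}; on every other block the states already have identical block-distributions (they are bisimilar in the LMC built by \cref{alg:local-distance-partition}, whose successor distributions agree with those of $\Q_i$), so that distance is $0$.

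Next I would build the perturbation witnessing the $\errorParam$-quotient. Define $\tau' : S^{\Q_i} \to \Dist(S^{\Q_i})$ by requiring, for $u \in E \in X_i$, that $\tau'(u)(E') = \bar\mu_E(E')$ for every $E' \in X_i$, with mass redistributed inside each block so as to keep $\tau'(u)$ as close to $\tau^{\Q_i}(u)$ as possible. The elementary observation is that the cheapest way in $L_1$ to realise a prescribed block-mass profile costs only the total block-mass discrepancy: $\min\|\tau'(u) - \tau^{\Q_i}(u)\|_1 = \sum_{E' \in X_i} |\bar\mu_E(E') - \tau^{\Q_i}(u)(E')| = \|(\tau^{\Q_i}(u)(E'))_{E'} - \bar\mu_E\|_1$, because inside a block one can add or remove exactly the missing mass without moving anything else. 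By the claim this is $\le \errorParam$, so $\tau'$ perturbs $\Q_i$ by at most $\errorParam$ at every state.

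Finally I would identify the exact quotient of $\mathcal{N} := <S^{\Q_i}, L, \tau', \ell^{\Q_i}>$ with $\Q_{i+1}$. Let $\R_i$ be the equivalence relation on $S^{\Q_i}$ whose classes are the blocks of $X_i$. By construction $\R_i$ is a probabilistic bisimulation of $\mathcal{N}$: states in one class share a label and, by the choice of $\tau'$, send mass $\bar\mu_E(E')$ to each class $E'$. Hence $\mathcal{N}/_{\R_i}$ is well defined and its transition function maps $E$ to $\bar\mu_E$, which is exactly $\tau^{\M_{i+1}}$, so $\mathcal{N}/_{\R_i} = \M_{i+1}$. Invoking the standard fact that for a bisimulation $\R$ of an LMC $\mathcal{N}$ one has $(\mathcal{N}/_{\R})/_{\sim} = \mathcal{N}/_{\sim_{\mathcal{N}}}$ (because $\R_i \subseteq \mathord{\sim_{\mathcal{N}}}$), we conclude $\mathcal{N}/_{\sim_{\mathcal{N}}} = \M_{i+1}/_{\sim_{\M_{i+1}}} = \Q_{i+1}$. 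Thus $\Q_{i+1}$ is the exact bisimulation quotient of $\mathcal{N}$, which differs from $\Q_i$ by at most $\errorParam$ in $L_1$ at each state—precisely the definition of an $\errorParam$-quotient. The main obstacle is this last identification: the algorithms produce $\M_{i+1}$ on the reduced state space $X_i$, whereas the definition of $\errorParam$-quotient demands a perturbation on the full state space $S^{\Q_i}$ followed by an exact quotient, so the crux is to show that lumping-then-quotienting coincides with perturb-then-quotient, which is where the within-block redistribution and the commuting-quotient fact do the work.
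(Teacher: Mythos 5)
Your proposal is correct and takes essentially the same approach as the paper: the paper splits the first sentence into two lemmas, \cref{lemma:small-global-distance-loop} and \cref{lemma:small-global-distance-loop-approximate-partition-refinement}, each of which builds exactly your witness perturbation by redistributing mass within blocks (your ``elementary observation'' is the paper's \cref{lemma:adjust-probability-distribution-exists}), bounds the per-state $L_1$ cost by $\epsilon_2$ via the midpoint/average property, and identifies the exact quotient of the perturbed LMC with $\Q_{i+1}$; the induction via the additivity lemma for the second sentence is identical. Your unification of the two algorithms under a single block-average invariant is a presentational streamlining of the paper's two separate lemmas rather than a genuinely different argument.
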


In the case that $\Hyp=<S, L, \tauHyp, \ell>$ is a slightly perturbed version of $\M = <S, L,\tau,\ell>$, that is, for all $s \in S$ we have $\|\tau(s) - \tauHyp(s)\|_1 \le \epsilon$, the following corollary holds:
\begin{restatable}{corollary}{corollaryBoundQuotientError}\label{corollary:bounding-quotient-error}
 For all $i \in \nat$, we have that $\Q_i$ is an $(\epsilon + i\epsilon_2)$-quotient of $\M$.
\end{restatable}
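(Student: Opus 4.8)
The plan is to reduce the corollary to two applications of the additivity lemma (\cref{lemma:additivity-property}), inserting the exact quotient $\Q_0 = \Hyp/_{\sim_{\Hyp}}$ as an intermediate LMC between $\M$ and $\Q_i$. Concretely, I want a chain $\M \to \Q_0 \to \Q_i$ in which $\Q_0$ is an $\epsilon$-quotient of $\M$ and $\Q_i$ is an $(i\epsilon_2)$-quotient of $\Q_0$; additivity then delivers the bound $\epsilon + i\epsilon_2$.

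The first and only genuinely new step is to observe that $\Q_0$ is itself an $\epsilon$-quotient of $\M$. Here $\M = <S, L, \tau, \ell>$ and $\Hyp = <S, L, \tauHyp, \ell>$ share the same state space, label set and labelling function, and the perturbation hypothesis gives $\|\tauHyp(s) - \tau(s)\|_1 \le \epsilon$ for every $s \in S$. Taking the witness transition function $\tau' := \tauHyp$ in the definition of an $\epsilon$-quotient, the associated perturbed LMC $<S, L, \tauHyp, \ell>$ is exactly $\Hyp$, so its exact bisimulation quotient is $\Q_0 = \Hyp/_{\sim_{\Hyp}}$. Thus $\Q_0$ satisfies the definition of an $\epsilon$-quotient of $\M$ verbatim. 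This is the only place the perturbation assumption enters, and I expect it to be the conceptual crux of the argument, even though it is short.

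Next I would invoke the first part of \cref{theorem:bounding-global-distance}, namely that $\Q_{j+1}$ is an $\epsilon_2$-quotient of $\Q_j$ for every $j \in \nat$. Chaining these relations along $\Q_0, \Q_1, \dots, \Q_i$ and applying \cref{lemma:additivity-property} inductively yields that $\Q_i$ is an $(i\epsilon_2)$-quotient of $\Q_0$. (Note that I use the per-step statement relative to $\Q_0$ rather than the aggregate statement relative to $\Hyp$, precisely so that the common endpoint with the previous paragraph is $\Q_0$.)

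Finally I would apply \cref{lemma:additivity-property} once more, to the three LMCs $\M$, $\Q_0$ and $\Q_i$: since $\Q_0$ is an $\epsilon$-quotient of $\M$ and $\Q_i$ is an $(i\epsilon_2)$-quotient of $\Q_0$, the lemma gives that $\Q_i$ is an $(\epsilon + i\epsilon_2)$-quotient of $\M$, which is the claim. The whole proof is therefore a two-layer application of additivity, the substantive ingredient being the identification of $\Q_0$ as an $\epsilon$-quotient of $\M$; the remainder is bookkeeping over the iteration indices.
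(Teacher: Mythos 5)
Your proposal is correct and follows essentially the same route as the paper: the paper's proof likewise observes that $\Q_0$ is an $\epsilon$-quotient of $\M$ (which you justify in detail via the witness $\tau' := \tauHyp$, the step the paper dismisses as ``not hard to see'') and then combines \cref{theorem:bounding-global-distance} with \cref{lemma:additivity-property} to conclude. Your care in chaining the per-step statement through $\Q_0$ rather than through $\Hyp$ (which is not itself a quotient, so additivity could not be applied with $\Hyp$ as the middle LMC) is exactly the right bookkeeping and matches the paper's intent.
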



\section{Active LMC Learning}
\label{section:active-LMC-learning}
We apply our approximate minimisation algorithms in a setting of active learning. Before that, we first describe how to obtain a perturbed LMC $\Hyp$ by sampling. Assume that we want to learn the transition probabilities of an LMC $\M$, that is, the state space, the labelling and the transitions are known. We also assume the system under learning (SUL) $\M$ could answer the query $\nxt$ which takes a state $s$ as input and returns a successor state of $s$ according to the transition probability distribution $\tau(s)$.

Given a state $s$ of the LMC. We denote by $x_s$ the number of successor states of $s$ and by~$n_s$ the number of times we query the SUL on $\nxt(s)$. Let $N_{s,t}$ be the frequency counts of the query result $t$, that is, the number of times a successor state $t$ appears as the result returned by the queries. We approximate the transition probability distribution by $\tauHyp(s)$ where $\tauHyp(s)(t) = \frac{N_{s, t}}{n_s}$ for all successor states $t$ of $s$. (Such an estimator is called an empirical estimator in the literature.)

Intuitively, the more queries we ask the SUL, the more accurate the approximate probability distribution $\tauHyp(s)$ would be. In fact, the following theorem holds~\cite[Section~6.4]{BazilleGenestJegourelSun2020},~\cite{Chen2015}.

\begin{theorem}\label{theorem:sampling-size} Let $\epsilon \gr 0$ be an error parameter and $\delta \gr 0$ be an error bound. Let $s \in S$. We have $\Pr(\|\tau(s) - \tauHyp(s)\|_1 \le \epsilon) \ge 1 -\delta$ for $n_s \ge \frac{1}{2 \epsilon^{2}}\ln(\frac{2x_s}{\delta})$.
\end{theorem}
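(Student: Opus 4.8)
The plan is to treat the empirical estimator $\tauHyp(s)$ as a vector of independent sample averages, apply a standard tail bound coordinate-wise, and aggregate over the successors of $s$ by a union bound. First I would fix the state $s$ and note that the $n_s$ answers returned by the queries $\nxt(s)$ are i.i.d.\ draws from the distribution $\tau(s)$. Consequently, for each successor state $t$ the count $N_{s,t}$ is $\mathrm{Binomial}(n_s,\tau(s)(t))$, and
$\tauHyp(s)(t)=\frac{N_{s,t}}{n_s}=\frac{1}{n_s}\sum_{j=1}^{n_s}\mathbf{1}[\text{$j$th draw}=t]$
is the average of $n_s$ i.i.d.\ $\{0,1\}$-valued random variables with mean $\tau(s)(t)$; in particular it is an unbiased estimator of $\tau(s)(t)$.

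Next I would apply Hoeffding's inequality to each coordinate. Since each indicator lies in $[0,1]$, the two-sided bound gives $\Pr(|\tauHyp(s)(t)-\tau(s)(t)|>\epsilon)\le 2\exp(-2n_s\epsilon^2)$ for every successor $t$. A union bound over the $x_s$ successors of $s$ then bounds the probability that \emph{some} coordinate deviates by more than $\epsilon$ by $2x_s\exp(-2n_s\epsilon^2)$. Requiring $n_s\ge\frac{1}{2\epsilon^2}\ln\frac{2x_s}{\delta}$ makes this at most $\delta$, which accounts exactly for the factor $\frac{2x_s}{\delta}$ inside the logarithm and the constant $\frac{1}{2\epsilon^2}$ in the statement; the complementary event is the desired concentration guarantee.

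The delicate point — and the reason the statement is attributed to \cite{BazilleGenestJegourelSun2020,Chen2015} rather than following in a line from Hoeffding — is that one ultimately wants to control the aggregate distance $\|\tau(s)-\tauHyp(s)\|_1$, i.e.\ the sum of the $x_s$ coordinate deviations, rather than any single coordinate. Writing $\|\tau(s)-\tauHyp(s)\|_1=2\max_{A\subseteq S}\big(\tauHyp(s)(A)-\tau(s)(A)\big)$ exposes the difficulty: the maximising set $A$ is itself sample-dependent, so the union bound must range over candidate sets, and naively taking all $2^{x_s}$ subsets would blow the bound up exponentially in the number of successors. The work in the cited references lies precisely in carrying out this aggregation with only mild overhead, keeping the sample size at $\frac{1}{2\epsilon^2}\ln\frac{2x_s}{\delta}$. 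I expect this aggregation step, rather than the routine coordinate-wise concentration, to be the main obstacle; once it is in place, solving $2x_s\exp(-2n_s\epsilon^2)\le\delta$ for $n_s$ is immediate and yields the stated sample size.
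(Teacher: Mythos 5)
Your proposal cannot be checked against an internal proof, because the paper does not prove this theorem at all: it is imported, with citations, from Bazille et al.\ (2020) and Chen (2015). So the only question is whether your argument establishes the statement, and it does not. Your first two paragraphs (coordinate-wise Hoeffding plus a union bound over the $x_s$ successors) are correct, but what they prove is the \emph{max-norm} guarantee: with probability at least $1-\delta$, every coordinate satisfies $|\tauHyp(s)(t)-\tau(s)(t)|\le\epsilon$, and the threshold $n_s\ge\frac{1}{2\epsilon^2}\ln\frac{2x_s}{\delta}$ is exactly what that computation requires. This yields only $\|\tau(s)-\tauHyp(s)\|_1\le x_s\,\epsilon$, not $\le\epsilon$, so the claim at the end of your second paragraph that the complementary event ``is the desired concentration guarantee'' is wrong, as you yourself then concede in the third paragraph.

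The genuine gap is the aggregation step you defer to the cited references, and that deferral cannot be cashed in: no argument can close it at this sample size, because the $L_1$ statement with only $\ln x_s$ inside the logarithm is false. For the empirical estimator and $\tau(s)$ uniform over its $x_s$ successors, $\mathbb{E}\|\tauHyp(s)-\tau(s)\|_1$ is of order $\sqrt{x_s/n_s}$, and by McDiarmid's inequality the $L_1$ error fluctuates around this mean by only $O\big(\sqrt{\ln(1/\delta)/n_s}\big)$; hence with $n_s=\frac{1}{2\epsilon^2}\ln\frac{2x_s}{\delta}$ and $x_s\gg\ln(2x_s/\delta)$ the event $\|\tauHyp(s)-\tau(s)\|_1>\epsilon$ has probability close to $1$, not below $\delta$. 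The correct $L_1$ concentration bound (Weissman, Ordentlich, Seroussi, Verd\'u, Weinberger 2003) is $\Pr\big(\|\tauHyp(s)-\tau(s)\|_1\ge\epsilon\big)\le(2^{x_s}-2)e^{-n_s\epsilon^2/2}$, obtained precisely by the union bound over subsets $A$ that you dismissed as an exponential blow-up; it forces $n_s\ge\frac{2}{\epsilon^2}\ln\frac{2^{x_s}}{\delta}$, i.e.\ sample size linear in $x_s$, and by the lower bound just sketched this linear dependence is unavoidable. In short: your Hoeffding-plus-union-bound calculation is the correct source of the constants appearing in the theorem, but it proves the $L_\infty$ version; the $L_1$ version as printed does not follow from it and is not attainable at that sample size, so your plan cannot be completed — the discrepancy is a defect in the theorem's transcription rather than something a cleverer aggregation argument could repair.
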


For each state $s \in S$, we query the SUL on $\nxt(s)$ for $n_s \ge \frac{1}{2 \epsilon^{2}}\ln(\frac{2x_s}{\delta})$ times. We can make $\delta$ small since it appears in the logarithmic term. We then approximate the transition function by $\tauHyp$ and construct a hypothesis LMC $\Hyp = <S, L, \tauHyp, \ell>$. Since the queries $\nxt(s)$ and $\nxt(t)$ for all $s,t \in S$ and $s \not= t$ are mutually independent, by Theorem~\ref{theorem:sampling-size}, we have that $\Pr(\forall s\in S: \|\tau(s) - \tauHyp(s)\|_1 \le \epsilon) \ge (1-\delta)^{|S|} $.

We then apply the minimisation algorithms with compression parameter $\epsilon_2$ on $\Hyp$ and obtain a minimised system $\Q_{i}$ which is an $i\epsilon_2$-quotient of $\Hyp$, the LMC constructed by sampling. Since with high probability the LMC $\Hyp$ (or its exact quotient $\Q_{0}$) has small distance $\epsilon$ with the SUL $\M$, it follows from \cref{corollary:bounding-quotient-error} that with high probability the minimised system $\Q_{i}$ is an $\epsilon'$-quotient of $\M$ where $\epsilon'$ is small:  for all $i \in \nat$, we have $\Pr(\Q_i \text{ is an }\epsilon'\text{-quotient of } \M \text{ with } \epsilon' \le \epsilon + i \epsilon_2) \ge (1-\delta)^{|S|}$. The probability does not come from our minimisation algorithms and depends solely on the sampling procedure.



\section{Experiments}
\label{section:experiments}
In this section, we evaluate the performance of approximate minimisation algorithms on a number of LMCs. These LMCs model randomised algorithms and probabilistic protocols that are part of the probabilistic model checker PRISM \cite{KNP11}. The LMCs we run experiments on have less than $100,000$ states and model the following protocols or randomised algorithms: Herman's self-stabilisation algorithm \cite{Her90}, the synchronous leader election protocol by Itai and Rodeh \cite{ItaiR90}, the bounded retransmission protocol \cite{DarhenioJJL01}, the Crowds protocol \cite{ReiterR98} and the contract signing protocol by Even, Goldreich and Lempel \cite{EvenGL85}.

We implemented algorithms to obtain the slightly perturbed LMCs $\Hyp$. We call LMCs with fewer than $300$ states small; otherwise we call them large. For small LMCs, we sample the successor distribution for each state and obtain an approximation of it with error parameter $\epsilon$ and error bound $\delta$. For large LMCs, sampling is not practical as the sample size required by \cref{theorem:sampling-size} is very large. For these LMCs, we perturb the successor distribution by adding small noise to the successor transition probabilities so that for each state with at least probability $1-\delta$ the $L_1$-distance of the successor distributions in the perturbed and unperturbed systems is at most $\epsilon$ and otherwise the $L_1$-distance is $2\epsilon$. We vary the error parameter $\epsilon$ in the range of $\{0.00001, 0.0001, 0.001, 0.01\}$ and fix the error bound $\delta = 0.01$. For each unperturbed LMC and a pair of $\epsilon$ and $\delta$, we generate $5$ perturbed LMCs.

We also implemented the two minimisation algorithms in Java: \cref{alg:local-distance-merge-algorithm} and \cref{alg:approximate-partition-refinement-merge-algorithm}. The source code is publicly available\footnote{\url{https://github.com/qiyitang71/approximate-quotienting}}. We show some representative results in \cref{appendix:more-results}. The full experimental results are publicly available\footnote{\url{https://bit.ly/3vcpblY}}.

\begin{table}[t]
\begin{tabularx}{\textwidth}{@{}X@{}X@{}}
	\noindent\begin{tabular}{|c|c|c|c|}
			\hline 
			\multirow{1}{*}{\shortstack[l]{Herman5}}&
			\# states&	\# trans&	\# iter\\
			\hline 						
			$\M$ \& $\Hyp$	&		 		32	  &			244				&\\
			$\M/_{\sim_{\M}}$	&		 		4	  &			11				&\\		
			$\Hyp/_{\sim_{\Hyp}}$      & 		23  &	       167             & \\
			\hline
			\multicolumn{4}{|c|}{Perturbed LMC \#1}\\
			\hline
			\multicolumn{4}{|c|}{$\epsilon_2 = 0.00001$}\\
			\hline
			local \& apr     & 23 & 167 & 0  \\
			\hline
			\multicolumn{4}{|c|}{$\epsilon_2 = 0.0001$}\\
			\hline
			local \& apr     & 22 & 143 & 1  \\
			\hline
			\multicolumn{4}{|c|}{$\epsilon_2 \in \{0.001, 0.01, 0.1\}$}\\
			\hline
			local    & 22 & 143 & 1    \\
			\rowcolor{yellow}
			apr		   & 4 & 11 & 1   \\
			\hline
			\rowcolor{white}
			\multicolumn{4}{c}{}\\			
		\end{tabular}
	~
		\noindent\begin{tabular}{|c|c|c|c|}
		\hline
		\multirow{1}{*}{\shortstack[l]{BRP32-2
		}} &
		\# states&	\# trans&	\# iter\\
		\hline
		$\M$ \& $\Hyp$ &	1349	 &			1731    & \\
		$\M/_{\sim_{\M}}$	&		 		647	  &			903				&\\								
		$\Hyp/_{\sim_{\Hyp}}$     & 		961	   & 	   	  1343        & \\
		\hline
		\multicolumn{4}{|c|}{Perturbed LMC \#1}\\
		\hline
		\multicolumn{4}{|c|}{$\epsilon_2 = 0.00001$}\\
		\hline
		apr		&   879 & 1230 & 2 \\
		\hline
		\multicolumn{4}{|c|}{$\epsilon_2 =  0.0001$}\\
		\hline
		apr		 &  705 & 986 & 2 \\
		\hline
		\multicolumn{4}{|c|}{$\epsilon_2 \in \{0.001, 0.01\}$}\\
		\hline
		\rowcolor{yellow}
		apr		 &  647 & 903 & 1\\
		\rowcolor{white}
		\hline
		\multicolumn{4}{|c|}{$\epsilon_2 = 0.1$}\\
		\hline
		\rowcolor{red!40}
		apr		 & 196  & 387 & 1 \\
		\hline
	\end{tabular}	
\end{tabularx}
\caption{In the tables, local and apr stand for the minimisation algorithms using local bisimilarity distance and approximate partition refinement, respectively. The tables show the results for the first perturbed LMC (labeled with \#1) among the five perturbed LMCs generated by sampling or perturbing with $\epsilon =  0.0001$. (Left) Results of running the two minimisation algorithms on the LMC that models Herman's self-stabilisation algorithm with $5$ processes. (Right) Results of running apr on the LMC that models the bounded retransmission protocol with $N =32$ and $\mathit{MAX} = 2$. } \label{table:results}
\end{table}


For the small LMCs, we apply both approximate minimisation algorithms to the perturbed LMCs with $\epsilon_2 \in \{0.00001, 0.0001, 0.001, 0.01, 0.1\}$. The results for a small LMC which models the Herman's self-stabilisation algorithm is shown on the left of \cref{table:results}. For the large LMCs, we only apply the approximate minimisation algorithm using approximate partition refinement to the perturbed LMCs, since the other minimisation algorithm could not finish on the large LMCs with timeout of two hours. The results for a large LMC which models the bounded retransmission protocol is shown on the right of \cref{table:results}.   

For almost all models, given a perturbed LMC, we are able to recover the structure of the quotient of the unperturbed LMC when $\epsilon_2$ is appropriately chosen, that is, $\epsilon_2$ is no less than $\epsilon$ and is not too big; for example, see \cref{table:results} where the rows are highlighted in yellow. However, when $\epsilon_2$ is too big, the approximate minimisation algorithms may aggressively merge some states in the perturbed LMC and result in a quotient \modify{whose size is even smaller than that of the quotient of the unperturbed LMC}, as highlighted in red in \cref{table:results}. Also, we find that,  as expected, the exact partition refinement in general could not recover the structure of quotient of the original LMCs, except for the LMCs which model the synchronous leader election protocol by Itai and Rodeh. Furthermore, compared to the other approximate minimisation algorithm using the local bisimilarity distance, the one using approximate partition refinement performs much better in terms of running time and the ability to recover the structure of the quotient of the original model.


One might ask whether the minimisation algorithm using approximate partition refinement always performs better than the one using the local bisimilarity distances. \modify{In general, this is not the case as shown by \cref{example:local-merging-better}. 

\begin{example}\label{example:local-merging-better}	
	Consider the LMC $\M = <S, L, \tau, \ell>$ shown in \cref{fig:example-local-merging-better}. Let $\epsilon_2 = 0.1$. First, we run Algorithm~\ref{alg:local-distance-merge-algorithm}. It proceeds in two iterations. In the first iteration, it computes the local bisimilarity distances for all pairs of states with the same label. We have $d_{\local}(s_1, s_2) = d_{\local}(s_2, s_3)= 0.54$ and $d_{\local}(s_1, s_3) = 0.04$. It then selects the pair $s_1$ and $s_3$ of which the local bisimilarity distance is less than $\epsilon_2$ and is the smallest. These two states are merged into $s_{13}$ in the LMC shown on the left of \cref{fig:example-local-merging-better2}. In the second iteration, the only pair of states with the same label are $s_{13}$ and $s_2$. Since $d_{\local}(s_{13}, s_2) = 0.06 \le \epsilon_2$, they are merged and we arrive at the final LMC shown on the right of \cref{fig:example-local-merging-better2}.
	
	Next, we run \cref{alg:approximate-partition-refinement-merge-algorithm} with the same inputs. In the first iteration, we run approximate partition refinement on line~5 (\cref{alg:polynomial-optimistic-partition-refinement}) and present \cref{tab:example-approximate-partition-refinement} as the possible partitions of the algorithm. At the beginning of the approximate partition refinement, we have partition $X_0$ as all states are in the same set. The states are then split by the labels and we get partition $X_1$. Next, we work on the set $\{s_1, s_2, s_3\}$. Suppose that we see $s_1$ and $s_2$ before $s_3$. We have $s_1$ and $s_2$ remain together as $\|(\tau(s_1)(E))_{E \in X_1}  - (\tau(s_2)(E))_{E \in X_1}\|_1 = 0.08 \le \epsilon_2$. However, since $\|(\tau(s_3)(E))_{E \in X_1}  - (\tau(s_2)(E))_{E \in X_1}\|_1 = 0.16 \gr \epsilon_2$, we have $ESet = \emptyset$ for $s_3$ on line~9 of Algorithm~\ref{alg:polynomial-optimistic-partition-refinement} and it is split out. In the next iteration, since $\|(\tau(s_1)(E))_{E \in X_2}  - (\tau(s_2)(E))_{E \in X_2}\|_1 = 0.54 \gr \epsilon_2$, $\{s_1, s_2\}$ is split into two singleton sets. The final partition $X_3$ in which all sets are singletons suggests no merging can be done and we are left with the original LMC $\M$.
	
	This example also shows that the order of iterating through the states matters for the approximate partition refinement algorithm. Indeed, suppose we iterate though $s_1$ and $s_3$ before $s_2$ after arriving at the partition $X_1$, we will have \cref{tab:example-approximate-partition-refinement2} as the partitions and finally get the  LMC on the right of \cref{fig:example-local-merging-better2} just as the other minimisation algorithm. 
	\qed
	
\end{example}
}

\begin{minipage}{0.5\textwidth}
	
	\begin{tikzpicture}[xscale=.6,>=latex',shorten >=1pt,node distance=3cm,on grid,auto]
	
	\node[state] (s) at (-6,0) {$s_1$};
	\node[state] (t) at (-2,0) {$s_2$};
	\node[state] (u) at (2,0) {$s_3$};
	\node[state, fill=green] (v) at (-2,-1.5) {$v$};
	
	\path[->] (s) edge [out=20,in=160] node [midway, above] {$0.5$} (u);
	\path[->] (s) edge node [midway, left, xshift=-0.1cm] {$0.5$} (v);
	
	\path[->] (t) edge node [midway, left] {$0.46$} (v);
	\path[->] (t) edge node [midway, above]  {$0.54$} (s);
	
	\path[->] (u) edge node [midway, right, xshift=0.1cm] {$0.54$} (v);
	\path[->] (u) edge [loop right] node [midway, right] {$0.46$} (u);
	
	\path[->] (v) edge [out=-70, in=-110, looseness=4] node [near start, right] {$1$} (v);
	\end{tikzpicture}
	\captionof{figure}{The LMC for which Algorithm~\ref{alg:local-distance-merge-algorithm} may perform better than Algorithm~\ref{alg:approximate-partition-refinement-merge-algorithm}.}
	\label{fig:example-local-merging-better}
\end{minipage}
\hfill
\begin{minipage}{0.45\textwidth}
	\centering	\vspace{0.1cm}
	\begin{tabular}{l}
		\\\\\\
		\hline
		$X_0 = \{S\}$ \\
		$X_1 = \big\{  \{s_1,s_2,s_3\} , \{v\} \big\}$\\
		$X_2 = \big\{  \{s_1, s_2\}, \{s_3\} , \{v\} \big\}$\\
		$X_3 = \big\{  \{s_1\},\{s_2\}, \{s_3\} , \{v\} \big\}$\\
		\hline
		\\\\
	\end{tabular}
	\captionof{table}{Example of running Algorithm~\ref{alg:polynomial-optimistic-partition-refinement} on the LMC in \cref{fig:example-local-merging-better}. (Suppose we iterate through $s_1$ and $s_2$ before $s_3$.)} \label{tab:example-approximate-partition-refinement}
\end{minipage}

\begin{minipage}{0.5\textwidth}
	\begin{tikzpicture}[xscale=.6,>=latex',shorten >=1pt,node distance=3cm,on grid,auto]
	
	\node[state] (s13) at (1.5,0) {$s_{13}$};
	\node[state] (s2) at (6.5,0) {$s_2$};
	\node[state, fill=green] (v1) at (4,-1.5) {$v$};
	
	\path[->] (s13) edge node [midway, left, xshift=-0.1cm] {$0.52$} (v1);
	\path[->] (s13) edge [loop left] node [near end, left] {$0.48$} (s13);
	
	\path[->] (s2) edge node [midway, left] {$0.46$} (v1);
	\path[->] (s2) edge node [midway, above]  {$0.54$} (s13);
	
	\path[->] (v1) edge [out=-70, in=-110, looseness=4] node [near start, right] {$1$} (v1);
	
	\node[state] (ss) at (8.5,0) {$s_{123}$};
	\node[state, fill=green] (vv) at (8.5,-1.5) {$v$};
	
	\path[->] (ss) edge node [midway, right] {$0.49$} (vv);
	\path[->] (ss) edge [loop right] node [midway, right] {$0.51$} (ss);
	\path[->] (vv) edge [out=-70, in=-110, looseness=4] node [near start, right] {$1$} (vv);
	
	\end{tikzpicture}
	\captionof{figure}{Two Steps of Running Algorithm~\ref{alg:local-distance-merge-algorithm}.}\label{fig:example-local-merging-better2}
\end{minipage}
\hfill
\begin{minipage}{0.45\textwidth}
	\centering\vspace{0.2cm}
	\begin{tabular}{l}
		\\\\
		\hline
		$X_0 = \{S\}$ \\
		$X_1 = \big\{  \{s_1,s_3,s_2\} , \{v\} \big\}$\\
		$X_2 = \big\{  \{s_1, s_3\}, \{s_2\} , \{v\} \big\}$\\
		\hline
		\\
	\end{tabular}
	\captionof{table}{Example of running Algorithm~\ref{alg:polynomial-optimistic-partition-refinement} on the LMC in \cref{fig:example-local-merging-better}. (Suppose we iterate through $s_1$ and $s_3$ before $s_2$.)} \label{tab:example-approximate-partition-refinement2}
\end{minipage}

\section{Conclusion}\label{section:conclusion}

We have developed and analysed algorithms for minimising probabilistic systems via approximate bisimulation.
These algorithms are based on $\epsilon$-quotients, a novel yet natural notion of approximate quotients.
We have obtained theoretical bounds on the discrepancy between the minimised and the non-minimised systems.
In our experiments, approximate partition refinement does well in minimising labelled Markov chains with perturbed  transition probabilities,
suggesting that approximate partition refinement is a practical approach for ``recognising'' and exploiting approximate bisimulation.

Future work might consider the following questions: Does approximate minimisation allow for further forms of active learning? Can our techniques be transferred to Markov decision processes?

\bibliography{paper}

\newpage\appendix\label{section:appendix}
\section{Proofs of \cref{section:approximate-quotient-properties}}\label{appendix: global-bisimulation-properties}

We denote by $\SubDist(S)$ the set of probability subdistributions on~$S$. The ceiling of a real number $r$, denoted by $\lceil r\rceil$, is the least integer $n$ such that $n \ge r$.

\theoremMinimumApproximateQuotientNPComplete*

\begin{proof}
	We first show that this problem is in {\sf NP}. If there is an $\errorParam$-quotient of $\Hyp =<S, L,\tauHyp,\ell>$ of size $k$ which we denote by $\Q$, by definition there is an LMC $\Hyp' = <S, L, \tau', \ell>$ such that $\Hyp'$ and $\Q$ are probabilistic bisimilar. It follows that the state space of $\Hyp'$ can be partitioned into $k$ sets and each set is a probabilistic bisimulation induced equivalence class. We summarise the idea into the following nondeterministic algorithm:  first guess a partition of the state space $S$ into $k$ sets, $E_1, \cdots, E_k$, and then check a) each subset $E_i$ is a probabilistic bisimulation induced equivalence class of $\Hyp'$; b) $\|\tau'(s)-\tauHyp(s)\|_1 \le \errorParam$ for all $s \in S$. This amounts to a feasibility test of the linear program:

	\begin{align*}
	\exists \tau': S\times S \to [0, 1] \text{ such that } 
	&\textstyle\sum_{v \in S} \tau'(u)(v) = 1 \text{ for all $u\in S$ and}\\
	&\|\tauHyp(u) -\tau'(u)\|_1 \le \errorParam \text{ for all $u \in S$ and}\\
	&\tau'(u)(E_j) = \tau'(v)(E_j) \text{ for all $u, v \in E_i$ and all $E_j$},
	\end{align*}
	and hence can be decided in polynomial time.  
	
	The Subset Sum problem is polynomial-time many-one reduction to this problem, hence it is {\sf NP}-hard. 
	
	Given an instance of Subset Sum $<P, N>$ where $P =\{p_1, \cdots, p_n\}$ and $N \in \nat$, we construct an LMC; see \cref{fig:reductionfromSubsset}. Let $T = \sum_{p_i \in P} p_i$. Let $\epsilon = \errorParam = \frac{1}{2T}$ and $k = 5$. In the LMC, state $s$ transitions to state $s_i$ with probability $p_i / T$ for all $1 \le  i \le n$. Each state $s_i$ transitions to $s_a$ and $s_b$ with equal probabilities. State $t$ transitions to $t_1$ and $t_2$ with probability $N / T$ and $1 - N / T$, respectively. State $t_1$ (resp. $t_2$) transitions to $t_a$ (resp. $t_b$) and $t_b$ (resp. $t_a$) with probability $\frac{1}{2} - \epsilon$ and $\frac{1}{2} + \epsilon$, respectively. All the remaining states transition to the successor state with probability one. States $s_b$ and $t_b$ have label $b$ and all other states have label $a$.  
	
	Next, we show that
	$$<S, N> \in {\mbox{Subset Sum}} \iff 
	\text{there is an } \epsilon\text{-quotient of }\Hyp\text{ of size is }k.$$
	
	($\implies$)
	Let $Q \subseteq P$ be the set such that $\sum_{p_i \in Q} p_i = N$. Let us define $\tau': S \to \Dist(S)$ as 
	
	$
	\tau'(u)(v) = \left \{
	\begin{array}{ll}
	\frac{(1-\epsilon)}{2} & \mbox{if $u \in Q, v = s_a$  or $u = t_1, v = t_a$}\\
	\frac{(1+\epsilon)}{2} & \mbox{if $u \in Q, v = s_b$ or $u = t_1, v = t_b$}\\
	\frac{(1+\epsilon)}{2} & \mbox{if $u \not\in Q, v = s_a$ or $u = t_2, v = t_a$}\\
	\frac{(1-\epsilon)}{2} & \mbox{if $u \not\in Q, v = s_b$ or $u = t_2, v = t_b$}\\
	\tau(u)(v) & \mbox{otherwise}\\
	\end{array}
	\right .
	$.
	
	We have $\|\tau'(u) - \tauHyp(u)\|_1 \le \epsilon$ for all $u \in S$.	Let $S_Q = \{s_i \in S \suchthat p_i \in Q\}$ and $S_{\overline{Q}} = \{ s_i \in S \suchthat p_i \not\in Q \}$. It is easy to verify that in the LMC $\Hyp' = <S,L,\tau',\ell>$, we have $s_i \sim_{\Hyp'} t_1$ for state $s_i \in S_Q$ and $s_i \sim_{\Hyp'} t_2$ for states $s_i \in S_{\overline{Q}}$. Since $\tau'(s)(S_Q) = \sum_{s_i \in S_Q} \tau'(s)(s_i) = \frac{N}{T} = \tau'(t)(t_1)$ and $\tau'(s)(S_{\overline{Q}}) = \sum_{s_i \in S_{\overline{Q}}} \tau'(s)(s_i) = 1- \frac{N}{T} = \tau'(t)(t_2)$, we have $s \sim_{\Hyp'} t$. There are five probabilistic bisimulation classes of $\Hyp'$: $\{s, t\}$, $S_Q \cup \{t_1\}$, $S_{\overline{Q}} \cup \{t_2\}$, $\{s_a, t_a\}$ and $\{s_b, t_b\}$. It follows that the exact quotient of $\Hyp'$ has five states and it is an $\epsilon$-quotient of $\Hyp$.
	
	($\impliedby$)
	Assume there is an LMC $\Q$ with five states and it is an $\epsilon$-quotient of $\Hyp$. By definition, 	there is a probabilistic transition function $\tau'$ such that $\|\tau'(s)-\tauHyp(s)\|_1 \le \epsilon$ for all $s \in S$ and the LMC $\Hyp' = <S, L, \tau', \ell>$ and $\Q$ are probabilistic bisimilar.
	
	Since any two of the five states $t$, $t_1$, $t_2$, $t_a$ and $t_b$ are not probabilistic bisimilar in $\Hyp'$, each of them will be in a different probabilistic bisimulation class of $\Hyp'$. Since $\|\tau(t_1) - \tau(t_2)\|_1 = 4\epsilon$, we have $t_1 \not\sim_{\Hyp'} t_2$. It is not hard to see that the state $s$ of $\Hyp'$ belongs to the probabilistic bisimulation class that contains $t$. Let $S_1$ be the set of states $s_i$'s such that $s_i \sim_{\Hyp'} t_1$ and $S_2$ be the set of states $s_i$'s such that $s_i \sim_{\Hyp'} t_2$. Let $Q = \{p_i \in P \suchthat s_i \in S_1 \}$. Then, $\tau'(s)(S_1) = \tau'(t)(t_1)$.
	
	Since $\|\tau'(s) - \tauHyp(s)\|_1 \le \epsilon$, we have \begin{equation}
	\label{eqn:np-eq1}
	|\tau'(s)(S_1) -\tauHyp(s)(S_1)| \ls \epsilon.
	\end{equation}
	
	It is not possible to have equality in \eqref{eqn:np-eq1}. Towards a contradiction, assume that we could have $|\tau'(s)(S_1) -\tauHyp(s)(S_1)| = \epsilon$. Without loss of generality, assume $\tau'(s)(S_1) -\tauHyp(s)(S_1) = \epsilon$. We have that 
	\begin{eqnarray*}
		&& \epsilon\\		
		&=& \tau'(s)(S_1) -\tauHyp(s)(S_1) \\
		&=& -1 + 1+ \tau'(s)(S_1) -\tauHyp(s)(S_1) \\
		&=& (\tau'(s)(S_1) -1) + (1-\tauHyp(s)(S_1)) \\
		&=& -\tau'(s)(S_2) + \tauHyp(s)(S_2) 
	\end{eqnarray*}
	Thus,  $|\tauHyp(s)(S_2)  -\tau'(s)(S_2)| = \epsilon$.  It contradicts $\|\tau'(s) -\tauHyp(s)\|_1 \le \epsilon$ as $\|\tau'(s) -\tauHyp(s)\|_1 \ge |\tau'(s) (S_1)-\tauHyp(s)(S_1)| + |\tau'(s) (S_2)-\tauHyp(s)(S_2)| = 2\epsilon$. 
	
	Similarly, since $\|\tau'(t) - \tauHyp(t)\|_1 \le \epsilon$, we have 
	\begin{equation}
	\label{eqn:np-eq2}
	|\tau'(t)(t_1) - \tauHyp(t)(t_1)| \ls \epsilon.
	\end{equation}

	Then,
	\begin{eqnarray*}
		&&0\\
		&=&	 \tau'(s)(S_1) - \tau'(t)(t_1) \\
		&=&(\tau'(s)(S_1)- \tauHyp(s)(S_1)) +(\tauHyp(t)(t_1) - \tau'(t)(t_1)) +\tauHyp(s)(S_1) - \tauHyp(t)(t_1) \\
		&\le&| \tau'(s)(S_1)-  \tauHyp(s)(S_1)| +|\tauHyp(t)(t_1) - \tau'(t)(t_1)| + \tauHyp(s)(S_1) - \tauHyp(t)(t_1) \\
		&\ls& \epsilon + \epsilon + \tauHyp(s)(S_1) - \tauHyp(t)(t_1) \commenteq{\eqref{eqn:np-eq1} and \eqref{eqn:np-eq2}}\\
		&=&2\epsilon + \tauHyp(s)(S_1) - \tauHyp(t)(t_1) 
	\end{eqnarray*}
	
	and 
	\begin{eqnarray*}
		&&0\\
		&=&	\tau'(s)(S_1) - \tau'(t)(t_1) \\
		&=&(\tau'(s)(S_1)-  \tauHyp(s)(S_1)) +(\tauHyp(t)(t_1) - \tau'(t)(t_1)) + \tauHyp(s)(S_1) - \tauHyp(t)(t_1) \\
		&\ge&-|\tau'(s)(S_1)- \tauHyp(s)(S_1)| -|\tauHyp(t)(t_1) - \tau'(t)(t_1)| +  \tauHyp(s)(S_1) - \tauHyp(t)(t_1) \\
		&\gr& -\epsilon - \epsilon + \tauHyp(s)(S_1) - \tauHyp(t)(t_1) \commenteq{\eqref{eqn:np-eq1} and \eqref{eqn:np-eq2}}\\
		&=&-2\epsilon +  \tauHyp(s)(S_1) - \tauHyp(t)(t_1) .
	\end{eqnarray*}
	
	From the above, we have 
	\begin{eqnarray*}
		&&-2\epsilon \ls\tauHyp(s)(S_1) - \tauHyp(t)(t_1) \ls 2\epsilon\\
		&\iff& -\frac{1}{T} \ls \tauHyp(s)(S_1) - \tauHyp(t)(t_1) \ls \frac{1}{T} \commenteq{$\epsilon = \frac{1}{2T}$}\\
		&\iff& -\frac{1}{T} \ls \sum_{p_i \in Q} \frac{p_i}{T} - \frac{N}{T} \ls \frac{1}{T}\\
		&\iff& N-1\ls \sum_{p_i \in Q} p_i \ls N+1
	\end{eqnarray*}
	
	Since $\sum_{p_i \in Q} p_i$ is an integer,  it follows that $\sum_{p_i \in Q} p_i = N$.
\end{proof}


\propositionApproximateGlobalRelationSubset*

\begin{proof}
	Let $\Q$ be an $\errorParam$-quotient of $\Hyp$. By definition, there is a probability transition function $\tau'$ with $\|\tauHyp(s) - \tau'(s)\|_1 \le \errorParam$ for all $s\in S$ such that $\Q$ is the exact quotient of the LMC $\Hyp' = <S, L, \tau', \ell>$. Let $s$ be an arbitrary state from $\Hyp$. Let $s'$ denote the corresponding state from $\Hyp'$ and $s^{Q} = [s]^{\errorParam}$ the corresponding state from $\Q$.  To prove $s \sim_{\frac{\errorParam}{2}} s^{Q}$, it suffices to prove $s \sim_{\frac{\errorParam}{2}} s'$ since $s' \sim s^{Q}$ and approximate bisimulation satisfies the additivity property~\cite{DesharnaisLavoletteTracol2008}.  
	
	Next, we prove that the following relation $\mathcal{R} = \{(s, s) \suchthat \text{ the first } s \text{ is from } \Hyp \text{ and the second } s \text{ is the corresponding state from } \Hyp'\}$ is an $\frac{\errorParam}{2}$-bisimulation. It is obvious that both states $s$ have the same label and it remains to show that $(\tauHyp(s), \tau'(s)) \in \lifting{\mathcal{R}}_{\frac{\errorParam}{2}}$. To do that, we construct an $\omega \in \Omega(\tauHyp(s), \tau'(s))$ such that $\sum_{(u, v) \in \mathcal{R}} \omega(u, v) \ge 1 -\frac{\errorParam}{2}$. If $\|\tauHyp(s) - \tau'(s)\|_1 = 0$, we can define $\omega(u, u) = \tauHyp(s)(u) = \tau'(s)(u)$ and thus, $\sum_{(u, u) \in \mathcal{R}} \omega(u, u) = \sum_{u \in S}\tauHyp(s)(u) = 1 \ge 1 -\frac{\errorParam}{2}$. For the remainder of this proof, we assume $\|\tauHyp(s) - \tau'(s)\|_1 \gr 0$.
	
	Define $\omega'$ such that $\omega'(u, u) = \min\{\tauHyp(s)(u), \tau'(s)(u)\}$ for all $u \in S$. Let $\alpha \in \SubDist(S)$ such that $\alpha(u) = \max\{\tauHyp(s)(u)-\tau'(s)(u) ,0\}$ where $u \in S$. We have $\|\alpha\|_1 = \sum_{u \in S} \alpha(u) = \sum_{u \in S}  \max\{\tauHyp(s)(u)-\tau'(s)(u), 0\} =\frac{1}{2} \|\tauHyp(s)-\tau'(s)\|_1$. Let $\beta \in \SubDist(S)$ such that $\beta(u) = \max\{\tau'(s)(u)-\tauHyp(s)(u) ,0\}$. Similarly, we have $\|\beta\|_1 = \frac{1}{2} \|\tauHyp(s)-\tau'(s)\|_1$.
	
	Since $\|\alpha\|_1 = \|\beta\|_1 = \frac{1}{2} \|\tauHyp(s)-\tau'(s)\|_1 \gr 0$, there exists a coupling $\gamma \in \Omega(\alpha/\|\alpha\|_1, \beta/ \|\beta\|_1)$. Next, we show that $\omega'+\|\alpha\|_1\gamma \in \Omega(\tauHyp(s), \tau'(s))$. We check the conditions on the left and right marginals, respectively:
	\begin{align*}
	& \phantom{=} \textstyle\sum_{v \in S} (\omega'+\|\alpha\|_1\gamma)(u, v) \\
	&= \textstyle\sum_{v \in S} \omega'(u, v)+\sum_{v \in S} \|\alpha\|_1\gamma(u, v)\\
	&= \textstyle\min\{\tauHyp(s)(u), \tau'(s)(u)\} + \|\alpha\|_1\frac{\alpha(u)}{\|\alpha\|_1} \commenteq{$\sum_{v \in S} \gamma(u, v) = \alpha(u) / \|\alpha\|_1$}\\
	&= \textstyle\min\{\tauHyp(s)(u), \tau'(s)(u)\}  +  \max\{\tauHyp(s)(u)-\tau'(s)(u), 0\} \\
	&= \tauHyp(s)(u) \text{ and }\\
	&\phantom{=} \textstyle\sum_{u\in S} (\omega'+\|\alpha\|_1\gamma)(u, v) \\
	&=\textstyle\sum_{u\in S} (\omega'+\|\beta\|_1\gamma)(u, v) \commenteq{$\|\alpha\|_1 = \|\beta\|_1$}\\
	&= \textstyle\sum_{u \in S} \omega'(u, v)+\sum_{u \in S} \|\beta\|_1\gamma(u, v) \\
	&= \textstyle\min\{\tauHyp(s)(v), \tau'(s)(v)\}  +  \max\{\tau'(s)(v)-\tauHyp(s)(v), 0\} \\
	&= \tau'(s)(v).
	\end{align*}
	
	Finally, we check that the constructed coupling $\omega = \omega'+\|\alpha\|_1\gamma$ satisfies that $\textstyle\sum_{(u, v) \in \mathcal{R}} \omega(u, v) \ge 1 - \frac{\errorParam}{2}$: 
	\begin{align*}
	\textstyle\sum_{(u, v)\in \mathcal{R}}\omega (u, v) &=\textstyle\sum_{(u, u) \in \mathcal{R}} (\omega'+\|\alpha\|_1\gamma)(u, u) \\
	&\ge \textstyle\sum_{(u, u) \in \mathcal{R}} \omega'(u, u) \\
	&= \textstyle\sum_{(u, u) \in \mathcal{R}} \min\{\tauHyp(s)(u), \tau'(s)(u)\} \\	
	&=\textstyle\sum_{u \in S} \min\{\tauHyp(s)(u), \tau'(s)(u)\}\\
	&=\textstyle\sum_{u \in S} \big( \tauHyp(s)(u) - \max\{\tauHyp(s)(u) - \tau'(s)(u), 0\} \big)\\
	&=\textstyle\sum_{u \in S} \tauHyp(s)(u) - \sum_{u \in S} \max\{\tauHyp(s)(u) - \tau'(s)(u), 0\}\\	
	&=1 - \textstyle\sum_{u \in S} \max\{\tauHyp(s)(u) - \tau'(s)(u), 0\}\\	
	&=1 - \frac{1}{2} \|\tauHyp(s)-\tau'(s)\|_1\\		
	&\ge 1 - \frac{\errorParam}{2}\qedhere
	\end{align*}
\end{proof}


\begin{example}\label{example:s-t-not-global-2epsilon-related}
Consider the LMC in \cref{fig:example-not-subseteq-R-2-epsilon}. We have that $t_1 \sim_{\epsilon}  s  \sim_{\epsilon}  t \sim_{\epsilon}  s_1$. It is not hard to see that $t_1 \sim_{\epsilon}  s$ and $t \sim_{\epsilon}  s_1$. Let $\R$ be a reflexive and symmetric relation such that $\{(s, t), (s_1, t), (s, t_1)\} \subseteq \R$ and $(s_1, t_1) \not\in \R$. To show $s \sim_{\epsilon}  t$, it suffices to show that $(\tau(s), \tau(t)) \in \lifting{\R}_{\epsilon}$; more precisely, a coupling $\omega \in \Omega(\tau(s), \tau(t))$ such that $\sum_{ (u,v) \in \R} \omega(u, v) \ge 1 - \epsilon$. Let $\omega(s, t) = \omega(s_1, t) = \omega(s, t_1) = \frac{1}{4}$, $\omega(x,x) = \frac{1}{4} - \epsilon$, $\omega(x, t) = \epsilon$ and $\omega(u,v) = 0$ for all other $u, v \in S$. It is easy to check that $\omega \in \Omega(\tau(s), \tau(t))$ and it satisfies that  $\sum_{ (u,v) \in \mathcal{R}} \omega(u, v) \ge 1 - \epsilon$. Next, we show the resulting LMC in \cref{fig:exampleMerge2} which is obtained by merging the states that are related by the transitive closure of $\sim_{\epsilon}$. This LMC is a $3\epsilon$-quotient of the LMC in \cref{fig:example-not-subseteq-R-2-epsilon}.

\begin{figure}[h]
	\begin{tikzpicture}
	\tikzstyle{BoxStyle} = [draw, circle, fill=black, scale=0.05,minimum width = 0.001pt, minimum height = 0.001pt]
	\node[state] (s) at (1,4) {$s$};
	\node[state] (s1) at (0,2.5) {$s_{1}$};
	\node[state, fill=green] (x) at (2,2.5) {$x$};
	\path[->] (s) edge [loop left] node  [midway,left] {$\frac{1}{2}$} (s);
	\path[->] (s) edge node  [midway,left,xshift=-0.1cm,yshift=0.1cm]  {$\frac{1}{4}$} (s1);
	\path[->] (s) edge node  [midway,right,xshift=0.1cm,yshift=0.1cm] {$\frac{1}{4}$} (x);
	\path[->] (x) edge [out=-70, in=-110, looseness=4] node [near start, right] {$1$} (x);
	
	\node[state] (t) at (4.35,4) {$t$};
	\node[state] (t1) at (3.35,2.5) {$t_{1}$};
	\node[state, fill=green] (y) at (5.35,2.5) {$x$};
	\path[->] (t) edge [loop left] node  [midway,left] {$\frac{1}{2}+\epsilon$} (t);
	\path[->] (t) edge node  [midway,left,xshift=-0.1cm,yshift=0.1cm] {$\frac{1}{4}$} (t1);
	\path[->] (t) edge node  [midway,right,xshift=0.1cm,yshift=0.1cm] {$\frac{1}{4}-\epsilon$} (y);
	\path[->] (y) edge[out=-70, in=-110, looseness=4] node [near start, right] {$1$} (y);
	
	\node[state] (s1) at (11.05,4) {$s_1$};
	\node[state] (t1) at (10.05,2.5) {$t_{1}$};
	\node[state, fill=green] (x) at (12.05,2.5) {$x$};
	\path[->] (s1) edge [loop left] node  [midway,left] {$\frac{1}{2}+2\epsilon$} (s1);
	\path[->] (s1) edge node  [midway,left,xshift=-0.1cm,yshift=0.1cm] {$\frac{1}{4}$} (t1);
	\path[->] (s1) edge node  [midway,right,xshift=0.1cm,yshift=0.1cm] {$\frac{1}{4}-2\epsilon$} (x);
	\path[->] (x) edge  [out=-70, in=-110, looseness=4] node [near start, right]{$1$} (x);
	
	\node[state] (t1) at (7.7,4) {$t_1$};
	\node[state] (ts1) at (6.7,2.5) {$s_{1}$};
	\node[state, fill=green] (y) at (8.7,2.5) {$x$};
	\path[->] (t1) edge [loop left] node  [midway,left] {$\frac{1}{2}-\epsilon$} (t1);
	\path[->] (t1) edge node  [midway,left,xshift=-0.1cm,yshift=0.1cm] {$\frac{1}{4}$} (ts1);
	\path[->] (t1) edge node  [midway,right,xshift=0.1cm,yshift=0.1cm] {$\frac{1}{4}+\epsilon$} (y);
	\path[->] (y) edge  [out=-70, in=-110, looseness=4] node [near start, right] {$1$} (y);
	\end{tikzpicture}
	
	\caption{An LMC in which we have $t_1 \sim_{\epsilon} s \sim_{\epsilon} t \sim_{\epsilon} s_1$ }
	\label{fig:example-not-subseteq-R-2-epsilon}
\end{figure}
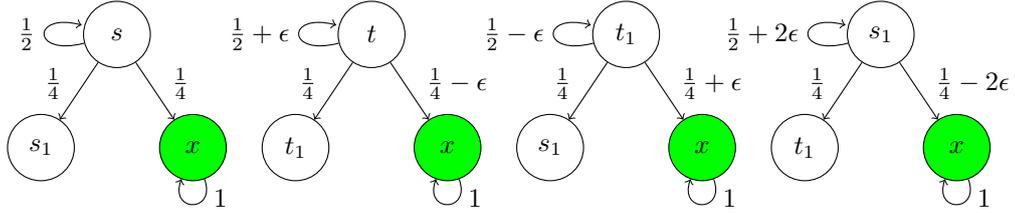

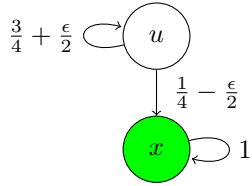
\begin{figure}[h]
	\centering
	\begin{tikzpicture}
	\tikzstyle{BoxStyle} = [draw, circle, fill=black, scale=0.05,minimum width = 0.001pt, minimum height = 0.001pt]
	
	\node[state] (t) at (12,4) {$u$};
	\node[state, fill=green] (y) at (12,2.5) {$x$};
	\path[->] (t) edge [loop left] node  [midway,left] {$\frac{3}{4}+\frac{\epsilon}{2}$} (t);
	\path[->] (t) edge node  [midway,right,xshift=0.1cm] {$\frac{1}{4}-\frac{\epsilon}{2}$} (y);
	\path[->] (y) edge [loop right] node [midway, right] {$1$} (y);
	\end{tikzpicture}
	\caption{An $\epsilon'$-quotient ($\epsilon'$ is at least $3\epsilon$) obtained by merging the states that are related by the transitive closure of $\sim_{\epsilon}$: $s$, $s_1$, $t$ and $t_1$.}
	\label{fig:exampleMerge2} 
\end{figure}

\end{example}


\begin{restatable}{theorem}{theoremGlobalBisimulationStronger}\label{theorem:global-bisimulation-is-stronger}
	Let $n \in \integer^{+}$ and $\epsilon \in (0,  \frac{1}{(\lceil\frac{n}{2}\rceil +1)2^{\lceil\frac{n}{2}\rceil +1}}]$. There is an LMC $\M(n)$ such that merging the states that are related by the transitive closure of $\sim_{\epsilon}$ results in an $\epsilon'$-quotient where $\epsilon'$ is at least $(n+1)\epsilon$. 
\end{restatable}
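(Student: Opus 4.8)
The plan is to take $\M(n)$ to be a simple ``ladder'' of states whose neighbours are $\epsilon$-bisimilar but whose successor behaviour drifts by $\epsilon$ at each rung, so that the transitive closure fuses the whole ladder even though the two ends are far apart. Concretely, I would let $\M(n)$ have states $r_0,r_1,\dots,r_{n+1}$, all carrying a label $a$, together with one absorbing state $x$ carrying a distinct label $b$. Put $\tau(r_i)(x) := \tfrac12 + (i-\tfrac{n+1}{2})\epsilon$, $\tau(r_i)(r_i) := 1-\tau(r_i)(x)$, and let $x$ self-loop. The hypothesis on $\epsilon$ forces $(n+1)\epsilon$ to be tiny, so every $\tau(r_i)(x)$ lies in $(0,1)$ and this is a well-defined LMC; in fact the admissible range is far more generous than needed here. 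The two things to establish are: (i) the transitive closure of $\sim_{\epsilon}$ lumps $r_0,\dots,r_{n+1}$ into a single class $C$, with $x$ alone in its class; and (ii) realising this lumping as the exact quotient of any perturbation of $\M(n)$ costs at least $(n+1)\epsilon$.

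For (i) I would exhibit an explicit coupling witnessing $r_i \sim_{\epsilon} r_{i+1}$ for each $i$. Take the reflexive, symmetric relation $\R = \Delta \cup \{(r_i,r_{i+1}),(r_{i+1},r_i) : 0\le i\le n\}$, and for neighbours with $q := \tau(r_i)(x)$ and $q' := \tau(r_{i+1})(x) = q+\epsilon$ define $\omega(r_i,r_{i+1}) = 1-q'$, $\omega(x,x) = q$, and $\omega(r_i,x) = q'-q = \epsilon$. A quick check shows the marginals are exactly $\tau(r_i)$ and $\tau(r_{i+1})$ and that the only mass landing off $\R$ is $\omega(r_i,x)=\epsilon$, so the $\R$-mass is $1-\epsilon$; with the identity couplings on the diagonal pairs this makes $\R$ an $\epsilon$-bisimulation, hence $r_i \sim_{\epsilon} r_{i+1}$. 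Chaining along the ladder, the transitive closure relates all the $r_i$. Since $x$ carries a different label, it can never be $\sim_{\epsilon}$-related to any $r_i$, so the transitive-closure partition is precisely $\{C,\{x\}\}$. (As a sanity remark confirming genuine non-transitivity, one can also note that any two rungs at $x$-distance $\ge 2\epsilon$ are not directly $\epsilon$-bisimilar, because their only admissible common targets are their self-loops and $x$, capping the couplable $\R$-mass at $1-|q_i-q_j|\le 1-2\epsilon$.)

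For (ii), the crux, suppose $\tau'$ is any perturbation whose exact quotient realises the lumping $\{C,\{x\}\}$. Bisimilarity inside $C$ forces $\tau'(r_i)(\{x\})$ to equal a common value $p$ for all $i$, and hence $\tau'(r_i)(C)=1-p$ for all $i$. A one-line triangle-inequality estimate then gives $\|\tau'(r_i)-\tau(r_i)\|_1 \ge |p-\tau(r_i)(x)| + \big|(1-p)-(1-\tau(r_i)(x))\big| = 2\,|p-\tau(r_i)(x)|$. Comparing the extreme rungs $i=0$ and $i=n+1$, whose $x$-probabilities differ by $(n+1)\epsilon$, no value of $p$ is within $\tfrac{(n+1)\epsilon}{2}$ of both, so $\max_i \|\tau'(r_i)-\tau(r_i)\|_1 \ge (n+1)\epsilon$, giving $\epsilon'\ge (n+1)\epsilon$. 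The main obstacle is keeping the two requirements compatible: the construction must make neighbours close enough that the transitive closure genuinely fuses the \emph{whole} ladder into one class (if it split, the quotient would have smaller classes and the cost would drop), while keeping the two end rungs $(n+1)\epsilon$ apart so that the forced common value $p$ is necessarily far from one endpoint. The delicate point behind the recursive, more state-efficient variants hinted at by the $2^{\lceil n/2\rceil}$ factor in the hypothesis is that routing the discrepancy through auxiliary intermediate states shrinks the available probability budget geometrically, which is what would pin down the exact admissible $\epsilon$; in the ladder above this budget is ample, so the stated bound is met with room to spare.
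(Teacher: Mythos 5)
Your proof is correct. It rests on the same underlying idea as the paper's---a chain of identically labelled states whose probability of reaching a distinctly labelled absorbing state $x$ drifts by $\epsilon$ per link, so that $\sim_{\epsilon}$ relates neighbours while the two ends differ by $(n+1)\epsilon$---but your witness LMC is genuinely different and simpler. The paper's $\M(2n-1)$ and $\M(2n)$ (also $n+3$ states for parameter $n$) use two alternating families $s_j, t_j$, where each chain state fans out to \emph{all} states of the opposite family with dyadic probabilities $\frac{1}{2^{i+1}}$ and reaches $x$ with mass $\frac{1}{2^{n+1}} \pm j\epsilon$; the hypothesis $\epsilon \le \frac{1}{(\lceil n/2\rceil+1)2^{\lceil n/2\rceil+1}}$ exists precisely to keep those perturbed dyadic masses nonnegative. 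Your ladder needs only two successors per state and only $\epsilon \ls \frac{1}{n+1}$, so the (existential) statement is met with room to spare; your closing speculation that the $2^{\lceil n/2\rceil}$ factor hints at a more state-efficient construction is the one inaccurate note---the paper's LMCs are the same size as yours, just structurally more elaborate---but this is immaterial to correctness. In one respect your write-up is more complete than the paper's: the paper only asserts its lower bounds ($2n\epsilon$ and $(2n+1)\epsilon$) for its merged LMCs, whereas you actually prove yours---exact bisimilarity of the merged class $C$ in any witnessing perturbation forces a common value $p=\tau'(r_i)(x)$, grouping coordinates by classes gives $\|\tau'(r_i)-\tau(r_i)\|_1 \ge 2|p-\tau(r_i)(x)|$, and no $p$ lies within $\frac{(n+1)\epsilon}{2}$ of both end rungs, so every witness costs at least $(n+1)\epsilon$. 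Your couplings establishing $r_i \sim_{\epsilon} r_{i+1}$ check out (correct marginals, $\R$-mass exactly $1-\epsilon$), and your non-transitivity remark, while correct, is indeed dispensable: even if $\sim_{\epsilon}$ were transitive on the ladder, the merged class and the lower bound would be unchanged.
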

\begin{proof}
	Let  $n \in \integer^{+}$ and $\epsilon \in (0, \frac{1}{(\lceil\frac{n}{2}\rceil +1)2^{\lceil\frac{n}{2}\rceil +1}}]$. For LMCs $\M(2n-1)$ and $\M(2n)$, we have $\epsilon \in (0, \frac{1}{(n+1)2^{n+1}}]$.
	
	In \cref{fig:example-not-subseteq-R-n-epsilon-odd}, we have $s \sim_{\epsilon} t$. There are $2n+2$ states in total: $s$, $t$, $s_i$ where $i\in\{1,\ldots,n\}$, $t_i$ where $i\in \{1,\ldots,(n-1)\}$ and $x$. All states have the same label but state $x$. State $s$ transitions to $s_i$ with probability $\frac{1}{2^{i+1}}$ where $i \in \{1,\ldots,n\}$. It transitions back to itself with probability $\frac{1}{2}$ and to state $x$ with probability $\frac{1}{2^{n+1}}$. State $t$ transitions to $t_i$ with probability $\frac{1}{2^{i+1}}$ where $i \in \{1,\ldots,(n-2)\}$ and to $t_{n-1}$ with probability $\frac{3}{2^{n+1}}$. It transitions back to itself with probability $\frac{1}{2}+\epsilon$ and to state $x$ with probability $\frac{1}{2^{n+1}}-\epsilon$.  If $n$ is an even number, we have $s_{n-1} \sim_{\epsilon}  t_{n-2}  \cdots s_2 \sim_{\epsilon}  t_1 \sim_{\epsilon}  s \sim_{\epsilon} t \sim_{\epsilon}  s_1 \sim_{\epsilon}  t_2 \cdots  t_{n-1} \sim_{\epsilon} s_{n}$. If $n$ is an odd number, we have $s_{n} \sim_{\epsilon}  t_{n-1}  \cdots s_2 \sim_{\epsilon}  t_1 \sim_{\epsilon}  s \sim_{\epsilon} t \sim_{\epsilon}  s_1 \sim_{\epsilon}  t_2 \cdots  t_{n-2} \sim_{\epsilon} s_{n-1}$. The  LMC obtained by merging the states that are related by the transitive closure of $\sim_{\epsilon}$ is shown in \cref{fig:exampleMerge2n}(a). This LMC is a $\epsilon'$-quotient of the LMC in \cref{fig:example-not-subseteq-R-n-epsilon-odd} where $\epsilon'$ is at least $2n\epsilon$.
	

	In \cref{fig:example-not-subseteq-R-n-epsilon-even}, we have $s \sim_{\epsilon} t$. There are $2n+3$ states in total: $s$, $t$, $s_i$ where $i\in\{1,\ldots,n\}$, $t_i$ where $i\in \{1,\ldots,n\}$ and $x$. All states have the same label but state $x$. State $s$ transitions to $s_i$ with probability $\frac{1}{2^{i+1}}$ where $i \in \{1,\ldots,n\}$. It transitions back to itself with probability $\frac{1}{2}$ and to state $x$ with probability $\frac{1}{2^{n+1}}$. State $t$ transitions to $t_i$ with probability $\frac{1}{2^{i+1}}$ where $i \in \{1,\ldots,n\}$. It transitions back to itself with probability $\frac{1}{2}+\epsilon$ and to state $x$ with probability $\frac{1}{2^{n+1}}-\epsilon$.  If $n$ is an even number, we have $s_{n} \sim_{\epsilon}  t_{n-1}  \cdots s_2 \sim_{\epsilon}  t_1 \sim_{\epsilon}  s \sim_{\epsilon} t \sim_{\epsilon}  s_1 \sim_{\epsilon}  t_2 \cdots  s_{n-1} \sim_{\epsilon} t_{n}$. If $n$ is an odd number, we have $t_{n} \sim_{\epsilon}  s_{n-1}  \cdots s_2 \sim_{\epsilon}  t_1 \sim_{\epsilon}  s \sim_{\epsilon} t \sim_{\epsilon}  s_1 \sim_{\epsilon}  t_2 \cdots  t_{n-1} \sim_{\epsilon} s_{n}$. The  LMC obtained by merging the states that are related by the transitive closure of $\sim_{\epsilon}$ is shown in \cref{fig:exampleMerge2n}(b). This LMC is a $\epsilon'$-quotient of the LMC in \cref{fig:example-not-subseteq-R-n-epsilon-even} where $\epsilon'$ is at least $(2n+1)\epsilon$.
	
\end{proof}

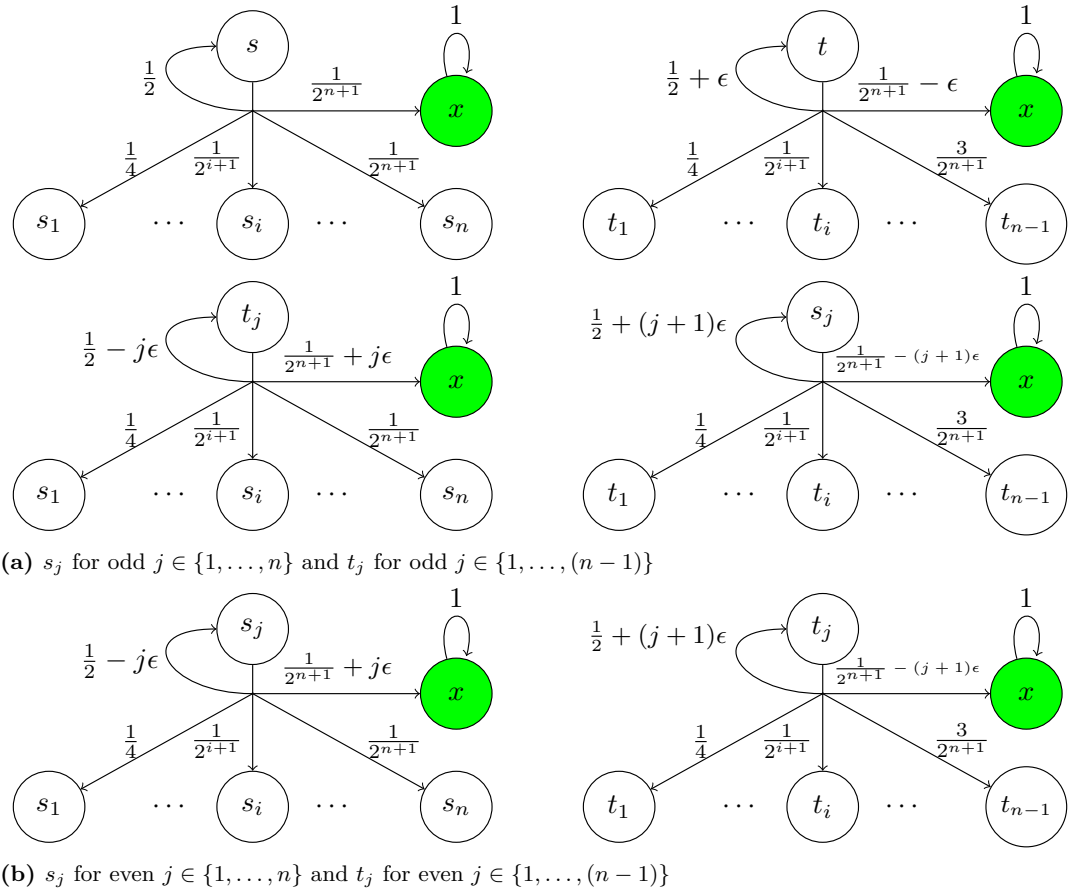
\begin{figure}[t]
		\tikzstyle{BoxStyle} = [draw, circle, fill=black, scale=0.05,minimum width = 0.001pt, minimum height = 0.001pt]
	\begin{subfigure}[h]{\textwidth}
		\resizebox{\textwidth}{!}{
			\begin{tikzpicture}              
			\node[state] (us) at (3,4) {$s$};
			\node[BoxStyle] (st) at (3,3.2){};
			\node[state] (u1) at (0.5,1.8) {$s_{1}$};
			\node at (2,1.8) {$\cdots$};	
			\node[state] (u2) at (3,1.8) {$s_{i}$};
			\node at (4,1.8) {$\cdots$};	
			\node[state] (u3) at (5.5,1.8) {$s_{n}$};	
			\node[state, fill=green] (un) at (5.5,3.2) {$x$};
			\path[-] (us) edge node  [near end,left] {} (st);
			\path[->] (st) edge [out=180,in=180,looseness=3] node  [midway,left] {$\frac{1}{2}$} (us);
			\path[->] (st) edge node  [midway,left, xshift=-0.2cm] {$\frac{1}{4}$} (u1);
			\path[->] (st) edge node  [midway,left,yshift=-0.1cm] {$\frac{1}{2^{i+1}}$} (u2);
			\path[->] (st) edge node  [midway,right, xshift=0.2cm] {$\frac{1}{2^{n+1}}$} (u3);
			\path[->] (st) edge node  [midway,above] {$\frac{1}{2^{n+1}}$} (un);
			\path[->] (un) edge [loop above] node [midway, above] {$1$} (un);
			
			\node[state] (ut) at (10,4) {$t$};
			\node[BoxStyle] (tt) at (10,3.2){};
			\node[state] (t2) at (12.5,1.8) {\small $ t_{n-1}$};
			\node at (11,1.8) {$\cdots$};
			\node[state] (ti) at (10,1.8) {$t_{i}$};
			\node at (9,1.8) {$\cdots$};
			\node[state] (t1) at (7.5,1.8) {$t_{1}$};
			\node[state, fill=green] (tn) at (12.5,3.2) {$x$};
			\path[-] (ut) edge node  [near end,left] {} (tt);
			\path[->] (tt) edge [out=180,in=180,looseness=3] node  [midway,left] {$\frac{1}{2} +\epsilon$} (ut);
			\path[->] (tt) edge node  [midway,left, xshift=-0.3cm] {$\frac{1}{4}$} (t1);
			\path[->] (tt) edge node  [midway,left, yshift=-0.1cm] {$\frac{1}{2^{i+1}}$} (ti);	
			\path[->] (tt) edge node  [midway,right, xshift=0.2cm] {$\frac{3}{2^{n+1}}$} (t2);
			\path[->] (tt) edge node  [midway,above] {$\frac{1}{2^{n+1}}-\epsilon$} (tn);
			\path[->] (tn) edge [loop above] node [midway, above] {$1$} (tn);
			\end{tikzpicture}}
	\end{subfigure}
	\begin{subfigure}[h]{\textwidth}
		\resizebox{\textwidth}{!}{
			\begin{tikzpicture}
			\node[state] (us) at (3,4) {$t_j$};
			\node[BoxStyle] (st) at (3,3.2){};
			\node[state] (u1) at (0.5,1.8) {$s_{1}$};
			\node at (2,1.8) {$\cdots$};	
			\node[state] (u2) at (3,1.8) {$s_{i}$};
			\node at (4,1.8) {$\cdots$};	
			\node[state] (u3) at (5.5,1.8) {$s_{n}$};	
			\node[state, fill=green] (un) at (5.5,3.2) {$x$};
			\path[-] (us) edge node  [near end,left] {} (st);
			\path[->] (st) edge [out=180,in=180,looseness=3] node  [midway,left] {$\frac{1}{2}-j\epsilon$} (us);
			\path[->] (st) edge node  [midway,left, xshift=-0.2cm] {$\frac{1}{4}$} (u1);
			\path[->] (st) edge node  [midway,left,yshift=-0.1cm] {$\frac{1}{2^{i+1}}$} (u2);
			\path[->] (st) edge node  [midway,right, xshift=0.2cm] {$\frac{1}{2^{n+1}}$} (u3);
			\path[->] (st) edge node  [midway,above] {\small$\frac{1}{2^{n+1}}+j\epsilon$} (un);
			\path[->] (un) edge [loop above] node [midway, above] {$1$} (un);
			
			\node[state] (ut) at (10,4) {$s_j$};
			\node[BoxStyle] (tt) at (10,3.2){};
			\node[state] (t2) at (12.5,1.8) {\small $ t_{n-1}$};
			\node at (11,1.8) {$\cdots$};
			\node[state] (ti) at (10,1.8) {$t_{i}$};
			\node at (9,1.8) {$\cdots$};
			\node[state] (t1) at (7.5,1.8) {$t_{1}$};
			\node[state, fill=green] (tn) at (12.5,3.2) {$x$};
			\path[-] (ut) edge node  [near end,left] {} (tt);
			\path[->] (tt) edge [out=180,in=180,looseness=3] node  [midway,above,xshift=-1cm] {\small $\frac{1}{2}+(j+1)\epsilon$} (ut);
			\path[->] (tt) edge node  [midway,left, xshift=-0.2cm] {$\frac{1}{4}$} (t1);
			\path[->] (tt) edge node  [midway,left, yshift=-0.1cm] {$\frac{1}{2^{i+1}}$} (ti);	
			\path[->] (tt) edge node  [midway,right, xshift=0.2cm] {$\frac{3}{2^{n+1}}$} (t2);
			\path[->] (tt) edge node  [midway,above] {\tiny $\frac{1}{2^{n+1}}-(j+1)\epsilon$} (tn);
			\path[->] (tn) edge [loop above] node [midway, above] {$1$} (tn);
			\end{tikzpicture}} \caption{$s_j$ for odd $j \in \{1,\ldots,n\}$ and $t_j$ for odd $j \in \{1,\ldots,(n-1)\}$}\label{fig:odd-n-odd-j}
	\end{subfigure}
	\begin{subfigure}[h]{\textwidth}
		\resizebox{\textwidth}{!}{
			\begin{tikzpicture}
			\node[state] (us) at (3,4) {$s_j$};
			\node[BoxStyle] (st) at (3,3.2){};
			\node[state] (u1) at (0.5,1.8) {$s_{1}$};
			\node at (2,1.8) {$\cdots$};	
			\node[state] (u2) at (3,1.8) {$s_{i}$};
			\node at (4,1.8) {$\cdots$};	
			\node[state] (u3) at (5.5,1.8) {$s_{n}$};	
			\node[state, fill=green] (un) at (5.5,3.2) {$x$};
			\path[-] (us) edge node  [near end,left] {} (st);
			\path[->] (st) edge [out=180,in=180,looseness=3] node  [midway,left] {$\frac{1}{2}-j\epsilon$} (us);
			\path[->] (st) edge node  [midway,left, xshift=-0.2cm] {$\frac{1}{4}$} (u1);
			\path[->] (st) edge node  [midway,left,yshift=-0.1cm] {$\frac{1}{2^{i+1}}$} (u2);
			\path[->] (st) edge node  [midway,right, xshift=0.2cm] {$\frac{1}{2^{n+1}}$} (u3);
			\path[->] (st) edge node  [midway,above] {\small$\frac{1}{2^{n+1}}+j\epsilon$} (un);
			\path[->] (un) edge [loop above] node [midway, above] {$1$} (un);
			
			\node[state] (ut) at (10,4) {$t_j$};
			\node[BoxStyle] (tt) at (10,3.2){};
			\node[state] (t2) at (12.5,1.8) {\small $ t_{n-1}$};
			\node at (11,1.8) {$\cdots$};
			\node[state] (ti) at (10,1.8) {$t_{i}$};
			\node at (9,1.8) {$\cdots$};
			\node[state] (t1) at (7.5,1.8) {$t_{1}$};
			\node[state, fill=green] (tn) at (12.5,3.2) {$x$};
			\path[-] (ut) edge node  [near end,left] {} (tt);
			\path[->] (tt) edge [out=180,in=180,looseness=3] node  [midway,above,xshift=-1cm] {\small $\frac{1}{2}+(j+1)\epsilon$} (ut);
			\path[->] (tt) edge node  [midway,left, xshift=-0.2cm] {$\frac{1}{4}$} (t1);
			\path[->] (tt) edge node  [midway,left, yshift=-0.1cm] {$\frac{1}{2^{i+1}}$} (ti);	
			\path[->] (tt) edge node  [midway,right, xshift=0.2cm] {$\frac{3}{2^{n+1}}$} (t2);
			\path[->] (tt) edge node  [midway,above] {\tiny $\frac{1}{2^{n+1}}-(j+1)\epsilon$} (tn);
			\path[->] (tn) edge [loop above] node [midway, above] {$1$} (tn);
			\end{tikzpicture}} \caption{$s_j$ for even $j \in \{1,\ldots,n\}$ and $t_j$ for even $j \in \{1,\ldots,(n-1)\}$}\label{fig:odd-n-even-j}
	\end{subfigure}
	\caption{The LMC $\M(2n-1)$ in which $s \sim_{\epsilon} t$. }\label{fig:example-not-subseteq-R-n-epsilon-odd}
\end{figure}

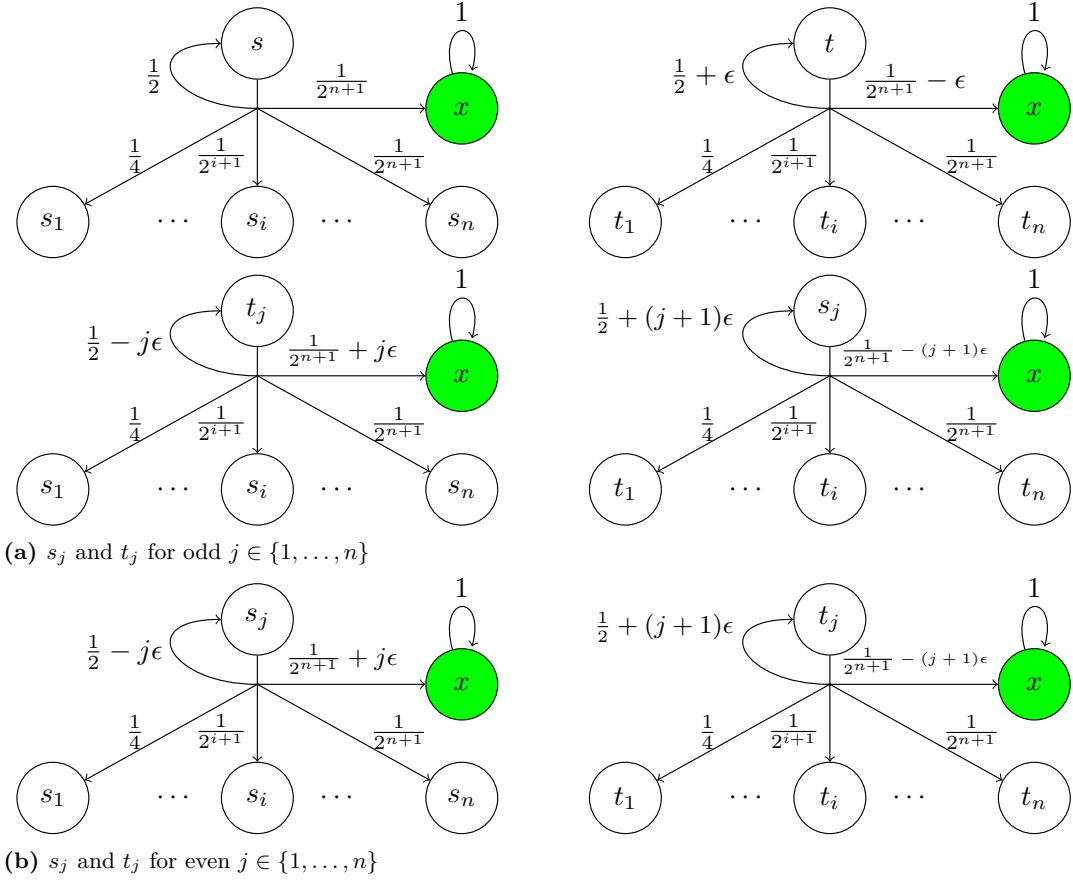
\begin{figure}[t]
	\tikzstyle{BoxStyle} = [draw, circle, fill=black, scale=0.05,minimum width = 0.001pt, minimum height = 0.001pt]
	\begin{subfigure}[h]{\textwidth}
		\resizebox{\textwidth}{!}{
			\begin{tikzpicture}              
			\node[state] (us) at (3,4) {$s$};
			\node[BoxStyle] (st) at (3,3.2){};
			\node[state] (u1) at (0.5,1.8) {$s_{1}$};
			\node at (2,1.8) {$\cdots$};	
			\node[state] (u2) at (3,1.8) {$s_{i}$};
			\node at (4,1.8) {$\cdots$};	
			\node[state] (u3) at (5.5,1.8) {$s_{n}$};	
			\node[state, fill=green] (un) at (5.5,3.2) {$x$};
			\path[-] (us) edge node  [near end,left] {} (st);
			\path[->] (st) edge [out=180,in=180,looseness=3] node  [midway,left] {$\frac{1}{2}$} (us);
			\path[->] (st) edge node  [midway,left, xshift=-0.2cm] {$\frac{1}{4}$} (u1);
			\path[->] (st) edge node  [midway,left,yshift=-0.1cm] {$\frac{1}{2^{i+1}}$} (u2);
			\path[->] (st) edge node  [midway,right, xshift=0.2cm] {$\frac{1}{2^{n+1}}$} (u3);
			\path[->] (st) edge node  [midway,above] {$\frac{1}{2^{n+1}}$} (un);
			\path[->] (un) edge [loop above] node [midway, above] {$1$} (un);
			
			\node[state] (ut) at (10,4) {$t$};
			\node[BoxStyle] (tt) at (10,3.2){};
			\node[state] (t2) at (12.5,1.8) {$ t_{n}$};
			\node at (11,1.8) {$\cdots$};
			\node[state] (ti) at (10,1.8) {$t_{i}$};
			\node at (9,1.8) {$\cdots$};
			\node[state] (t1) at (7.5,1.8) {$t_{1}$};
			\node[state, fill=green] (tn) at (12.5,3.2) {$x$};
			\path[-] (ut) edge node  [near end,left] {} (tt);
			\path[->] (tt) edge [out=180,in=180,looseness=3] node  [midway,left] {$\frac{1}{2}+\epsilon$} (ut);
			\path[->] (tt) edge node  [midway,left, xshift=-0.2cm] {$\frac{1}{4}$} (t1);
			\path[->] (tt) edge node  [midway,left, yshift=-0.1cm] {$\frac{1}{2^{i+1}}$} (ti);	
			\path[->] (tt) edge node  [midway,right, xshift=0.2cm] {$\frac{1}{2^{n+1}}$} (t2);
			\path[->] (tt) edge node  [midway,above] {$\frac{1}{2^{n+1}}-\epsilon$} (tn);
			\path[->] (tn) edge [loop above] node [midway, above] {$1$} (tn);
			\end{tikzpicture}}
	\end{subfigure}
	\begin{subfigure}[h]{\textwidth}
		\resizebox{\textwidth}{!}{
			\begin{tikzpicture}
			\node[state] (us) at (3,4) {$t_j$};
			\node[BoxStyle] (st) at (3,3.2){};
			\node[state] (u1) at (0.5,1.8) {$s_{1}$};
			\node at (2,1.8) {$\cdots$};	
			\node[state] (u2) at (3,1.8) {$s_{i}$};
			\node at (4,1.8) {$\cdots$};	
			\node[state] (u3) at (5.5,1.8) {$s_{n}$};	
			\node[state, fill=green] (un) at (5.5,3.2) {$x$};
			\path[-] (us) edge node  [near end,left] {} (st);
			\path[->] (st) edge [out=180,in=180,looseness=3] node  [midway,left] {$\frac{1}{2}-j\epsilon$} (us);
			\path[->] (st) edge node  [midway,left, xshift=-0.2cm] {$\frac{1}{4}$} (u1);
			\path[->] (st) edge node  [midway,left,yshift=-0.1cm] {$\frac{1}{2^{i+1}}$} (u2);
			\path[->] (st) edge node  [midway,right, xshift=0.2cm] {$\frac{1}{2^{n+1}}$} (u3);
			\path[->] (st) edge node  [midway,above] {\small$\frac{1}{2^{n+1}}+j\epsilon$} (un);
			\path[->] (un) edge [loop above] node [midway, above] {$1$} (un);
			
			\node[state] (ut) at (10,4) {$s_j$};
			\node[BoxStyle] (tt) at (10,3.2){};
			\node[state] (t2) at (12.5,1.8) {$ t_{n}$};
			\node at (11,1.8) {$\cdots$};
			\node[state] (ti) at (10,1.8) {$t_{i}$};
			\node at (9,1.8) {$\cdots$};
			\node[state] (t1) at (7.5,1.8) {$t_{1}$};
			\node[state, fill=green] (tn) at (12.5,3.2) {$x$};
			\path[-] (ut) edge node  [near end,left] {} (tt);
			\path[->] (tt) edge [out=180,in=180,looseness=3] node  [midway,above,xshift=-1cm] {\small $\frac{1}{2}+(j+1)\epsilon$} (ut);
			\path[->] (tt) edge node  [midway,left, xshift=-0.2cm] {$\frac{1}{4}$} (t1);
			\path[->] (tt) edge node  [midway,left, yshift=-0.1cm] {$\frac{1}{2^{i+1}}$} (ti);	
			\path[->] (tt) edge node  [midway,right, xshift=0.2cm] {$\frac{1}{2^{n+1}}$} (t2);
			\path[->] (tt) edge node  [midway,above] {\tiny $\frac{1}{2^{n+1}}-(j+1)\epsilon$} (tn);
			\path[->] (tn) edge [loop above] node [midway, above] {$1$} (tn);
			\end{tikzpicture}} \caption{$s_j$ and $t_j$ for odd $j \in \{1,\ldots,n\}$}\label{fig:even-n-odd-j}
	\end{subfigure}
	\begin{subfigure}[h]{\textwidth}
		\resizebox{\textwidth}{!}{
			\begin{tikzpicture}
			\node[state] (us) at (3,4) {$s_j$};
			\node[BoxStyle] (st) at (3,3.2){};
			\node[state] (u1) at (0.5,1.8) {$s_{1}$};
			\node at (2,1.8) {$\cdots$};	
			\node[state] (u2) at (3,1.8) {$s_{i}$};
			\node at (4,1.8) {$\cdots$};	
			\node[state] (u3) at (5.5,1.8) {$s_{n}$};	
			\node[state, fill=green] (un) at (5.5,3.2) {$x$};
			\path[-] (us) edge node  [near end,left] {} (st);
			\path[->] (st) edge [out=180,in=180,looseness=3] node  [midway,left] {$\frac{1}{2}-j\epsilon$} (us);
			\path[->] (st) edge node  [midway,left, xshift=-0.2cm] {$\frac{1}{4}$} (u1);
			\path[->] (st) edge node  [midway,left,yshift=-0.1cm] {$\frac{1}{2^{i+1}}$} (u2);
			\path[->] (st) edge node  [midway,right, xshift=0.2cm] {$\frac{1}{2^{n+1}}$} (u3);
			\path[->] (st) edge node  [midway,above] {\small$\frac{1}{2^{n+1}}+j\epsilon$} (un);
			\path[->] (un) edge [loop above] node [midway, above] {$1$} (un);
			
			\node[state] (ut) at (10,4) {$t_j$};
			\node[BoxStyle] (tt) at (10,3.2){};
			\node[state] (t2) at (12.5,1.8) {$ t_{n}$};
			\node at (11,1.8) {$\cdots$};
			\node[state] (ti) at (10,1.8) {$t_{i}$};
			\node at (9,1.8) {$\cdots$};
			\node[state] (t1) at (7.5,1.8) {$t_{1}$};
			\node[state, fill=green] (tn) at (12.5,3.2) {$x$};
			\path[-] (ut) edge node  [near end,left] {} (tt);
			\path[->] (tt) edge [out=180,in=180,looseness=3] node  [midway,above,xshift=-1cm] {\small $\frac{1}{2} +(j+1)\epsilon$} (ut);
			\path[->] (tt) edge node  [midway,left, xshift=-0.2cm] {$\frac{1}{4}$} (t1);
			\path[->] (tt) edge node  [midway,left, yshift=-0.1cm] {$\frac{1}{2^{i+1}}$} (ti);	
			\path[->] (tt) edge node  [midway,right, xshift=0.2cm] {$\frac{1}{2^{n+1}}$} (t2);
			\path[->] (tt) edge node  [midway,above] {\tiny $\frac{1}{2^{n+1}}-(j+1)\epsilon$} (tn);
			\path[->] (tn) edge [loop above] node [midway, above] {$1$} (tn);
			\end{tikzpicture}} \caption{$s_j$ and $t_j$ for even $j \in \{1,\ldots,n\}$}\label{fig:even-n-even-j}
	\end{subfigure}
	\caption{The LMC $\M(2n)$ in which $s \sim_{\epsilon} t$ but not $s R_{2n\epsilon} t$ }\label{fig:example-not-subseteq-R-n-epsilon-even}
\end{figure}

\begin{figure}[H]
	\centering
	\begin{tikzpicture}
	\tikzstyle{BoxStyle} = [draw, circle, fill=black, scale=0.05,minimum width = 0.001pt, minimum height = 0.001pt]
	
	\node[state] (t) at (12,4) {$u$};
	\node[state, fill=green] (y) at (12,2.5) {$x$};
	\path[->] (t) edge [loop left] node  [midway,left] {$1 - \frac{1}{2^{n+1}}$} (t);
	\path[->] (t) edge node  [midway,right,xshift=0.1cm] {$\frac{1}{2^{n+1}}$} (y);
	\path[->] (y) edge [loop right] node [midway, right] {$1$} (y);
	\node at (12, 1.5) {(a) $\epsilon'$-quotient ($\epsilon'$ is at least $2n\epsilon$)};	
	
	\node[state] (t) at (18,4) {$u$};
	\node[state, fill=green] (y) at (18,2.5) {$x$};
	\path[->] (t) edge [loop left] node  [midway,left] {$1 - \frac{1}{2^{n+1}} + \frac{\epsilon}{2}$} (t);
	\path[->] (t) edge node  [midway,right,xshift=0.1cm] {$\frac{1}{2^{n+1}} - \frac{\epsilon}{2}$} (y);
	\path[->] (y) edge [loop right] node [midway, right] {$1$} (y);
	\node at (18, 1.5) {(b) $\epsilon'$-quotient  ($\epsilon'$ is at least $(2n+1)\epsilon$)};	
	\end{tikzpicture}
	\caption{(a) An $\epsilon'$-quotient with $\epsilon'$ at least $2n\epsilon$ obtained by merging the states that are related by the transitive closure of $\sim_{\epsilon}$ comprising $s$, $t$, $s_j$ where $j \in \{1, \ldots, n\}$ and $t_j$ where $j \in \{1, \ldots, n-1\}$; (b) An $\epsilon'$-quotient with $\epsilon'$ at least $(2n+1)\epsilon$ obtained by merging the states that are related by the transitive closure of $\sim_{\epsilon}$ comprising $s$, $t$, $s_j$ and $t_j$ for all $j \in \{1, \ldots, n\}$.}
	\label{fig:exampleMerge2n} 
\end{figure}
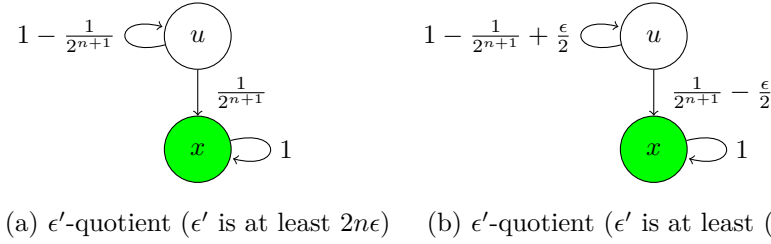

 
\begin{lemma}\label{lemma:adjust-probability-distribution-exists}
	Let $X$ be a partition of $S$. Let $\mu \in\Dist(S)$ and $\gamma \in \Dist(X)$. One can compute in polynomial time a probability distribution $\nu \in \Dist(S)$ such that $(\nu(E))_{E \in X} = \gamma$ and $\|\mu - \nu\|_1 = \|(\mu(E))_{E \in X} - (\nu(E))_{E \in X}\|_1$.
\end{lemma}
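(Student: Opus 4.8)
The plan is to construct $\nu$ separately on each block $E \in X$, shifting mass so that the total mass on $E$ becomes exactly $\gamma(E)$ while only moving mass in a single direction within each block. First I would record the general lower bound that holds for \emph{any} $\nu \in \Dist(S)$ with $(\nu(E))_{E \in X} = \gamma$: by the triangle inequality,
\[
\|(\mu(E))_{E \in X} - (\nu(E))_{E \in X}\|_1 = \sum_{E \in X} \Big| \sum_{s \in E} (\mu(s) - \nu(s)) \Big| \le \sum_{E \in X} \sum_{s \in E} |\mu(s)-\nu(s)| = \|\mu - \nu\|_1 .
\]
Thus the real content of the lemma is to \emph{achieve equality}, and equality holds in each summand precisely when, inside the block $E$, the signed differences $\mu(s) - \nu(s)$ never change sign. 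This observation dictates the construction.

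Fix a block $E$ and compare $\gamma(E)$ with $\mu(E)$. If $\gamma(E) \ge \mu(E)$, I would keep $\nu(s) \ge \mu(s)$ for every $s \in E$, distributing the surplus $\gamma(E)-\mu(E)$ arbitrarily (for instance, adding it all to one chosen state of $E$); then $|\mu(s)-\nu(s)| = \nu(s)-\mu(s)$, and these terms sum to $\gamma(E)-\mu(E)$. If instead $\gamma(E) < \mu(E)$, I would only remove mass, keeping $\nu(s) \le \mu(s)$ and greedily decreasing the values $\mu(s)$ (never below $0$) until the block total drops to $\gamma(E)$; this removes total mass $\mu(E)-\gamma(E)$, and $|\mu(s)-\nu(s)| = \mu(s)-\nu(s)$ again sums to $\mu(E)-\gamma(E)$. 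In both cases $\sum_{s\in E}|\mu(s)-\nu(s)| = |\mu(E)-\gamma(E)| = |\mu(E)-\nu(E)|$, so summing over all blocks turns the inequality above into the desired equality.

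The only feasibility point to verify is nonnegativity in the removal case: since the amount to be removed, $\mu(E)-\gamma(E)$, is at most the mass $\mu(E)$ already present in the block, the greedy removal never forces any $\nu(s)$ below $0$, so $\nu$ restricted to $E$ stays a nonnegative (sub)distribution summing to $\gamma(E)$. Finally $\nu \in \Dist(S)$ because $\sum_{s\in S}\nu(s) = \sum_{E\in X}\nu(E)=\sum_{E\in X}\gamma(E)=1$, and $(\nu(E))_{E\in X}=\gamma$ holds by construction. The whole procedure is a single linear scan over the states within each block, so it runs in polynomial time. I do not expect a genuine obstacle: the one thing needing care is the sign-consistency argument (equivalently, the nonnegativity bookkeeping in the removal case), which is exactly what upgrades the generic inequality to an equality.
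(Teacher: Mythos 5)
Your proposal is correct and takes essentially the same approach as the paper: its Algorithms ``Adjust Probability Distribution'' and ``decreaseProb'' implement exactly your block-by-block construction, adding the entire surplus to a single chosen state when $\gamma(E) \ge \mu(E)$ and greedily removing mass (never below zero) when $\gamma(E) < \mu(E)$. Your explicit sign-consistency argument showing why this upgrades the triangle-inequality bound to the claimed $L_1$ equality is left implicit in the paper's terser proof, but the underlying construction is identical.
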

\begin{proof}
	\cref{alg:adjust-transition-probability}: We can define such a probability distribution $\nu$ by initializing it to $\mu$. For each $E \in X$, if $\mu(E) \le \gamma(E)$, we choose a state $x \in E$ and increase the value of $\nu(x)$ by $\mu(E) - \gamma(E)$, as shown on line~5 of \cref{alg:adjust-transition-probability}. Otherwise $\mu(E) \gr \gamma(E)$, we run Algorithm~\ref{alg:adjust-transition-probability2} to decrease $\nu(x)$ at one or more states $x \in E$ such that $\nu(E) = \gamma(E)$.
\end{proof}
\noindent
\begin{minipage}[h]{0.53\textwidth}
	\vspace{0pt}  
	\setlength{\algomargin}{0.001in}
	\begin{algorithm}[H]
		\DontPrintSemicolon
		\KwIn{$\mu \in \Dist(S)$, a partition $X$ of $S$ and $\gamma \in \Dist(X)$}
		\KwOut{$\nu \in \Dist(S)$ such that $(\nu(E))_{E \in X} = \gamma$ and $\|\mu - \nu\|_1 = \|(\mu(E))_{E \in X} - (\nu(E))_{E \in X}\|_1$}
		$\nu := \mu$ \;
		\ForEach{$E \in X$}{
			$m: = \mu(E) - \gamma(E)$\\
			\eIf{$m \le 0$}{
				pick $x \in E$; $\nu(x) := \nu(x) + m$\;
			}{
				decreaseProb($\nu$, $E$, $m$)
			}
		}
		\caption{Adjust Probability Distribution}
		\label{alg:adjust-transition-probability}
	\end{algorithm}
\end{minipage}
~
\begin{minipage}[h]{0.45\textwidth}
	\setlength{\algomargin}{0.01in}
	\begin{algorithm}[H] 
		\LinesNotNumbered
		\DontPrintSemicolon
		\KwIn{$\nu \in \Dist(S)$, $E \subseteq S$ and $m$}
		\KwOut{$\nu(E)$ decreased by $m$}
		\ForEach{$x \in E$}{
			\eIf{$\nu(x) \ge m$}{
				$\nu(x) := \nu(x) - m$\;
				$m := 0$ \;	
			}{
				$m := m - \nu(x)$ \;	
				$\nu(x) := 0$\;
			}
			\If{ $m = 0$}{
				break\;
			}
		}
		\caption{decreaseProb}
		\label{alg:adjust-transition-probability2}
	\end{algorithm}
\end{minipage}

\lemmaAdditivityProperty*

\begin{proof}
	Let $\M_1 = <S_1, L, \tau_1, \ell_1>$, $\M_2 = <S_2, L, \tau_2, \ell_2>$ and $\M_3 = <S_3, L, \tau_3, \ell_3>$ be three LMCs. In addition, let $\M_2$ be an $\epsilon_1$-quotient of $\M_1$ and $\M_3$ be an $\epsilon_2$-quotient of $\M_2$. 
	
	By definition, both $\M_2$ and $\M_3$ are quotients, that is, no two states in $\M_2$ or $\M_3$ are probabilistic bisimilar. Let $s_1$ be an arbitrary state from $\M_1$ and $s_2 = [s_1]^{\epsilon_1}_{\M_2}$ be the corresponding state from $\M_2$. The labels of $s_1$ and $s_2$ must coincide. In addition, there exists a transition function $\tau_1': S_1 \times S_1 \to [0, 1]$ such that $\|\tau_1(x_1)- \tau_1'(x_1)\|_1 \le \epsilon_1$ and $x_1 \sim [x_1]^{\epsilon_1}_{\M_2}$ for all $x_1 \in S_1$ in the LMC combining $\M_1' = <S_1, L, \tau_1', \ell_1>$ and $\M_2$. 
	
	Let $s_3 = [s_2]^{\epsilon_2}_{\M_3}$ from $\M_3$ be the state that corresponds to $s_2$. Similarly, the labels of $s_2$ and $s_3$ must be the same. Also, there exists a transition function $\tau_2': S_2 \times S_2 \to [0, 1]$ such that $\|\tau_2(x_2) -\tau_2'(x_2)\|_1\le \epsilon_2$ and $x_2 \sim [x_2]^{\epsilon_2}_{\M_3}$ for all $x_2 \in S_2$ in the LMC combining $\M_2' = <S_2, L, \tau_2', \ell_2>$ and $\M_3$.
	
	To prove $\M_3$ is an $(\epsilon_1 + \epsilon_2)$-quotient of $\M_1$, we show the existence of a transition function $\tau_1'': S_1 \to \Dist(S_1)$ such that $\|\tau_1(x_1) -\tau_1''(x_1)\|_1\le \epsilon_1+ \epsilon_2$ and $x_1 \sim [x_2]^{\epsilon_2}_{\M_3}$ for all $x_1 \in S_1$ in the LMC combining $\M_1'' = <S_1, L, \tau_1'', \ell_1>$ and $\M_3$ where $x_2 = [x_1]^{\epsilon_1}_{\M_2}$.
	
	
	Let us define a function $f: S_2 \to 2^{S_1}$ such that $f(x_2) = \{x_1 \suchthat x_2 = [x_1]^{\epsilon_1}_{\M_2}\}$. Similarly, define a function $g: S_3 \to 2^{S_2}$ such that $g(x_3) = \{x_2 \suchthat x_3 = [x_2]^{\epsilon_2}_{\M_3}\}$. The function $f$ induces a partition $X_1$ over $S_1$: $X_1 = \{ f(x_2) \suchthat x_2 \in S_2\}$. 
	
	Let $x_1 \in S_1$ and $x_2 = [x_1]^{\epsilon_1}_{\M_2}$. We have
	\begin{align}
	&\tau_1'(x_1)(f(y_2))= \tau_2(x_2)(y_2) \text{ for all $y_2 \in S_2$, that is,  }(\tau_1'(x_1)(E))_{E \in X_1} = \tau_2(x_2) \label{eqn:property-of-X1}.
	\end{align} 
	
	Similarly, 	let $x_2 \in S_2$ and $x_3 = [x_2]^{\epsilon_2}_{\M_3}$. We have
	\begin{align}
	&\tau_2'(x_2)(g(y_3))= \tau_3(x_3)(y_3) \text{ for all $y_3 \in S_3$}\label{eqn:property-of-g}.
	\end{align} 
	
	By \cref{lemma:adjust-probability-distribution-exists}, given $X_1$, $\tau_1'(x_1)$ and $\tau_2'(x_2)$, one can compute in polynomial time a probability distribution $\nu_{x_1}$ such that 
	\begin{align}
	&\nu_{x_1}(f(y_2))= \tau_2'(x_2)(y_2) \text{ for all } y_2 \in S_2 \text{, that is,  }  \label{eqn:property1-of-nu-s1}\\
	&(\nu_{x_1}(E))_{E \in X_1} = \tau_2'(x_2) \text{ and }  \nonumber\\
	&\|\nu_{x_1} -  \tau_1'(x_1)\|_1 = \| (\nu_{x_1}(E))_{E \in X_1} - (\tau_1'(x_1)(E))_{E \in X_1}\|_1 \label{eqn:property2-of-nu-s1}.
	\end{align}
	
	For each $x_1 \in S_1$, we compute such $\nu_{x_1}$. Define $\tau_1''$ as $\tau_1''(x_1) = \nu_{x_1}$ for all $x_1 \in S_1$. We first show that $\|\tau_1''(x_1)- \tau_1(x_1)\|_1 \le \epsilon_1+\epsilon_2$ for all $x_1 \in S_1$. Let $x_1 \in S_1$ and $x_2 = [x_1]^{\epsilon_1}_{\M_2}$. We have 
	\begin{eqnarray*} 
		&&    \|\tau_1''(x_1) - \tau_1(x_1)\|_1 \\
		&=& \|\nu_{x_1} - \tau_1(x_1)\|_1 \\
		&\le& \|\nu_{x_1} - \tau_1'(x_1)\|_1 +   \|\tau_1'(x_1) - \tau_1(x_1)\|_1 \commenteq{triangle inequality}\\
		& \le & \|\nu_{x_1} - \tau_1'(x_1)\|_1 +   \epsilon_1 \commenteq{by assumption}\\
		& = & \sum_{E \in X_1} |\nu_{x_1}(E) - \tau_1'(x_1)(E)| +   \epsilon_1\commenteq{\cref{eqn:property2-of-nu-s1}}\\
		& = & \sum_{y_2 \in S_2} |\nu_{x_1}(f(y_2)) - \tau_1'(x_1)(f(y_2))| +   \epsilon_1\commenteq{definition of $X_1$}\\ 	 	
		& = &  \sum_{y_2 \in S_2} |\tau_2'(x_2)(y_2) - \tau_2(x_2)(y_2)| +   \epsilon_1\commenteq{\cref{eqn:property1-of-nu-s1} and \cref{eqn:property-of-X1}}\\ 	 	
		& \le &  \epsilon_2 +   \epsilon_1\commenteq{by assumption}\\ 	 	 			 			
	\end{eqnarray*}
	
	Let $\M_1'' = <S_1, L, \tau_1'', \ell_1>$. It remains to show that $x_1 \sim [x_2]^{\epsilon_2}_{\M_3}$ for all $x_1 \in S_1$ in the LMC combining $\M_1'' = <S_1, L, \tau_1'', \ell_1>$ and $\M_3$ where $x_2 = [x_1]^{\epsilon_1}_{\M_2}$. 
	
	
	Define a function $h: S_3 \to 2^{S_1}$ such that $h(x_3) = \{x_1 \suchthat x_2 \in g(x_3) \land x_1 \in f(x_2) \}$. A partition $X_1'$ over $S_1$ is induced by $h$: $X_1' = \{ h(x_3) \suchthat x_3 \in S_3 \}$. It suffices to show that $x_3 \sim x_1$ for any $x_3 \in S_3$ and any $x_1 \in h(x_3)$ in the LMC combining $\M_1''$ and $\M_3$. Let $x_1 \in S_1$,  $x_2 = [x_1]^{\epsilon_1}_{\M_2}$ and $x_3 = [x_2]^{\epsilon_2}_{\M_3}$. We have $x_1 \in h(x_3)$. Let $y_3 \in S_3$. Then,
	\begin{eqnarray*} 
		&&   \tau_1''(x_1)(h(y_3))  \\
		&=& \nu_{x_1}(h(y_3))  \commenteq{definition of $\tau_1''$}\\
		&=& \sum_{y_1 \in h(y_3)} \nu_{x_1}(y_1)  \\
		&=& \sum_{y_2 \in g(y_3)} \sum_{y_1 \in  f(y_2)} \nu_{x_1}(y_1) \commenteq{definition of $h$}\\
		&=& \sum_{y_2 \in g(y_3)} \nu_{x_1}(f(y_2)) \\
		&=& \sum_{y_2 \in g(y_3)} \tau_2'(x_2)(y_2)\commenteq{\cref{eqn:property1-of-nu-s1}}\\ 
		&=& \tau_2'(x_2)(g(y_3))\\ 
		&=& \tau_3(x_3)(y_3)\commenteq{\cref{eqn:property-of-g}} 				
	\end{eqnarray*}
	
	Since $\tau_1''(x_1)(h(y_3)) = \tau_3(x_3)(y_3) $ for all $y_3 \in S_3$, we have $x_1 \sim x_3$ in the LMC combining $\M_1''$ and $\M_3$. By definition of approximate quotient, we have that $\M_3$ is an $(\epsilon_1 + \epsilon_2)$-quotient of $\M_1$.
\end{proof}

\newpage
\section{Proofs of \cref{subsection:local-bisimilarity-distances}}\label{appendix:local-bisimilarity-distance}

\begin{lemma}\label{lemma:local-bisimilarity-distances-witness-transition-funciton}
Let us define a transition function $\tauHyp' \in {\rm T}$ as 
\[
	\tauHyp'(x) = \left \{
	\begin{array}{ll}
	\tauHyp(x)& \mbox{if $x \not\in \{s, t\}$}\\
	\nu_x & \mbox{otherwise}
	\end{array}
	\right .
\]
where $\nu_s$ (resp. $\nu_t$) is computed by running \cref{alg:adjust-transition-probability} with $X$, $\tauHyp(s)$ (resp. $\tauHyp(t)$) and $\frac{(\tauHyp(s)(E))_{E \in X} + (\tauHyp(t)(E))_{E \in X}}{2}$. We have $d_{\local}^{\Hyp}(s, t) = \max\{\|\tauHyp'(s) - \tauHyp(s)\|_1, \|\tauHyp'(t) - \tauHyp(t)\|_1 \} $. 
\end{lemma}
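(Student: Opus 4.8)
The plan is to split the stated equality into a \emph{feasibility} part and a \emph{value-computation} part. First I would confirm that the exhibited $\tauHyp'$ genuinely lies in ${\rm T}$; then I would compute $\max\{\|\tauHyp'(s) - \tauHyp(s)\|_1, \|\tauHyp'(t) - \tauHyp(t)\|_1\}$ in closed form and match it against the expression for $d_{\local}^{\Hyp}(s,t)$ supplied by \cref{proposition:adjust-transition-function-s-t}. Since $d_{\local}^{\Hyp}(s,t)$ is an infimum over ${\rm T}$, once $\tauHyp' \in {\rm T}$ is established, showing that the value at $\tauHyp'$ equals $\frac{1}{2}\|(\tauHyp(s)(E))_{E \in X} - (\tauHyp(t)(E))_{E \in X}\|_1 = d_{\local}^{\Hyp}(s,t)$ simultaneously proves the lemma and identifies $\tauHyp'$ as a minimiser.

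For the feasibility part I would verify that the partition $X$ (the common bisimulation partition, computable by \cref{alg:local-distance-partition}) is a probabilistic bisimulation of the LMC $\langle S, L, \tauHyp', \ell\rangle$ having $\{s,t\}$ as one of its classes, which is exactly what membership in ${\rm T}$ demands. The label condition is immediate: each class of $X$ carries a single label, because $X$ arises from bisimulation on the modified LMC of \cref{alg:local-distance-partition}, whose labelling differs from $\ell$ only on $s,t$, and since $\ell(s)=\ell(t)$ and $\{s,t\}$ is its own class, labels stay consistent within every class. For the transition condition I would split into two cases. A class $E' \neq \{s,t\}$ contains only states outside $\{s,t\}$, since $s$ and $t$ are isolated in their own class of $X$; on such states $\tauHyp'$ agrees with $\tauHyp$, so the required stability of transitions over $X$ is inherited directly from the fact that $X$ is a bisimulation of the modified LMC, where these transitions are unchanged. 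For the class $\{s,t\}$, the construction forces $(\nu_s(E))_{E \in X} = \gamma = (\nu_t(E))_{E\in X}$ with $\gamma = \frac{(\tauHyp(s)(E))_{E \in X} + (\tauHyp(t)(E))_{E \in X}}{2}$, so $s$ and $t$ have identical distributions over the classes of $X$. Hence $X$ is a probabilistic bisimulation of $\langle S,L,\tauHyp',\ell\rangle$ with $\{s,t\}$ a class, giving $\tauHyp' \in {\rm T}$.

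For the value computation I would invoke \cref{lemma:adjust-probability-distribution-exists}, applied with $\mu = \tauHyp(s)$, partition $X$, and target $\gamma$, which guarantees $\|\nu_s - \tauHyp(s)\|_1 = \|(\tauHyp(s)(E))_{E \in X} - \gamma\|_1$. Substituting the definition of $\gamma$ gives $(\tauHyp(s)(E))_{E \in X} - \gamma = \frac{1}{2}\bigl((\tauHyp(s)(E))_{E \in X} - (\tauHyp(t)(E))_{E \in X}\bigr)$, whence $\|\nu_s - \tauHyp(s)\|_1 = \frac{1}{2}\|(\tauHyp(s)(E))_{E \in X} - (\tauHyp(t)(E))_{E \in X}\|_1$; the symmetric computation yields the same value for $\|\nu_t - \tauHyp(t)\|_1$. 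The maximum of the two therefore equals $\frac{1}{2}\|(\tauHyp(s)(E))_{E \in X} - (\tauHyp(t)(E))_{E \in X}\|_1$, which is exactly $d_{\local}^{\Hyp}(s,t)$ by \cref{proposition:adjust-transition-function-s-t}.

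The main obstacle I anticipate is the feasibility step, specifically making rigorous that replacing the absorbing transitions of $s$ and $t$ (used to \emph{define} $X$) by the adjusted distributions $\nu_s,\nu_t$ neither destabilises $X$ nor forces any class other than $\{s,t\}$ to split. The key observation that defuses this is that $X$ isolates $s$ and $t$ in a single class, so every other class consists entirely of states whose transition functions are untouched by the construction; the only genuinely new verification is the single family of equalities $\nu_s(E)=\nu_t(E)$ for all $E \in X$, which is built into the common target $\gamma$.
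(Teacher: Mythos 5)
Your feasibility step (showing $\tauHyp' \in {\rm T}$ by verifying that the equivalence induced by $X$ is a probabilistic bisimulation of $\langle S, L, \tauHyp', \ell\rangle$ with $\{s,t\}$ as a class) is correct, and is in fact spelled out more carefully than in the paper, which essentially asserts membership in ${\rm T}$ after observing $(\tauHyp'(s)(E))_{E \in X} = \gamma = (\tauHyp'(t)(E))_{E \in X}$. Your computation of $\max\{\|\tauHyp'(s)-\tauHyp(s)\|_1, \|\tauHyp'(t)-\tauHyp(t)\|_1\} = \frac{1}{2}\|(\tauHyp(s)(E))_{E \in X}-(\tauHyp(t)(E))_{E \in X}\|_1$ via \cref{lemma:adjust-probability-distribution-exists} also matches the paper.

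The gap is the final step, where you identify this value with $d_{\local}^{\Hyp}(s,t)$ ``by \cref{proposition:adjust-transition-function-s-t}''. In the paper the dependency runs the other way: the proof of \cref{proposition:adjust-transition-function-s-t} consists of citing the present lemma and then substituting exactly the value you computed. So your argument is circular within the paper's logical structure, and what it actually establishes is only the upper bound $d_{\local}^{\Hyp}(s,t) \le \max\{\|\tauHyp'(s)-\tauHyp(s)\|_1, \|\tauHyp'(t)-\tauHyp(t)\|_1\}$, which is immediate because an infimum is at most the value at any feasible point. The substantive half of the lemma --- and the heart of the paper's proof --- is the matching lower bound: for \emph{every} $\tau \in {\rm T}$,
\[
\max\{\|\tau(s)-\tauHyp(s)\|_1, \|\tau(t)-\tauHyp(t)\|_1\} \;\ge\; \tfrac{1}{2}\,\|(\tauHyp(s)(E))_{E \in X}-(\tauHyp(t)(E))_{E \in X}\|_1 .
\]
The paper obtains this in three moves: (i) grouping the coordinates of the $L_1$-norm by the classes of $X$ and applying the triangle inequality, so distances can only shrink when passing to projections onto $X$; (ii) observing that any $\tau \in {\rm T}$ must give $s$ and $t$ the \emph{same} projection $\phi = (\tau(s)(E))_{E \in X} = (\tau(t)(E))_{E \in X}$, since $\{s,t\}$ is a class of the common bisimulation underlying ${\rm T}$; and (iii) noting that the midpoint $\gamma$ minimises $\phi \mapsto \max\{\|\phi - (\tauHyp(s)(E))_{E\in X}\|_1, \|\phi - (\tauHyp(t)(E))_{E\in X}\|_1\}$ over $\Dist(X)$, so the projected value at $\phi$ is at least the value at $\gamma$, which equals the value at $\tauHyp'$. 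None of this appears in your proposal, so as written it does not prove the lemma; you would need either to supply this optimality argument directly or to first prove \cref{proposition:adjust-transition-function-s-t} independently of the lemma, which amounts to the same work.
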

\begin{proof}
	Let $\gamma = \frac{(\tauHyp(s)(E))_{E \in X} + (\tauHyp(t)(E))_{E \in X}}{2}$. The following function $f: \Dist(X) \to \mathbb{R}$ attains its minimum at $\gamma$: $f(\phi) = \max\{\|\phi - (\tauHyp(s)(E))_{E \in X}\|_1, \|\phi - (\tauHyp(t)(E))_{E \in X}\|_1\} \text{ where } \phi \in \Dist(X)$.

By running \cref{alg:adjust-transition-probability} with $X$, $\tauHyp(s)$ (resp. $\tauHyp(t)$) and $\gamma$, we compute a probability distribution $\nu_s$ (resp. $\nu_t$). By \cref{lemma:adjust-probability-distribution-exists}, we have that $(\nu_s(E))_{E \in X} = (\nu_t(E))_{E \in X}  = \gamma$ and $\|\nu_s - \tauHyp(s)\|_1 = \|\gamma - (\tauHyp(s)(E))_{E \in  X}\|_1 = \|\gamma - (\tauHyp(t)(E))_{E \in  X}\|_1 = \|\nu_t - \tauHyp(t)\|_1$. Define the probability transition function $\tauHyp' \in {\rm T}$ as 	
\[
\tauHyp'(x) = \left \{
\begin{array}{ll}
\tauHyp(x)& \mbox{if $x \not\in \{s, t\}$}\\
\nu_x & \mbox{otherwise}
\end{array}
\right .
\]

Let $\Hyp' = <S, L, \tauHyp', \ell>$. We have $s\sim_{\Hyp'} t$ since $ (\tauHyp'(s)(E))_{E \in X} =(\nu_s(E))_{E \in X} = \gamma = (\nu_t(E))_{E \in X} = (\tauHyp'(t)(E))_{E \in  X}$.  It remains to show that $\tauHyp'= \arg\min_{\tau \in {\rm T} } \max\{ \|\tauHyp'(s) - \tauHyp(s)\|_1 , \|\tauHyp'(t) - \tauHyp(t)\|_1\}$. Let $\tau \in {\rm T}$. Then,
\begin{eqnarray*} 
	&&\max\big\{\|\tau(s) -\tauHyp(s)\|_1, \|\tau(t) -\tauHyp(t)\|_1\big\} \\
	&=& \max\big\{ \sum_{v \in S} |\tau(s)(v) -\tauHyp(s)(v)|, \sum_{v \in S} |\tau(t)(v)-\tauHyp(t)(v)|\big\}\\
	&=&\max\big\{\sum_{E \in X } \sum_{v \in E} |\tau(s)(v) -\tauHyp(s)(v)|, \\
	&& \phantom{\max;;}\sum_{E \in X } \sum_{v \in E} |\tau(t)(v)-\tauHyp(t)(v)|\big\}\\
	&\ge& \max\big\{\|(\tau(s)(E))_{E \in X} -(\tauHyp(s)(E))_{E \in X}\|_1 ,  \\
	&& \phantom{\max;;} \|(\tau(t)(E))_{E \in X} -(\tauHyp(t)(E))_{E \in X}\|_1\big\}\commenteq{triangle inequality}.\\
	&\ge& \max\big\{\|\gamma -(\tauHyp(s)(E))_{E \in X}\|_1 ,  \|\gamma-(\tauHyp(t)(E))_{E \in X}\|_1\big\}\\
	&& \phantom{\max;;} \commenteq{$(\tau(s)(E))_{E \in X}  = (\tau(t)(E))_{E \in X} $ since $s \mathord{\equiv_{X}} t$; $f$ attains minimum at $\gamma$}\\
	&=& \max\big\{\|(\tauHyp'(s)(E))_{E \in X} -(\tauHyp(s)(E))_{E \in X}\|_1, \\
	&& \phantom{\max;;} \|(\tauHyp'(t)(E))_{E \in X} -(\tauHyp(t)(E))_{E \in X}\|_1\big\}\\
	&=& \max\big\{\|\tauHyp'(s) -\tauHyp(s)\|_1, \|\tauHyp'(t) -\tauHyp(t)\|_1\big\} 
\end{eqnarray*}	
\end{proof}

\propositionAdjustTransitionFunction*

\begin{proof}
Let $\gamma = \frac{(\tauHyp(s)(E))_{E \in X} + (\tauHyp(t)(E))_{E \in X}}{2}$. By running \cref{alg:adjust-transition-probability} with $X$, $\tauHyp(s)$ (resp. $\tauHyp(t)$) and $\gamma$, we compute a probability distribution $\nu_s$ (resp. $\nu_t$). By \cref{lemma:adjust-probability-distribution-exists}, we have that $(\nu_s(E))_{E \in X} = (\nu_t(E))_{E \in X}  = \gamma$ and $\|\nu_s - \tauHyp(s)\|_1 = \|\gamma - (\tauHyp(s)(E))_{E \in  X}\|_1 = \|\gamma - (\tauHyp(t)(E))_{E \in  X}\|_1 = \|\nu_t - \tauHyp(t)\|_1$. Define the probability transition function $\tauHyp' \in {\rm T}$ as 	
\[
\tauHyp'(x) = \left \{
\begin{array}{ll}
\tauHyp(x)& \mbox{if $x \not\in \{s, t\}$}\\
\nu_x & \mbox{otherwise.}
\end{array}
\right .
\]

It follows from \cref{lemma:local-bisimilarity-distances-witness-transition-funciton} that $d_{\local}^{\Hyp}(s, t) = \max\{\|\tauHyp'(s) - \tauHyp(s)\|_1, \|\tauHyp'(t) - \tauHyp(t)\|_1 \} $. Then, $d_{\local}^{\Hyp}(s, t) = \max\{\|\gamma - (\tauHyp(s)(E))_{E \in X}\|_1, \|\gamma - (\tauHyp(t)(E))_{E \in X}\|_1\} = \frac{1}{2} \|(\tauHyp(s)(E))_{E \in X} - (\tauHyp(t)(E))_{E \in X}\|_1$.
\end{proof}

The following lemma holds for the LMCs $Q_i$ in \cref{alg:local-distance-merge-algorithm}.
\begin{restatable}{lemma}{lemmaSmallGlobalDistanceLoopInvariant}\label{lemma:small-global-distance-loop}
	For all $i \in \nat$, we have $Q_{i+1}$ is an $\epsilon_2$-quotient of $Q_{i}$.
\end{restatable}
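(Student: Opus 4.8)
The plan is to produce, for each $i$, a witness transition function on the state space $S^{\Q_i}$ of $\Q_i$ that stays within $L_1$-distance $\epsilon_2$ of $\tau^{\Q_i}$ and whose exact bisimulation quotient is $\Q_{i+1}$; this is exactly what the definition of an $\epsilon_2$-quotient requires. The natural witness is the function $\tau'$ supplied by \cref{lemma:local-bisimilarity-distances-witness-transition-funciton} applied with $\Hyp := \Q_i$: set $\tau'(x) = \tau^{\Q_i}(x)$ for $x \notin \{s,t\}$ and $\tau'(s) = \nu_s$, $\tau'(t) = \nu_t$, where $\nu_s,\nu_t$ are the distributions with common marginals $\gamma = \frac{(\tau^{\Q_i}(s)(E))_{E \in X_i} + (\tau^{\Q_i}(t)(E))_{E \in X_i}}{2}$ produced by \cref{alg:adjust-transition-probability}. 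Write $\Q_i' := \langle S^{\Q_i}, L, \tau', \ell^{\Q_i}\rangle$. The distance bound is immediate: for $x \notin \{s,t\}$ we have $\|\tau'(x) - \tau^{\Q_i}(x)\|_1 = 0$, while \cref{lemma:local-bisimilarity-distances-witness-transition-funciton} gives $\max\{\|\tau'(s) - \tau^{\Q_i}(s)\|_1, \|\tau'(t) - \tau^{\Q_i}(t)\|_1\} = d_{\local}^{\Q_i}(s,t)$, and the pair $(s,t)$ enters the while loop only when $d_{\local}^{\Q_i}(s,t) \le \epsilon_2$; hence $\|\tau'(x) - \tau^{\Q_i}(x)\|_1 \le \epsilon_2$ for every $x \in S^{\Q_i}$.

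The core of the argument is to identify $\Q_{i+1}$ with the bisimulation quotient of $\Q_i'$. First I would argue that the partition $X_i$ returned by \cref{alg:local-distance-partition} is a probabilistic bisimulation (as an equivalence) on $\Q_i'$. By construction $\{s,t\}$ is a block of $X_i$: in the auxiliary LMC of \cref{alg:local-distance-partition} the states $s,t$ carry a fresh label and are both absorbing, so they are bisimilar to each other and to no other state. On the remaining states $\tau'$ coincides with $\tau^{\Q_i}$, which in turn coincides with the transition function of that auxiliary LMC. Thus for two states in a common block $E \ne \{s,t\}$ their $\tau'$-distributions over $X_i$ agree because they are bisimilar in the auxiliary LMC, whereas for the block $\{s,t\}$ we have $(\nu_s(E'))_{E' \in X_i} = \gamma = (\nu_t(E'))_{E' \in X_i}$ by \cref{lemma:adjust-probability-distribution-exists}; labels agree in every block since $s,t$ were chosen with equal label. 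Hence $X_i$ is a bisimulation on $\Q_i'$.

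Next I would verify that the LMC $\M_{i+1}$ built on line~6 is exactly the lumped quotient $\Q_i'/_{X_i}$: for a block $E \ne \{s,t\}$ its outgoing distribution is $(\tau^{\Q_i}(u)(E'))_{E' \in X_i} = (\tau'(u)(E'))_{E' \in X_i}$ for any $u \in E$, and for $\{s,t\}$ it is the average $\gamma = (\tau'(s)(E'))_{E' \in X_i} = (\tau'(t)(E'))_{E' \in X_i}$, matching the induced quotient transitions; the labels match likewise. Since quotienting an LMC by a bisimulation equivalence and then by bisimilarity yields the same LMC as quotienting directly by bisimilarity, we obtain $\Q_{i+1} = \M_{i+1}/_{\sim_{\M_{i+1}}} = (\Q_i'/_{X_i})/_{\sim} = \Q_i'/_{\sim_{\Q_i'}}$. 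Together with the distance bound, $\Q_i'$ witnesses that $\Q_{i+1}$ is an $\epsilon_2$-quotient of $\Q_i$.

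The main obstacle, I expect, is the bookkeeping in the two middle steps: showing that the partition $X_i$, which is computed on the \emph{unperturbed} $\Q_i$ (with $s,t$ made absorbing), remains a genuine bisimulation once we switch to the \emph{perturbed} distributions $\nu_s,\nu_t$, and checking that the hand-built transition function of $\M_{i+1}$ agrees on the nose with the canonical quotient $\Q_i'/_{X_i}$. The cleanest way to handle the final collapse is to invoke, or state as a small auxiliary lemma, the standard fact that lumping by any bisimulation equivalence before minimising does not change the minimisation result.
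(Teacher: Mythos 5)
Your proposal is correct and follows essentially the same route as the paper's proof: construct the witness $\tau'$ from \cref{lemma:local-bisimilarity-distances-witness-transition-funciton} (i.e., $\nu_s,\nu_t$ with common marginal $\gamma$ over $X_i$), bound all $L_1$-perturbations by $d_{\local}^{\Q_i}(s,t)\le\epsilon_2$ via the loop guard, and identify $\Q_{i+1}$ with the exact bisimulation quotient of the perturbed LMC. The only difference is that you spell out the step the paper dismisses as ``not hard to prove'' (that the perturbed LMC is bisimilar to $\M_{i+1}$), by showing $X_i$ is a bisimulation equivalence on $\Q_i'$ and that $\M_{i+1}$ is exactly the lumped LMC $\Q_i'/_{X_i}$, which is a welcome addition rather than a departure.
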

\begin{proof}
	Let $i \in \nat$. Assume the algorithm steps into the $i$'th iteration and selects state $s$ and $t$ from $\Q_i$ which have the least local bisimilarity distance. 
	
	Let $\gamma = \frac{(\tau^{\Q_i}(s)(E))_{E \in X_i} + (\tau^{\Q_i}(t)(E))_{E \in X_i}}{2}$. By running \cref{alg:adjust-transition-probability} with $X$, $\tau^{\Q_i}(s)$ (resp. $\tau^{\Q_i}(t)$) and $\gamma$, we compute a probability distribution $\nu_s$ (resp. $\nu_t$). By \cref{lemma:adjust-probability-distribution-exists}, we have that $(\nu_s(E))_{E \in X_i} = (\nu_t(E))_{E \in X_i}  = \gamma$ and $\|\nu_s - \tau^{\Q_i}(s)\|_1 = \|\gamma - (\tau^{\Q_i}(s)(E))_{E \in  X_i}\|_1 = \|\gamma - (\tau^{\Q_i}(t)(E))_{E \in  X_i}\|_1 = \|\nu_t - \tau^{\Q_i}(t)\|_1$. Define the probability transition function $\tau' \in {\rm T}$ as 	
	\[
	\tau'(x) = \left \{
	\begin{array}{ll}
	\tau^{\Q_i}(x)& \mbox{if $x \not\in \{s, t\}$}\\
	\nu_x & \mbox{otherwise.}
	\end{array}
	\right .
	\]
	
	It follows from \cref{lemma:local-bisimilarity-distances-witness-transition-funciton} that $d_{\local}^{\Q_{i}} (s, t) = \max\{\|\tau^{\Q_{i}}(s) - \tau'(s) \|_1, \|\tau^{\Q_{i}}(t) - \tau'(t)\|_1\}$. By the definition of $\tau'$, for all $u$ from $\Q_{i}$ we have \begin{equation}\label{eqn:bound-L1-distance-for-all-states}
	\|\tau^{\Q_{i}}(u) - \tau'(u)\|_1 \le \epsilon_2
	\end{equation} 
	since $\|\tau^{\Q_{i}}(u) - \tau'(u)\|_1 \le \max\{\|\tau^{\Q_{i}}(s) - \tau'(s) \|_1, \|\tau^{\Q_{i}}(t) - \tau'(t)\|_1\} = d_{\local}^{\Q_{i}} (s, t) \le \epsilon_2$.

	It is not hard to prove that the LMC $\Hyp' = <S^{\Q_i}, L, \tau', \ell^{\Q_{i}}>$ is probabilistic bisimilar to the LMC $\M_{i+1}'$. Then, since $\Q_{i+1}$ is probabilistic bisimilar to $\M_{i+1}'$, we have $\Q_{i+1}$ is probabilistic bisimilar to $\Hyp'$. Thus, by \eqref{eqn:bound-L1-distance-for-all-states} and that $Q_{i+1}$ is a quotient, we have $Q_{i+1}$ is an $\epsilon_2$-quotient of $Q_i$.

\end{proof}

%

\section{Proofs of \cref{subsection:approximate-partition-refinement}}\label{appendix: approximate-partition-refinement}

The following lemma holds for the LMCs $Q_i$ in \cref{alg:approximate-partition-refinement-merge-algorithm}.
\begin{lemma}\label{lemma:small-global-distance-loop-approximate-partition-refinement}
	For all $i \in \nat$, we have $Q_{i+1}$ is an $\epsilon_2$-quotient of $Q_{i}$.
\end{lemma}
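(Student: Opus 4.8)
The plan is to exhibit, following the pattern of the proof of \cref{lemma:small-global-distance-loop}, a perturbed transition function $\tau'$ on the state space $S^{\Q_i}$ of $\Q_i$ that witnesses $\Q_{i+1}$ as an $\epsilon_2$-quotient of $\Q_i$. By the definition of an $\epsilon_2$-quotient I must find $\tau'\colon S^{\Q_i} \to \Dist(S^{\Q_i})$ with $\|\tau'(u) - \tau^{\Q_i}(u)\|_1 \le \epsilon_2$ for every $u \in S^{\Q_i}$ such that $\Q_{i+1}$ is the exact bisimulation quotient of $\Hyp' := <S^{\Q_i}, L, \tau', \ell^{\Q_i}>$. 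The crucial input is the invariant established for the partition $X_i$ returned by \cref{alg:polynomial-optimistic-partition-refinement}: for any cell $E \in X_i$ and any $u, v \in E$ we have $\ell^{\Q_i}(u) = \ell^{\Q_i}(v)$ and $\|(\tau^{\Q_i}(u)(E'))_{E' \in X_i} - (\tau^{\Q_i}(v)(E'))_{E' \in X_i}\|_1 \le \epsilon_2$.

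For each cell $E \in X_i$ I set the target distribution over cells to be the average $\gamma_E := \tau^{\M_{i+1}}(E) = \frac{1}{|E|}\sum_{u \in E} (\tau^{\Q_i}(u)(E'))_{E' \in X_i}$, exactly the distribution used on line~5 of \cref{alg:approximate-partition-refinement-merge-algorithm}. For each state $u \in E$ I invoke \cref{lemma:adjust-probability-distribution-exists} with the partition $X_i$, the distribution $\tau^{\Q_i}(u)$ and the target $\gamma_E$, obtaining $\nu_u \in \Dist(S^{\Q_i})$ with $(\nu_u(E'))_{E' \in X_i} = \gamma_E$ and $\|\nu_u - \tau^{\Q_i}(u)\|_1 = \|(\tau^{\Q_i}(u)(E'))_{E' \in X_i} - \gamma_E\|_1$. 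I then define $\tau'(u) := \nu_u$ for all $u$.

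To bound the perturbation I use that $\gamma_E$ is the mean of the cell distributions together with the triangle inequality: for $u \in E$,
\begin{align*}
\|\tau'(u) - \tau^{\Q_i}(u)\|_1
&= \Big\| (\tau^{\Q_i}(u)(E'))_{E' \in X_i} - \tfrac{1}{|E|}\sum_{v \in E}(\tau^{\Q_i}(v)(E'))_{E' \in X_i}\Big\|_1 \\
&\le \tfrac{1}{|E|}\sum_{v \in E} \|(\tau^{\Q_i}(u)(E'))_{E' \in X_i} - (\tau^{\Q_i}(v)(E'))_{E' \in X_i}\|_1 \le \epsilon_2,
\end{align*}
where the last inequality applies the cell invariant to each summand (the $v = u$ term being $0$). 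Thus $\tau'$ meets the distance requirement.

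It remains to identify the exact quotient of $\Hyp'$ with $\Q_{i+1}$. By construction every state $u$ of a cell $E$ has the same label and the same distribution $\gamma_E$ over the cells of $X_i$, so $X_i$ is a probabilistic bisimulation of $\Hyp'$ and its lumping $\Hyp'/_{X_i}$ is precisely $\M_{i+1}$; in particular $\Hyp'$ and $\M_{i+1}$ are probabilistic bisimilar in $\Hyp' \oplus \M_{i+1}$. Since $\Q_{i+1} = \M_{i+1}/_{\sim_{\M_{i+1}}}$ is bisimilar to $\M_{i+1}$, it is bisimilar to $\Hyp'$, and as $\Q_{i+1}$ is itself a quotient (no two of its states are bisimilar) it must equal $\Hyp'/_{\sim_{\Hyp'}}$. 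Together with the per-state bound this shows $\Q_{i+1}$ is an $\epsilon_2$-quotient of $\Q_i$. I expect the main obstacle to be this last step---carefully justifying that lumping by the (not necessarily coarsest) bisimulation $X_i$ and then taking the exact quotient coincides with the exact quotient of $\Hyp'$---which is the analogue of the ``not hard to prove'' bisimilarity claim in \cref{lemma:small-global-distance-loop}; everything else reduces to \cref{lemma:adjust-probability-distribution-exists} and a triangle inequality.
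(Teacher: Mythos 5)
Your proposal is correct and follows essentially the same route as the paper's proof: the same average target distribution $\gamma_E = \tau^{\M_{i+1}}(E)$ per cell, the same use of \cref{lemma:adjust-probability-distribution-exists} to lift it back to a distribution $\nu_u$ on $S^{\Q_i}$, the same triangle-inequality bound via the partition-refinement invariant, and the same concluding bisimilarity argument $\Hyp' \sim \M_{i+1} \sim \Q_{i+1}$. If anything, you spell out the final identification of $\Q_{i+1}$ with the exact quotient of $\Hyp'$ slightly more carefully than the paper, which leaves that step as ``not hard to prove.''
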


\begin{proof}
	Let $i \in \nat$. Let $x \in S^{\Q_i}$ and $E_x \in  X_i$ be the set such that $x \in E_x$. Let $\gamma_{E_x} = \sum_{u \in  E_x} \frac{(\tau^{\Q_i}(u)(E))_{E \in X}}{|E_x|}$ be a probability distribution over $X_i$. We run \cref{alg:adjust-transition-probability} with $X, \tau^{\Q_i}(x), \gamma_{E_x}$ to obtain $\nu_x$, a probability distribution over $S^{\Q_i}$. By \cref{lemma:adjust-probability-distribution-exists}, we have $(\nu_x(E))_{E \in X_i} = \gamma_{E_x}$ and $\|\nu_x - \tau^{\Q_i}(x)\|_1 = \sum_{E \in X_i} |\nu_x(E) - \tau^{\Q_i}(x)(E)|$. Define the transition function $\tau'$ as $\tau'(x) = \nu_x$ for all $x \in S^{\Q_i}$.
	
	We have  
	\begin{equation}\label{eqn:bound-transition-function-2}
	\|\tau'(x) - \tau^{\Q_i}(x)\|_1 \le \epsilon_2
	\end{equation}
	since 
	\begin{eqnarray*}
		&& |\tau'(x) - \tau^{\Q_i}(x)| \\
		&=& \sum_{E \in X_i}  |\nu_x(E) - \tau^{\Q_i}(x)(E)| \\
		& = & \sum_{E \in X_i} |\gamma_{E_x}(E) - \tau^{\Q_i}(x)(E)|\\
		& = & \sum_{E \in X_i} |\big(\sum_{u \in  E_x} \frac{\tau^{\Q_i}(u)(E)}{|E_x|}\big) - \tau^{\Q_i}(x)(E)|\\
		& = & \sum_{E \in X_i}|\sum_{u \in  E_x} \frac{\tau^{\Q_i}(u)(E) - \tau^{\Q_i}(x)(E)}{|E_x|} | \\
		& \le & 	\sum_{E \in X_i}| \sum_{u \in  E_x} \frac{|\tau^{\Q_i}(u)(E) - \tau^{\Q_i}(x)(E)|}{|E_x|} | \commenteq{triangle inequality}\\
		& = & 	\sum_{u \in  E_x}  \sum_{E \in X_i} \frac{|\tau^{\Q_i}(u)(E) - \tau^{\Q_i}(x)(E)|}{|E_x|} \\
		& = & 	\sum_{u \in  E_x}  \frac{\sum_{E \in X_i} |\tau^{\Q_i}(u)(E) - \tau^{\Q_i}(x)(E)|}{|E_x|} \\
		&\le& \sum_{u \in  E_x} \frac{\epsilon_2}{|E_x|}  \commenteq{$\forall u \in E_x: \sum_{E \in X_i}|\tau^{\Q_i}(u)(E) - \tau^{\Q_i}(x)(E)| \le \epsilon_2 $}\\
		& =& \epsilon_2.
	\end{eqnarray*}

	Let us consider the LMC $\Q_i' = <S^{Q_i}, L, \tau', \ell^{\Q_{i}}>$. Since for all $s, t$ in the same set $E$ of $X_i$ we have $\ell^{\Q_{i}}(s) = \ell^{\Q_{i}}(t)$ and $\tau'(s) = \tau'(t)$, states in the same set of $E$ of $X_i$ are probabilistic bisimilar in the LMC $\Q_i'$. It is not hard to prove that $\Q_i'$ is probabilistic bisimilar to $\M_{i+1}$. Since $\Q_{i+1}$ is probabilistic bisimilar to $\M_{i+1}$, $\Q_{i+1}$ is probabilistic bisimilar to $\Q_{i}'$. By \eqref{eqn:bound-transition-function-2} and that $\Q_{i+1}$ is a quotient LMC, we have $\Q_{i+1}$ is an $\epsilon_2$-quotient of $\Q_i$.
\end{proof}

\theoremBoundGlobalDistanceApproximatePartitionRefinement*
\begin{proof}
	Let $i \in \nat$. The first part of the statement follows from \cref{lemma:small-global-distance-loop} for $\Q_i$ in \cref{alg:local-distance-merge-algorithm}. The first part of the statement follows from \cref{lemma:small-global-distance-loop-approximate-partition-refinement} for $\Q_i$ in \cref{alg:approximate-partition-refinement-merge-algorithm}. 
	
	We prove the second part of the statement by induction. The base case $i = 0$ follows from the fact that $\Q_0 = \Hyp /_{\sim_{\Hyp}}$. For the induction step, we assume $\Q_{i}$ is an $i\epsilon_2$-quotient of $\Hyp$. Then, from the first part that $\Q_{i+1}$ is an $\epsilon_2$-quotient of $\Q_{i}$ and the additivity lemma, we have $\Q_{i+1}$ is an $(i+1)\epsilon_2$-quotient of $\Hyp$.
\end{proof}

\corollaryBoundQuotientError*
\begin{proof}
	It is not hard to see that $\Q_0$ is an $\epsilon$-quotient of $\M$. It then follows from \cref{theorem:bounding-global-distance} and \cref{lemma:additivity-property} that $\Q_i$ is an $(\epsilon+i\epsilon_2)$-quotient of $\M$.
\end{proof}

\section{More Experimental Results}\label{appendix:more-results}

Let us fix the error bound $\delta = 0.01$. In the tables, local and apr stand for the minimisation algorithms using local bisimilarity distance and approximate partition refinement, respectively. The rows are highlighted in yellow when the structure of the quotient of the original model is successfully recovered. While the rows are highlighted in red when $\epsilon_2$ is too big, that is, \modify{the minimisation algorithms aggressively merge some states in the perturbed LMCs, resulting in quotients of which the size are smaller than that of the quotients of the unperturbed LMCs}. 


\begin{figure}[h!]
\begin{floatrow}
	\capbtabbox[.49\textwidth]{
		\noindent\begin{tabular}{|c|c|c|c|}
			\hline 
			\multicolumn{4}{|c|}{Herman3}\\
			\hline
			\multirow{1}{*}{\shortstack[l]{$\epsilon = 0.001$}}& 
			\# states&	\# trans&	\# iter\\
			\hline 						
			$\M$ \& $\Hyp$	&		 		8	  &			28				&\\
			$\M/_{\sim_{\M}}$	&		 		2	  &			3				&\\		
			$\Hyp/_{\sim_{\Hyp}}$      & 		3  &	       7             & \\
			\hline
			\multicolumn{4}{|c|}{Perturbed LMC \#1 - \#5}\\ 
			\hline
			\multicolumn{4}{|c|}{$\epsilon_2 \in\{ 0.00001,  0.0001\}$}\\
			\hline
			local      & 3 & 7 & 0  \\
			apr		  & 3 & 7 & 0  \\
			\hline
			\multicolumn{4}{|c|}{$\epsilon_2 \in \{0.001, 0.01, 0.1\}$}\\
			\hline
			\rowcolor{yellow}
			local     & 2 & 3 & 1  \\
			\rowcolor{yellow}	
			apr		  & 2 & 3 & 1  \\
			\hline	
			\rowcolor{white}
			\multicolumn{4}{c}{}\\
			\multicolumn{4}{c}{}\\
			\multicolumn{4}{c}{}\\
			\multicolumn{4}{c}{}\\
			\multicolumn{4}{c}{}\\
		\end{tabular}
	}{}
	\capbtabbox[.49\textwidth]{
		\begin{tabular}{|c|c|c|c|}
		\hline 
		\multicolumn{4}{|c|}{Herman7}\\
		\hline
		\multirow{1}{*}{\shortstack[l]{$\epsilon = 0.001$}}& 
		\# states&	\# trans&	\# iter\\
		\hline 						
		$\M$ \& $\Hyp$	&		 		128	  &			2188				&\\
		$\M/_{\sim_{\M}}$	&		 		9	  &			49				&\\		
		$\Hyp/_{\sim_{\Hyp}}$      & 		115  &	       1925            & \\
		\hline
		\multicolumn{4}{|c|}{Perturbed LMC \#2}\\ 
		\hline
		\multicolumn{4}{|c|}{$\epsilon_2 \in\{ 0.00001,  0.0001\}$}\\
		\hline
		local \& apr      & 115 & 1925 & 0  \\
		\hline
		\multicolumn{4}{|c|}{$\epsilon_2 = 0.001$}\\
		\hline
		local      & 114 & 1809 & 1  \\
		apr		  & 115 & 1925 & 0  \\
		\hline
		\multicolumn{4}{|c|}{$\epsilon_2 = 0.01$}\\
		\hline
		local      & 114 & 1809 & 1  \\
		\rowcolor{yellow}
		apr		  & 9 & 49 & 1  \\
		\rowcolor{white}
		\hline		
		\multicolumn{4}{|c|}{$\epsilon_2 = 0.1$}\\
		\hline
		local      & 114 & 1809 & 1  \\
		\rowcolor{red!40}
		apr		  & 10 & 60 & 1  \\
		\hline
	\end{tabular}	
	}{}
\end{floatrow}
\end{figure}
The next tables show the results of running the two minimisation algorithms on the LMCs that model Herman's self-stabilisation algorithm with $3$ processes and $7$ processes, respectively. All the perturbed LMCs are obtained by sampling with $\epsilon = 0.001$. For the model with $7$ processes, only the minimisation algorithm using approximate partition refinement with $\epsilon_2 = 0.01$ can recover the structure of the quotient of the original model. For this model,  though the minimisation algorithm with $\epsilon_2 = 0.1$ does not perfectly recover the structure of the quotient of the original LMC, the final minimised LMC is quite close. 


\begin{figure}[h!]	
	\begin{floatrow}
		\capbtabbox[.49\textwidth]{
			\noindent\begin{tabular}{|c|c|c|c|}
				\hline 
				\multicolumn{4}{|c|}{Herman13}\\
				\hline
				\multirow{1}{*}{\shortstack[l]{$\epsilon = 0.001$}}& 
				\# states&	\# trans&	\# iter\\
				\hline 						
				$\M$ \& $\Hyp$	&		 8192	& 1594324			&\\
				$\M/_{\sim_{\M}}$	&		 		190	& 12857			&\\		
				$\Hyp/_{\sim_{\Hyp}}$      & 		8167 &	1585929            & \\
				\hline
				\multicolumn{4}{|c|}{Perturbed LMC \#2}\\ 
				\hline
				\multicolumn{4}{|c|}{$\epsilon_2 \in\{ 0.00001, 0.0001\}$}\\
				\hline
				apr      & 8167	&1585929&	0 \\
				\hline
				\multicolumn{4}{|c|}{$\epsilon_2 = 0.001$}\\
				\hline
				apr		  &8166	&1577761&	1 \\
				\hline		
				\multicolumn{4}{|c|}{$\epsilon_2 = 0.01$}\\
				\hline
				apr		  &192	&13250	&1 \\
				\hline							
				\multicolumn{4}{|c|}{$\epsilon_2 = 0.1$}\\
				\hline
				\rowcolor{red!40}	
				apr		   & 7608 &	1439629 &	1  \\
				\hline
							\rowcolor{white}	
				\multicolumn{4}{c}{}\\
				\multicolumn{4}{c}{}\\
			\end{tabular}
		}{}
		\capbtabbox[.49\textwidth]{
			\begin{tabular}{|c|c|c|c|}
				\hline
				\multicolumn{4}{|c|}{Herman15}\\
				\hline
				\multirow{1}{*}{\shortstack[l]{$\epsilon = 0.0001$}}& 
				\# states&	\# trans&	\# iter\\
				\hline 
				$\M$ \& $\Hyp$ &	32768 &	14348908    & \\
				$\M/_{\sim_{\M}}$	&		 		612	& 104721	  		&\\								
				$\Hyp/_{\sim_{\Hyp}}$     & 	32739 &	14323591    & \\
				\hline
				\multicolumn{4}{|c|}{Perturbed LMC \#3}\\ 
				\hline
				\multicolumn{4}{|c|}{$\epsilon_2 = 0.00001$}\\
				\hline
				apr		&  32739 &	14323591 &	0\\
				\hline
				\multicolumn{4}{|c|}{$\epsilon_2 = 0.0001$}\\
				\hline
				apr		&  32738 &	14290851&	1\\
				\hline			
				\multicolumn{4}{|c|}{$\epsilon_2 =  0.001$}\\
				\hline
				apr		 &  3489 &	1918364	& 1 \\
				\hline		
				\multicolumn{4}{|c|}{$\epsilon_2 =  0.01$}\\
				\hline
				\rowcolor{yellow}
				apr		 &  612	& 104721 &	1\\
				\rowcolor{white}
				\hline					
				\multicolumn{4}{|c|}{$\epsilon_2 = 0.1$}\\
				\hline
				\rowcolor{red!40}
				apr		 &  25893	& 11090774	& 1\\
				\hline
			\end{tabular}	
		}{}
	\end{floatrow}
\end{figure}


The table above (left) shows the results of running the minimisation algorithm using approximate partition refinement on the LMC that models Herman's self-stabilisation algorithm with $13$ processes. The perturbed LMCs are obtained by perturbing the probabilities with $\epsilon = 0.001$. For this model, the minimisation algorithm could not recover the structure of the quotient of the original model, but the minimised LMC obtained with $\epsilon_2 = 0.01$ is quite close. The table above (right) shows the results of running the minimisation algorithms using approximate partition refinement on the LMC that models Herman's self-stabilisation algorithm with $15$ processes. The perturbed LMCs are obtained by perturbing the probabilities with $\epsilon = 0.0001$. For this model, the minimisation algorithm successfully recovers the quotient of the original model. However, when $\epsilon_2 = 0.1$, the value is too big that the minimisation algorithm\modify{aggressively merges states in the perturbed model and results in a quotient of which the size is even smaller than that of the quotient of the unperturbed LMC.}

For the LMCs that model the synchronous leader election protocol by Itai and Rodeh, the exact partition refinement can always recover the structure of the quotient of the original model. Let $\epsilon \in \{0.00001 ,0.0001, 0.001, 0.01, 0.1\}$. The tables on the top of the page show the results of running the two minimisation algorithms on the LMCs that model the synchronous leader election protocol by Itai and Rodeh with $N =5, K=5$ and $N =6, K=4$ on the left and right, respectively. All the perturbed LMCs are obtained by perturbing the probabilities. 

\begin{figure}[t]	
	\begin{floatrow}
	\capbtabbox[.49\textwidth]{
	\noindent\begin{tabular}{|c|c|c|c|}
	\hline
	Leader5-5&	\# states&	\# trans&	\# iter\\
	\hline \hline						
	$\M$ \& $\Hyp$		& 	12709   &			15833	    &\\
	$\M /_{\sim_{\M}}$  &			12  &	       13             & \\
	\rowcolor{yellow}
	$\Hyp /_{\sim_{\Hyp}}$  &			12  &	       13             & \\
	\rowcolor{white}
	\hline
	\multicolumn{4}{|c|}{Perturbed LMC \#1-\#5} \\
	\hline
	\multicolumn{4}{|c|}{$\epsilon_2 \in \{0.00001, 0.0001, 0.001, 0.01, 0.1\}$}\\
	\hline
	\rowcolor{yellow}
	 apr \& local	 & 12 & 13 & 0  \\
	\hline
	\end{tabular}
	}{}
\capbtabbox[.49\textwidth]{
	\begin{tabular}{|c|c|c|c|}
		\hline
		Leader6-4&	\# states&	\# trans&	\# iter\\
		\hline \hline						
		$\M$ \& $\Hyp$		& 	20884   &			24979	    &\\
		$\M /_{\sim_{\M}}$  &			14  &	       15             & \\
		\rowcolor{yellow}
		$\Hyp /_{\sim_{\Hyp}}$  &			14  &	       15             & \\
		\rowcolor{white}
		\hline
		\multicolumn{4}{|c|}{Perturbed LMC \#1-\#5} \\
		\hline
		\multicolumn{4}{|c|}{$\epsilon_2 \in \{0.00001, 0.0001, 0.001, 0.01, 0.1\}$}\\
		\hline
		\rowcolor{yellow}
		apr	\& local & 14 & 15 & 0  \\
		\hline
	\end{tabular}
	}{}
\end{floatrow}
\end{figure}


\begin{figure}[h!]
	\begin{floatrow}
		\capbtabbox[.49\textwidth]{
			\noindent\begin{tabular}{|c|c|c|c|}
				\hline
				\multicolumn{4}{|c|}{BRP16-3}\\
				\hline
				\multirow{1}{*}{\shortstack[l]{$\epsilon = 0.01$}}& 
				\# states&	\# trans&	\# iter\\
				\hline 						
				$\M$ \& $\Hyp$	&		 		886	  &			1155				&\\
				$\M/_{\sim_{\M}}$	&		 		440	  &			616				&\\		
				$\Hyp/_{\sim_{\Hyp}}$      & 		671  &	       940             & \\
				\hline
				\multicolumn{4}{|c|}{Perturbed LMC \#3}\\ 
				\hline
				\multicolumn{4}{|c|}{$\epsilon_2 = 0.00001$}\\
				\hline
				apr      & 668 & 936 & 1  \\
				\hline
				\multicolumn{4}{|c|}{$\epsilon_2 = 0.0001$}\\
				\hline
				apr		  & 651 & 912 & 1  \\
				\hline
				\multicolumn{4}{|c|}{$\epsilon_2 = 0.001$}\\
				\hline
				apr		  & 574 & 806 & 2  \\
				\hline	
				\multicolumn{4}{|c|}{$\epsilon_2 = 0.01$}\\
				\hline
				apr		  & 472 & 663 & 2  \\
				\hline						
				\multicolumn{4}{|c|}{$\epsilon_2 = 0.1$}\\
				\hline
				\rowcolor{red!40}
				apr		   & 100 & 195 & 1   \\
				\hline
			\end{tabular}
		}{}
		\capbtabbox[.49\textwidth]{
			\begin{tabular}{|c|c|c|c|}
				\hline
				\multicolumn{4}{|c|}{BRP64-4}\\
				\hline
				\multirow{1}{*}{\shortstack[l]{$\epsilon = 0.001$}}& 					
				\# states&	\# trans&	\# iter\\
				\hline 
				$\M$ \& $\Hyp$ &	4359		 &			5763    & \\
				$\M/_{\sim_{\M}}$	&		 		2185	&  3081	  		&\\								
				$\Hyp/_{\sim_{\Hyp}}$     & 		3453	& 4857      & \\
				\hline
				\multicolumn{4}{|c|}{Perturbed LMC \#4}\\ 
				\hline
				\multicolumn{4}{|c|}{$\epsilon_2 = 0.00001$}\\
				\hline
				apr		&  3377 &	4754 &	1\\
				\hline
				\multicolumn{4}{|c|}{$\epsilon_2 = 0.0001$}\\
				\hline
				apr		&  3034 &	4279&	2\\
				\hline			
				\multicolumn{4}{|c|}{$\epsilon_2 =  0.001$}\\
				\hline
				apr		 &  2350 &	3318&	4 \\
				\hline
				\multicolumn{4}{|c|}{$\epsilon_2 =  0.01$}\\
				\hline
				\rowcolor{yellow}
				apr		 &  2185 &	3081 &	1 \\
				\rowcolor{white}
				\hline			
				\multicolumn{4}{|c|}{$\epsilon_2 = 0.1$}\\
				\hline
				\rowcolor{red!40}
				apr		 &  388	&771&	1\\
				\hline
			\end{tabular}	
		}{}
	\end{floatrow}
\end{figure}

The table above (left) shows the results of running the minimisation algorithm using approximate partition refinement on the LMC that models the bounded retransmission protocol with $N =16$ and $\mathit{MAX} = 3$. The perturbed LMCs are obtained by perturbing the probabilities with $\epsilon = 0.01$. The minimisation algorithm fails to recover the structure of the quotient of the original model in this case. This is due to the fact that the only $\epsilon_2$ in this experiment that is greater than $\epsilon$ is $0.1$. This value, however, is too big and the minimisation algorithm aggressively merges states in the perturbed model.  The table above (right) shows the results of running the minimisation algorithm using approximate partition refinement on the LMC that models the bounded retransmission protocol (with $N =64$ and $\mathit{MAX} = 4$). The perturbed LMCs are obtained by perturbing the probabilities with $\epsilon = 0.001$. The minimisation algorithms successfully recover the structure of the quotient of the original model when $\epsilon_2 = 0.01$. When $\epsilon_2 = 0.1$, it is again too big and the minimisation algorithm aggressively merges states in the perturbed model.


The table above (left) shows the results of running the minimisation algorithms using approximate partition refinement on the LMC that models the Crowds protocol \cite{ReiterR98} with $\mathit{TotalRuns}=4$ and $\mathit{CrowdSize}=5$. The perturbed LMCs are obtained by perturbing the probabilities with $\epsilon = 0.001$. The table above (left) shows the results of running the minimisation algorithms using approximate partition refinement on the LMC that models the Crowds protocol \cite{ReiterR98} with $\mathit{TotalRuns}=6$ and $\mathit{CrowdSize}=5$. The perturbed LMCs are obtained by perturbing the probabilities with $\epsilon = 0.01$.  For both models, the minimisation algorithms successfully recover the structure of the quotient of the original model when $\epsilon_2 \gr \epsilon$. 

\begin{figure}[t!]
	\begin{floatrow}
		\capbtabbox[.49\textwidth]{
		\noindent\begin{tabular}{|c|c|c|c|}
			\hline 
			\multicolumn{4}{|c|}{Crowds4-5}\\
			\hline
			\multirow{1}{*}{\shortstack[l]{$\epsilon = 0.0001$}}& 			
			\# states&	\# trans&	\# iter\\
			\hline 						
			$\M$ \& $\Hyp$	&		 	3515 &	6035				&\\
			$\M/_{\sim_{\M}}$	&		 		34	  &			42				&\\		
			$\Hyp/_{\sim_{\Hyp}}$      & 		1679 &	4199            & \\
			\hline
			\multicolumn{4}{|c|}{Perturbed LMC \#5}\\ 
			\hline
			\multicolumn{4}{|c|}{$\epsilon_2 = 0.00001$}\\
			\hline
			apr      & 1656	& 4087 &	1  \\
			\hline
			\multicolumn{4}{|c|}{$\epsilon_2 = 0.0001$}\\
			\hline
			apr		  &735	&1320&	4 \\
			\hline					
			\multicolumn{4}{|c|}{$\epsilon_2 = \{0.001, 0.01, 0.1\}$}\\
			\hline
			\rowcolor{yellow}
			apr		   & 34	&42	&1   \\
			\rowcolor{white}
			\hline
			\multicolumn{4}{c}{}\\
			\multicolumn{4}{c}{}\\
		\end{tabular}
	}{}
	\capbtabbox[.49\textwidth]{
		\begin{tabular}{|c|c|c|c|}
			\hline
			\multicolumn{4}{|c|}{Crowds6-5}\\
			\hline
			\multirow{1}{*}{\shortstack[l]{$\epsilon = 0.001$}}& 			
			\# states&	\# trans&	\# iter\\
			\hline 
			$\M$ \& $\Hyp$ &	18817	& 32677    & \\
			$\M/_{\sim_{\M}}$	&		 		50	&  62	  		&\\								
			$\Hyp/_{\sim_{\Hyp}}$     & 	9237 &	23097      & \\
			\hline
			\multicolumn{4}{|c|}{Perturbed LMC \#2}\\ 
			\hline
			\multicolumn{4}{|c|}{$\epsilon_2 = 0.00001$}\\
			\hline
			apr		&  9235	& 23087 &	1\\
			\hline
			\multicolumn{4}{|c|}{$\epsilon_2 = 0.0001$}\\
			\hline
			apr		&  9086 &	22358	& 2\\
			\hline			
			\multicolumn{4}{|c|}{$\epsilon_2 =  0.001$}\\
			\hline
			apr		 &  5249 &	9746 &	5 \\
			\hline			
			\multicolumn{4}{|c|}{$\epsilon_2 \in \{0.01,  0.1\}$}\\
			\hline
			\rowcolor{yellow}
			apr		 &  50	&62&	1\\
			\hline
		\end{tabular}	
	}{}
\end{floatrow}
\end{figure}

\begin{figure}[h!]
	\begin{floatrow}
		\capbtabbox[.49\textwidth]{
			\noindent\begin{tabular}{|c|c|c|c|}
				\hline 
				\multicolumn{4}{|c|}{EGL5-2}\\
				\hline
				\multirow{1}{*}{\shortstack[l]{$\epsilon = 0.0001$}}& 				
				\# states&	\# trans&	\# iter\\
				\hline 						
				$\M$ \& $\Hyp$	&		 	33790 &	34813				&\\
				$\M/_{\sim_{\M}}$	&		 		472	 & 507			&\\		
				$\Hyp/_{\sim_{\Hyp}}$      & 		551	& 681            & \\
				\hline
				\multicolumn{4}{|c|}{Perturbed LMC \#3}\\ 
				\hline
				\multicolumn{4}{|c|}{$\epsilon_2 = 0.00001$}\\
				\hline
				apr      & 535	& 647 & 	1  \\
				\hline
				\multicolumn{4}{|c|}{$\epsilon_2 = 0.0001$}\\
				\hline
				apr		  &485 &	536&	2 \\
				\hline					
				\multicolumn{4}{|c|}{$\epsilon_2 = \{0.001, 0.01, 0.1\}$}\\
				\hline
				\rowcolor{yellow}
				apr		   & 472 &	507	& 1   \\
				\rowcolor{white}
				\hline
				\multicolumn{4}{c}{}\\
				\multicolumn{4}{c}{}\\
			\end{tabular}
		}{}
		\capbtabbox[.49\textwidth]{
			\begin{tabular}{|c|c|c|c|}
				\hline
				\multicolumn{4}{|c|}{EGL5-4}\\
				\hline
				\multirow{1}{*}{\shortstack[l]{$\epsilon = 0.001$}}& 						
				\# states&	\# trans&	\# iter\\
				\hline 
				$\M$ \& $\Hyp$ &	74750 &	75773    & \\
				$\M/_{\sim_{\M}}$	&		 		992	& 1027	  		&\\								
				$\Hyp/_{\sim_{\Hyp}}$     & 	1071 &	1201     & \\
				\hline
				\multicolumn{4}{|c|}{Perturbed LMC \#4}\\ 
				\hline
				\multicolumn{4}{|c|}{$\epsilon_2 = 0.00001$}\\
				\hline
				apr		&  1063	& 1184 &	1\\
				\hline
				\multicolumn{4}{|c|}{$\epsilon_2 = 0.0001$}\\
				\hline
				apr		&  1053	& 1162 &	1\\
				\hline			
				\multicolumn{4}{|c|}{$\epsilon_2 =  0.001$}\\
				\hline
				\rowcolor{yellow}
				apr		 &  992	& 1027	& 2 \\
				\rowcolor{white}
				\hline			
				\multicolumn{4}{|c|}{$\epsilon_2 \in \{0.01,  0.1\}$}\\
				\hline
				\rowcolor{yellow}
				apr		 &  992	& 1027 &	1\\
				\hline
			\end{tabular}	
		}{}
	\end{floatrow}
\end{figure}

The table above (left) shows the results of running the minimisation algorithms using approximate partition refinement on the LMC that models the contract signing protocol by Even, Goldreich and Lempel \cite{EvenGL85} with $N=5$ and $L=2$. The perturbed LMCs are obtained by perturbing the probabilities with $\epsilon = 0.0001$. The table above (right) shows the results of running the minimisation algorithm using approximate partition refinement on the LMC that models the contract signing protocol by Even, Goldreich and Lempel \cite{EvenGL85} with $N=5$ and $L=4$. The perturbed LMCs are obtained by perturbing the probabilities with $\epsilon = 0.001$. For both models, the minimisation algorithms successfully recover the quotient of the original model when $\epsilon_2 \gr \epsilon$ ($\epsilon_2 \ge \epsilon$ for the second model).

\end{document}